\theoremstyle{plain}
\newtheorem{theorem}{Theorem}[section]
\newtheorem*{theorem*}{Theorem}
\newtheorem{proposition}[theorem]{Proposition}
\newtheorem*{proposition*}{Proposition}
\newtheorem{lemma}[theorem]{Lemma}
\newtheorem*{lemma*}{Lemma}
\newtheorem*{corollary*}{Corollary}
\newtheorem*{claim*}{Claim}
\theoremstyle{definition}
\newtheorem*{definition*}{Definition}
\theoremstyle{remark}
\newtheorem*{observation*}{Observation}
\newtheorem{example}[theorem]{Example}
\newtheorem*{example*}{Example}
\DeclareMathAlphabet{\pazocal}{OMS}{zplm}{m}{n}
\algrenewcommand\algorithmicindent{1.3em}%
\algnewcommand\algorithmicforeach{\textbf{for each}}
\renewcommand{\paragraph}[1]{\smallskip \noindent \textbf{#1.}}
\newcommand{\changed}[1]{#1}
\newcommand{\nat}{\mathbb{N}}
\newcommand{\cL}{\pazocal{L}}
\newcommand{\bbN}{\mathbb{N}}
\newcommand{\cO}{\pazocal{O}}
\newcommand{\sem}[1]{{\lsem{}{#1}\rsem}}
\newcommand{\semaux}[1]{\lceil\hspace{-0.8mm}\lceil{#1}\rfloor\hspace{-0.8mm}\rfloor}
\newcommand{\dom}{\operatorname{dom}}
\newcommand{\Stream}{\pazocal{S}}
\newcommand{\rt}[1]{\texttt{root}(#1)}
\newcommand{\depth}[2]{\texttt{depth}_{#1}(#2)}
\newcommand{\parent}[2]{\texttt{parent}_{#1}(#2)}
\newcommand{\children}[2]{\texttt{children}_{#1}(#2)}
\newcommand{\desc}[2]{\texttt{desc}_{#1}(#2)}
\newcommand{\ancst}[2]{\texttt{ancst}_{#1}(#2)}
\newcommand{\leaves}[2]{\texttt{leaves}_{#1}(#2)}
\newcommand{\Hom}{\operatorname{Hom}}
\newcommand{\atoms}[1]{\textit{atoms}(#1)}
\newcommand{\exatom}[1]{R_{#1}(\bar{x}_{#1})}
\newcommand{\mult}[2]{\operatorname{mult}_{#1}(#2)}
\newcommand{\T}{\textbf{T}}
\newcommand{\Data}{\textbf{D}}
\newcommand{\Var}{\textbf{X}}
\newcommand{\arity}{\operatorname{arity}}
\newcommand{\tuples}{\operatorname{Tuples}}
\newcommand{\Schema}{\sigma}
\newcommand{\bleft}{\{\!\!\{}
\newcommand{\bright}{\}\!\!\}}
\newcommand{\prefix}{\preceq_p}
\newcommand{\Sigmastar}{\Sigma^\ast}
\newcommand{\chain}{CCEA\xspace}
\newcommand{\cA}{\pazocal{A}}
\newcommand{\cC}{\pazocal{C}}
\newcommand{\anamelc}{parallelized finite automata\xspace}
\newcommand{\aacro}{PFA\xspace}
\newcommand{\anamecea}{Parallelized-CEA\xspace}
\newcommand{\anamelong}{Parallelized Complex Event Automata\xspace}
\newcommand{\acrocea}{PCEA\xspace}
\newcommand{\cP}{\mathcal{P}}
\newcommand{\un}{\textbf{U}}
\newcommand{\CHCEA}{\pazocal{C}}
\newcommand{\ACH}{\pazocal{P}}
\newcommand{\bin}{\textbf{B}}
\newcommand{\uncq}{\textbf{U}_{\operatorname{lin}}}
\newcommand{\bincq}{\textbf{B}_{\operatorname{eq}}}
\newcommand{\leftbinfunc}[1]{\cev{#1}}
\newcommand{\rightbinfunc}[1]{\vec{#1}}
\newcommand{\binfunc}{\mathcal{B}}
\newcommand{\outtime}{\mathsf{time}}
\newcommand{\outdelay}{\mathsf{delay}}
\newcommand{\yield}[1]{\texttt{yield}[#1]}
\newcommand{\window}{w}
\newcommand{\salgo}{\mathcal{E}}
\newcommand{\DS}{\mathsf{DS}}
\newcommand{\DSw}{\mathsf{DS}_w}
\newcommand{\dsnodes}{\operatorname{Nodes}(\mathsf{DS})}
\newcommand{\dsnodesw}{\operatorname{Nodes}(\mathsf{DS}_w)}
\newcommand{\n}{\mathsf{n}}
\newcommand{\dspos}{i}
\newcommand{\dslabels}{L}
\newcommand{\dsprod}{\operatorname{prod}}
\newcommand{\dsleft}{\operatorname{uleft}}
\newcommand{\dsright}{\operatorname{uright}}
\newcommand{\dsnull}{\bot}
\newcommand{\dsmaxstart}{\operatorname{max-start}}
\newcommand{\dsextend}{\mathtt{extend}}
\newcommand{\dsunion}{\mathtt{union}}
\newcommand{\dssem}[1]{\sem{#1}}
\newcommand{\dssemprod}[1]{\sem{#1}_{\dsprod}}
\newcommand{\valop}{\oplus}
\newcommand{\bigvalop}{\bigoplus}
\newcommand{\evalprob}{\textsl{EvalPCEA}}
\newcommand{\htablesym}{\mathsf{H}}
\newcommand{\htable}[3]{\htablesym[#1,#2,#3]}
\newcommand{\nset}{\mathsf{N}}
\newcommand{\nsetq}[1]{\nset_{#1}}
\newcommand{\ipos}{i}
\newcommand{\blank}{\hspace{0cm}}
\newcommand{\cev}[1]{\reflectbox{\ensuremath{\vec{\reflectbox{\ensuremath{#1}}}}}}
\newcommand{\qtree}{\tau_Q}
\newcommand{\cqtree}{\tau_Q^c}
\newcommand{\SJ}[1]{\operatorname{SJ}_{Q}}
\newcommand{\VSJ}[1]{\operatorname{xSJ}_{Q}}
\newcommand{\SchemaEX}{\Schema_{0}}
\newcommand{\StreamEX}{\Stream_{0}}
\newcommand{\relindex}[3]{\underbrace{#1(#3)}_{\texttt{\scriptsize{#2}}}}
\newcommand{\CHCEAEX}{\CHCEA_{0}}
\newcommand{\PFAEX}{\cP_{0}}
\newcommand{\PCEAEX}{\ACH_{0}}
\newcommand{\DBEX}{D_{0}}
\newcommand{\QEXZERO}{Q_{0}}
\newcommand{\QEXONE}{Q_{1}}
\newcommand{\qt}{q-tree\xspace}
\newcommand{\cqt}{compact q-tree\xspace}
\title{Complex event recognition meets hierarchical conjunctive queries}
\author[1]{Dante Pinto}
\author[1]{Cristian Riveros}
\affil[1]{Pontificia Universidad Católica de Chile, \texttt{drpinto1@uc.cl}, \texttt{cristian.riveros@uc.cl}}
\date{} 
\begin{document}
	
	\maketitle
	
\begin{abstract}

Hierarchical conjunctive queries (HCQ) are a subclass of conjunctive queries (CQ) with robust algorithmic properties. Among others, Berkholz, Keppeler, and Schweikardt have shown that HCQ is the subclass of CQ (without projection) that admits dynamic query evaluation with constant update time and constant delay enumeration. On a different but related setting stands Complex Event Recognition (CER), a prominent technology for evaluating sequence patterns over streams. Since one can interpret a data stream as an unbounded sequence of inserts in dynamic query evaluation, it is natural to ask to which extent CER can take advantage of HCQ to find a robust class of queries that can be evaluated efficiently.

In this paper, we search to combine HCQ with sequence patterns to find a class of CER queries that can get the best of both worlds. To reach this goal, we propose a class of complex event automata model called Parallelized Complex Event Automata (PCEA) for evaluating CER queries with correlation (i.e., joins) over streams. This model allows us to express sequence patterns and compare values among tuples, but it also allows us to express conjunctions by incorporating a novel form of non-determinism that we call parallelization. 
We show that for every HCQ (under bag semantics), we can construct an equivalent PCEA. Further, we show that HCQ is the biggest class of acyclic CQ that this automata model can define. Then, PCEA stands as a sweet spot that precisely \changed{expresses} HCQ (i.e., among acyclic CQ) and extends them with sequence patterns. Finally, we show that PCEA also inherits the good algorithmic properties of HCQ by presenting a streaming evaluation algorithm under sliding windows with logarithmic update time and output-linear delay for the class of PCEA with equality predicates.

\end{abstract} 	
	
	\section{Introduction}\label{sec:introduction}

\emph{Hierarchical Conjunctive Queries}~\cite{DalviS07a} (HCQ) are a subclass of Conjunctive Queries (CQ) with good algorithmic properties for dynamic query evaluation~\cite{chirkova2012materialized,IdrisUV17}. In this scenario, users want to continuously evaluate a CQ over a database that receives insertion, updates, or deletes of tuples, and to efficiently retrieve the output after each modification. A landmark result by Berkholz, Keppeler, and Schweikardt~\cite{CQUpdates} shows that HCQ are the subfragment among CQ for dynamic query evaluation. Specifically, they show one can evaluate every HCQ with constant update time and constant-delay enumeration. Furthermore, they show that HCQ are the only class of full CQ (i.e., CQ without projection) with such guarantees, namely, under fined-grained complexity assumptions, a full CQ can be evaluated with constant update time and constant delay enumeration if, and only if, the query is hierarchical. Therefore, HCQ stand as the fragment for efficient evaluation under a dynamic scenario (see also~\cite{IdrisUV17}).

Data stream processing is another dynamic scenario where we want to evaluate queries continuously but now over an unbounded sequence of tuples (i.e., a data stream). \emph{Complex Event Recognition} (CER) is one such technology for processing information flow~\cite{GiatrakosAADG20, cugola2012processing}. CER systems read high-velocity streams of data, called events, and evaluate expressive patterns for detecting complex events, a subset of relevant events that witness a critical case for a user. A singular aspect of CER compared to other frameworks is that the order of the stream's data matters, reflecting the temporal order of events in reality (see~\cite{UgarteV18}). For this reason, sequencing operators are first citizens on CER query languages, which one combines with other operators, like filtering, disjunction, and correlation (i.e., joins), among others~\cite{ArtikisMUVW17}. 

Similar to dynamic query evaluation, this work aims to find a class of CER query languages with efficient streaming query evaluation. Our strategy to pursue this goal is simple but effective: we use HCQ as a starting point to guide our search for CER query languages with good algorithmic properties. Since one can interpret a data stream as an unbounded sequence of inserts in dynamic query evaluation, we want to extend HCQ with sequencing while maintaining efficient evaluation. We plan this strategy from an algorithmic point of view. Instead of studying which CER query language fragments have such properties, we look for automata models that can express HCQ. By finding such a model, we can later design our CER query language to express these queries~\cite{GrezRUV21}.

With this goal and strategy in mind, we start from the proposal of Chain Complex Event Automata (CCEA), an automata model for CER expressing sequencing queries with correlation, but that cannot express simple HCQ~\cite{grez-chain}. We extend this model with a new sort of non-deterministic power that we call \emph{parallelization}. This feature allows us to run several parallel executions that start independently and to gather them together when reading new data items. We define the class of \emph{Parallelized Complex Event Automata} (\acrocea), the extension of CCEA with parallelization. As an extension, PCEA can express patterns with sequencing, disjunction, iteration, and correlation but also allows conjunction. In particular, we can show that PCEA can express an acyclic CQ $Q$ if, and only if, $Q$ is hierarchical. Then, PCEA is a sweet spot that precisely \changed{expresses} HCQ (i.e., among acyclic CQ) and extends them with sequencing and other operations. Moreover, we show that PCEA inherits the good algorithmic properties of HCQ by presenting a streaming evaluation algorithm under sliding windows, reaching our desired goal. 

\paragraph{Contributions} The technical contributions and outline of the paper are the following.
 
 In Section~\ref{sec:preliminaries}, we provide some basic definitions plus recalling the definition of CCEA. 

 In Section~\ref{sec:pcea}, we introduce the concept of parallelization for standard non-deterministic NFA, called PFA, and study their properties. We show that PFA can be determinized in exponential time (similar to NFA) (Proposition~\ref{prop:pfa-are-regular}). We then apply this notion to CER and define the class of PCEA, showing that it is strictly more expressive than CCEA (Proposition~\ref{prop:pcea-expressive}). 

 Section~\ref{sec:hierarchical} compares PCEA with HCQ under bag semantics. Given that PCEA runs over streams and HCQ over relational databases, we must revisit the semantics of HCQ and formalize in which sense an HCQ and a PCEA define the same query. We show that under such comparison, every HCQ $Q$ under bag semantics can be expressed by a PCEA with equality predicates of exponential size in $|Q|$ and of \changed{quadratic} size if $Q$ does not have self joins (Theorem~\ref{theo:hierarchical-if}). Furthermore, if $Q$ is acyclic but not hierarchical, then $Q$ cannot be defined by any PCEA (Theorem~\ref{theo:hierarchical-onlyif}).  

 In Section~\ref{sec:algorithm}, we study the evaluation of PCEA in a streaming scenario. Specifically, we present a streaming evaluation algorithm under a sliding window with logarithmic update time and output-linear delay for the class of unambiguous PCEA with equality predicates~(Theorem~\ref{theo:algorithm}).

\paragraph{Related work} Dynamic query evaluation of HCQ and acyclic CQ has been studied in~\cite{CQUpdates,IdrisUV17,WangHDY23,KaraNOZ20}. This research line did not study HCQ or acyclic CQ in the presence of order predicates. \cite{WangY22,TziavelisGR21} studied CQ under comparisons (i.e., $\theta$-joins) but in a static setting (i.e., no updates). The closest work is~\cite{IdrisUVVL20}, which studied dynamic query evaluation of CQ with comparisons; however, this work did not study well-behaved classes of HCQ with comparisons, and, further, their algorithms have update time linear in the data.

Complex event recognition and, more generally, data stream processing have studied the evaluation of joins over streams~(see, e.g., \cite{xie2007survey, wu2006high, lin2015scalable}). To the best of our knowledge, no work in this research line optimizes queries focused on HCQ or provides guarantees regarding update time or enumeration delay in this setting. We base our work on~\cite{grez-chain}, which we will discuss extensively.%

	\section{Preliminaries}\label{sec:preliminaries}

\paragraph{Strings and NFA} A \emph{string} is a sequence of elements $\bar{s} = a_0 \ldots a_{n-1}$. For presentation purposes, we make no distinction between a \emph{sequence} or a string and, thus, we also write $\bar{s} = a_0, \ldots, a_{n-1}$ for denoting a string. We will denote strings using a bar and its $i$-th element by $\bar{s}[i] = a_i$. We use $|\bar{s}| = n$ for the length of $\bar{s}$ and $\{\bar{s}\} = \{a_0, \ldots, a_{n-1}\}$ to consider $\bar{s}$ as a set. Given two strings $\bar{s}$ and $\bar{s}'$, we write $\bar{s}\bar{s}'$ for the \emph{concatenation} of $\bar{s}$ followed by $\bar{s}'$.  Further, we say that $\bar{s}'$ is a \emph{prefix} of $\bar{s}$, written as $\bar{s'} \prefix \bar{s}$, if $|\bar{s}'| \leq |\bar{s}|$ and $\bar{s}'[i] = \bar{s}[i]$ for all $i < |\bar{s}'|$. Given a non-empty set $\Sigma$ we denote by $\Sigma^*$ the set of all strings from elements in~$\Sigma$, where $\epsilon \in \Sigma^*$ denotes the $0$-length string. For a function $f: \Sigma \rightarrow \Omega$ and $\bar{s} \in \Sigma^*$, we write $f(\bar{s}) = f(a_0) \ldots f(a_{n-1})$ to denote the point-wise application of $f$ over $\bar{s}$.  

A \emph{Non-deterministic Finite Automaton} (NFA) is a tuple $\cA = (Q, \Sigma, \Delta, I, F)$ such that $Q$ is a finite set of states, $\Sigma$ is a finite alphabet, $\Delta \subseteq Q \times \Sigma \times Q$ is the transition relation, and $I$ and $F$ are the set of initial and final states, respectively. A \emph{run} of $\cA$ over a string $\bar{s} = a_0 \ldots a_{n-1} \in \Sigma^*$ is a non-empty sequence $p_0 \ldots p_{n}$ such that $p_0 \in I$, and $(p_i, a_i, p_{i+1}) \in \Delta$ for every $i < n$. We say that $\cA$ \emph{accepts} a string $\bar{s} \in \Sigma^*$ iff there exists such a run of $\cA$ over $\bar{s}$ such that $p_{n} \in F$. We define the language $\cL(\cA)\subseteq \Sigma^*$ of all strings accepted by $\cA$. Finally we say that $\cA$ is a \changed{\emph{Deterministic Finite Automaton}} (DFA) iff $\Delta$ is given as a partial function $\Delta: Q \times \Sigma \rightarrow Q$ and $|I| = 1$.

\paragraph{Schemas, tuples, and streams} Fix a set $\Data$ of \emph{data values}. A \emph{relational schema} $\Schema$ (or just schema) is a pair $(\T, \arity)$ where $\T$ are the \emph{relation names} and $\arity: \T \rightarrow \bbN$ maps each name to a number, that is, its \emph{arity}. 
An \emph{$R$-tuple} of $\Schema$ (or just a tuple) is an object $R(a_0, \ldots, a_{k-1})$ such that  $R \in \T$, each $a_i \in \Data$, and $k = \arity(R)$. We will write $R(\bar{a})$ to denote a tuple with values $\bar{a}$.
We denote by $\tuples[\Schema]$ the set of all $R$-tuples of all $R \in \T$. We define the size of a tuple $R(\bar{a})$ as $|R(\bar{a})| = \sum_{i=0}^{k-1} |\bar{a}[i]|$ with $k = \arity(R)$ where $|\bar{a}[i]|$ is the size of the data value $\bar{a}[i] \in \Data$, which depends on the domain. %

A \emph{stream} $\Stream$ over $\Schema$ is an infinite sequence of tuples $\Stream = t_0 t_1 t_2 \ldots$ such that $t_i \in \tuples[\Schema]$ for every $i \geq 0$. For a running example, consider the schema $\SchemaEX$  with relation names $\T = \{R, S, T\}$, $\arity(R) = \arity(S) = 2$ and $\arity(T) = 1$. A stream $\StreamEX$ over $\SchemaEX$ could be the following:
\[
\StreamEX \ := \ 
\relindex{S}{0}{2,11} \
\relindex{T}{1}{2} \ 
\relindex{R}{2}{1,10} \ 
\relindex{S}{3}{2,11} \ 
\relindex{T}{4}{1} \ 
\relindex{R}{5}{2,11} \ 
\relindex{S}{6}{4,13}\ 
\relindex{T}{7}{1} \ \ldots
\]
where we add an index (i.e., the position) below each tuple (for simplification, we use $\Data = \bbN$).

\paragraph{Predicates} For a fix $k$, a \emph{$k$-predicate} $P$ is a subset of $\tuples[\Schema]^k$. Further, we say that $\bar{t} = (t_1, \ldots, t_k)$ \emph{satisfies} $P$ iff $\bar{t} \in P$. We say that $P$ is \emph{unary} if $k =1$ and \emph{binary} if $k=2$.  In the following, we denote any class of unary or binary predicates by $\un$ or $\bin$, respectively. 

Although we define our automata models for any class of unary and binary predicates, the following two predicate classes will be relevant for algorithmic purposes (see Section~\ref{sec:hierarchical} and~\ref{sec:algorithm}). Let $\sigma$ be a schema. We denote by $\uncq$ the class of all unary predicates $U$ \changed{such that}, for every $t \in \tuples[\Schema]$, one can decide in linear time over $|t|$ whether $t$ satisfies $U$ or not. \changed{In addition}, we denote by $\bincq$ the class of all equality predicates defined as follow: a binary predicate $B$ is an \emph{equality predicate} iff there exist partial functions $\leftbinfunc{B}$ and $\rightbinfunc{B}$ over $\tuples[\Schema]$ such that, for every $t_1,t_2 \in \tuples[\Schema]$, $(t_1, t_2) \in B$ iff $\leftbinfunc{B}(t_1)$ and $\rightbinfunc{B}(t_2)$ are defined and $\leftbinfunc{B}(t_1) = \rightbinfunc{B}(t_2)$. Further, we require that one can compute $\leftbinfunc{B}(t_1)$ and $\rightbinfunc{B}(t_2)$ in linear time over $|t_1|$ and $|t_2|$, respectively. 
For example, recall our schema $\SchemaEX$ and consider the binary predicate $(Tx, Sxy) = \{(T(a), S(a,b)) \mid a,b \in \Data\}$. Then by using the functions $\leftbinfunc{B}(T(a)) = a$ and $\rightbinfunc{B}(S(a,b)) = a$, one can check that $(Tx, Sxy)$ is an equality predicate. 

\changed{Note that $\bincq$ is a more general class of equality predicates compared with the ones used in~\cite{grez-chain}, that will serve in our automata models for comparing tuples by ``equality'' in different subsets of attributes.} We take here a more semantic presentation, where the equality comparison between tuples is directly given by the functions $\leftbinfunc{B}$ and $\rightbinfunc{B}$ and not symbolically by some formula. 

\paragraph{Chain complex event automata} A \emph{Chain Complex Event Automaton} (\chain)~\cite{grez-chain} is a tuple
$
\CHCEA \ = \ (Q, \un, \bin, \Omega, \Delta, I, F)
$ 
where $Q$ is a finite set of states, $\un$ is a set of unary predicates, $\bin$ is a set of binary predicates, $\Omega$ is a finite set of labels, $I: Q \rightarrow \un \times (2^\Omega \setminus \{\emptyset\})$ is a partial initial function, $F \subseteq Q$ is the set of final states, and $\Delta$ is a finite transition \changed{relation} of the form:
$
\Delta \subseteq Q \times \un \times \bin \times (2^\Omega \setminus \{\emptyset\}) \times Q.
$
Let $\Stream = t_0 t_1 \ldots$ be a stream. 
A \emph{configuration} of $\CHCEA$ over $\Stream$ is a tuple $(p, i, L) \in Q \times \nat \times (2^\Omega \setminus \{\emptyset\})$, representing that \changed{the automaton $\CHCEA$, is at state $p$ after having read and marked $t_i$ with the set of labels $L$.
For $\ell \in \Omega$, we say that $(p, i, L)$ \emph{marked} position $i$ with $\ell$ iff $\ell \in L$.}
Given a position $n\in \bbN$, we say that a configuration is \emph{accepting at position} $n$ iff it is of the form $(p, n, L)$ and~$p \in F$. 
Then a \emph{run} $\rho$ of $\CHCEA$ over $\Stream$ is a sequence of configurations:
\[
\rho \ := \ (p_0, i_0, L_0), (p_1, i_1, L_1), \ldots, (p_n, i_n, L_n) 
\]
such that $i_0 < i_1 < \ldots < i_n$, $I(p_0) = (U, L_0)$ is defined and $t_{i_0} \in U$, and there exists a transition $(p_{j-1}, U_j, B_j, L_j, p_j) \in \Delta$ such that $t_{i_j} \in U_j$ and $(t_{i_{j-1}}, t_{i_j}) \in B_j$ for every $j \in [1, n]$. Intuitively, a run of a \chain is a subsequence of the stream that can follow a path of transitions, where each transition \changed{checks} a local condition (i.e., the unary predicate $U_j$) and a join condition (i.e., the binary predicate $B_j$) with the previous tuple. For the first tuple, a \chain can only check a local condition (i.e., there is no previous tuple). 

Given a run $\rho$ like above, we define its \emph{valuation} $\nu_\rho: \Omega \rightarrow 2^{\bbN}$ such that $\nu_\rho(\ell)$ is the set consisting of all positions in $\rho$ marked by $\ell$, formally, $\nu_\rho(\ell) = \{ i_j \mid\, j \leq n \wedge \ell \in L_j\}$.
Further, given a position $n\in \bbN$, we say that $\rho$ is an \emph{accepting run at position} $n$ iff $(p_n, i_n, L_n)$ is an accepting configuration at $n$.
Then the \emph{output} of $\CHCEA$ over $\Stream$ at position $n$ is defined~as:
\[
\sem{\CHCEA}_n(\Stream) \ = \  \{\nu_\rho \mid \text{$\rho$ is an accepting run at position $n$ of $\CHCEA$ over $S$}\}.
\]
\begin{example}\label{ex:chain}
	Below, we show an example of a \chain over the schema $\SchemaEX$ with $\Omega = \{\bullet\}$:
	\begin{center}
	\begin{tikzpicture}[>=stealth, 
		semithick, 
		auto,
		initial text= {},
		initial distance= {4mm},
		accepting distance= {3mm},
		mystate/.style={state, inner sep=0pt, minimum size=5mm}]
		
		\node at (-2, 0) {$\CHCEAEX:$};

		\node (E) at (-1, 0) {};
		
		\node [mystate, initial left] (T) at (0, 0) {$q_0$};
		
		\node [mystate] (S) at (4, 0) {$q_1$};
		
		\node [mystate, accepting] (R) at (8, 0) {$q_2$};
		
		\draw[->] (E) edge node {$T \, / \, \bullet$} (T);
		\draw[->] (T) edge node {$S,(Tx,Sxy) \, / \,  \bullet$} (S);
		\draw[->] (S) edge node {$R,(Sxy,Rxy) \, / \,   \bullet$} (R);
	\end{tikzpicture}	
	\end{center}
	We use $T$ to denote the predicate $T = \{T(a) \mid a \in \Data\}$ and similar for $S$ and $R$. Further, we use $(Tx, Sxy)$ and $(Sxy,Rxy)$ to denote equality predicates as defined above. An accepting run of $\CHCEAEX$ over $\StreamEX$ is $\rho = (q_0, 1, \{\bullet\}), (q_1, 3, \{\bullet\}), (q_2, 5, \{\bullet\})$ which produces the valuation $\nu_\rho = \{\bullet \mapsto \{1,3,5\}\}$ that represents the subsequence $T(2), S(2,11), R(2,11)$ of $\StreamEX$.
	Intuitively, $\CHCEAEX$ defines all subsequences of the form $T(a), S(a,b), R(a,b)$ for every~$a, b \in \Data$.
\end{example}

Note that the definition of \chain above differs from~\cite{grez-chain} to fit our purpose better. Specifically, we use a set of labels $\Omega$ to annotate positions in the streams and define valuations in the same spirit as the model of \emph{annotated automata} used in \cite{AmarilliJMR22,MunozR23}. One can see this extension as a generalization to the model in~\cite{grez-chain}, where $|\Omega| = 1$. This extension will be helpful to enrich the outputs of our models for comparing them with hierarchical conjunctive queries with self-joins (see Section~\ref{sec:hierarchical}). 

\paragraph{Computational model} For our algorithms, we assume the computational model of Random Access Machines (RAM) with uniform cost measure, and addition as it basic operation~\cite{aho1974design, GradjeanJ22}. This RAM has read-only registers for the input, read-writes registers for the work, and write-only registers for the output. This computation model \changed{is a standard} assumption in the literature~\cite{CQUpdates,BucchiGQRV22}. 	
	\section{Parallelized complex event automata}\label{sec:pcea}

This section presents our automata model for specifying CER queries with conjunction called \anamelong (\acrocea), which strictly generalized \chain by adding a new feature called \emph{parallelization}. For the sake of presentation, we first formalize the notion of parallelization for NFA to extend the idea to \chain. Before this, we need the notation of labeled trees that will be useful for our definitions and proofs.

\paragraph{Labeled trees} As it is common in the area~\cite{Neven02}, we define (unordered) \emph{trees} as a finite set of strings $t \subseteq \bbN^\ast$ that satisfies two conditions: (1) $t$ contains the empty string, (i.e., $\varepsilon \in t$), and (2) $t$ is a \emph{prefix-closed set}, namely, if $a_1 ...a_{n} \in t$, then $a_1 ...a_{j} \in t$ for every $j < n$. We will refer to the string of $t$ as \emph{nodes}, and the \emph{root} of a tree, $\rt{t}$, will be the empty string $\varepsilon$.

Let $\bar{u}, \bar{v} \in t$ be nodes. The \emph{depth} of $\bar{u}$ will be given by its length $\depth{t}{\bar{u}} = |\bar{u}|$. We say that $\bar{u}$ is the \emph{parent} of $\bar{v}$ and write $\parent{t}{\bar{v}}=\bar{u}$ if $\bar{v} = \bar{u} \cdot n$ for some $n \in \bbN$. \changed{Likewise}, we say that $\bar{v}$ is a \emph{child} of $\bar{u}$ if $\bar{u}$ is the parent of $\bar{v}$ and define $\children{t}{\bar{u}} = \{\bar{v}\in t \mid \parent{t}{\bar{v}} = \bar{u}\}$.
Similarly, we define the \emph{descendants} of $\bar{u}$ as $\desc{t}{\bar{u}} = \{ \bar{v} \in t \mid \bar{u} \prefix \bar{v} \}$ and the \emph{ancestors} as $\ancst{t}{\bar{u}} = \{ \bar{v} \in t \mid \bar{v} \prefix \bar{u} \}$; note that $\bar{u} \in \desc{t}{\bar{u}}$ and $\bar{u} \in \ancst{t}{\bar{u}}$. A node $\bar{u}$ is a \textit{leaf} of $t$ if $\desc{t}{\bar{u}} = \{\bar{u}\}$, and an \emph{inner node} if it is not a leaf node. We define the \emph{set of leaves} of $\bar{u}$ as $\leaves{t}{\bar{u}} = \{\bar{v} \in \desc{t}{\bar{u}} \mid \bar{v} \text{ is a leaf node} \}$.

A \emph{labeled tree} $\tau$ is a function $\tau\colon t \rightarrow L$ where $t$ is a tree and $L$ is any finite set of labels. We use $\dom(\tau)$ to denote the underlying tree structure $t$ of $\tau$. Given that $\tau$ is a function, we can write $\tau(\bar{u})$ to denote the label of node $\bar{u} \in \dom(\tau)$. To simplify the notation, we extend all the definitions above for a tree $t$ to labeled tree $\tau$, changing $t$ by $\dom(\tau)$. For example, we write $\bar{u} \in \tau$ to refer to $\bar{u} \in \dom(\tau)$, or $\parent{\tau}{\bar{u}}$ to refer to $\parent{\dom(\tau)}{\bar{u}}$.  
Finally, we say that two labeled trees $\tau$ and $\tau'$ are \emph{isomorphic} if there exists a bijection $f\colon \dom(\tau) \rightarrow \dom(\tau')$ such that $\bar{u} \prefix \bar{v}$ iff $f(\bar{u}) \prefix f(\bar{v})$ and $\tau(\bar{u}) = \tau'(f(\bar{u}))$ for every $\bar{u}, \bar{v} \in \dom(\tau)$. We will usually say that $\tau$ and $\tau'$ are equal, meaning they are isomorphic. 

\paragraph{Parallelized finite automata} 
A \emph{Parallelized Finite Automaton} (\aacro) is a tuple $\cP = (Q, \Sigma, \Delta, I, F)$ where $Q$ is a finite set of states, $\Sigma$ is a finite alphabet, $I, F \subseteq Q$ are the sets of initial and accepting states, respectively, and $\Delta \subseteq 2^Q \times \Sigma \times Q$ is the transition relation.
\changed{We define the size of $\cP$ as $|\cP| = |Q| + \sum_{(P,a,q)} (|P|+1)$, namely, the number of states plus the size of encoding the transitions.}

A \emph{run tree} of a \aacro $\cP$ over a string $\bar{s}=a_1\ldots a_n \in \Sigmastar$ is a labeled tree \changed{$\tau: t \rightarrow \dom(\tau)$} such that $\depth{\tau}{\bar{u}} = n$ for every leaf $\bar{u} \in \tau$; \changed{in other words, every node of $\tau$ is labeled by a state of $\cP$} and all branches have the same length $n$. In addition, $\tau$ must satisfy the following two conditions: (1) every leaf node $\bar{u}$ of $t$ is  labeled by an initial state (i.e., $\tau(\bar{u}) \in I$) and (2) for every inner node $\bar{v}$ at depth $i$ (i.e., $\depth{\tau}{\bar{v}} = i$) there must be a transition $(P, a_{n-i}, q) \in \Delta$ such that $\tau(\bar{v}) = q$, $|\children{\tau}{\bar{v}}| = |P|$ and $P = \{\tau(\bar{u}) \mid \bar{u} \in \children{\tau}{\bar{v}}\}$, that is, 
\changed{children have different labels and $P$ is the set of labels in the children of $\bar{v}$.}
We say that $\tau$ is an accepting run of $\cP$ over $\bar{s}$ iff $\tau$ is a run of $\cP$ over $\bar{s}$ and $\tau(\varepsilon) \in F$ (recall that $\varepsilon = \rt{\tau}$).   
We say that $\cP$ accepts a string $\bar{s} \in \Sigmastar$ if there is an accepting run of $\cP$ over $\bar{s}$ and we define the language recognized by $\cP$, $\cL(\cP)$, as the set of strings that $\cP$ accepts.

\begin{example}\label{ex:pfa}
In Figure~\ref{fig:ex-pfa-pcea} (left), we show the example of a \aacro $\PFAEX$ over the alphabet $\Sigma = \{T,S,R\}$. Intuitively, the upper part (i.e., $p_0, p_1$) looks for a symbol $T$, the lower part (i.e., $p_2, p_3$) for a symbol $S$, and both runs join together in $p_4$ when they see a symbol $R$. Then, $\PFAEX$ defines all strings that contain symbols $T$ and $S$ (in any order) before a symbol $R$. 
\end{example}

One can see that \aacro is a generalization of an NFA. Indeed, NFA is a special case of an \aacro where each run tree $\tau$ is a line.  
Nevertheless, \aacro do not add expressive power to NFA, given that \aacro is another model for recognizing regular languages, as the next result shows.
\begin{proposition}\label{prop:pfa-are-regular}
	For every \aacro $\cP$ with $n$ states there exists a DFA $\cA$ with at most $2^n$ states such that $\cL(\cP) = \cL(\cA)$. In particular, all languages defined by \aacro are regular. 
\end{proposition}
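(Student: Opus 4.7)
The plan is to carry out a subset construction analogous to the classical determinization of NFA, but adapted to the tree-shaped runs of a \aacro. The DFA states will be subsets of $Q$, giving at most $2^{|Q|} = 2^n$ states immediately.

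First I would introduce, for each string $\bar{s} \in \Sigmastar$, the set $\operatorname{Reach}(\bar{s}) \subseteq Q$ of labels that can appear at the root of some run tree of $\cP$ over $\bar{s}$, and observe that $\bar{s} \in \cL(\cP)$ iff $\operatorname{Reach}(\bar{s}) \cap F \neq \emptyset$. The goal then becomes to compute $\operatorname{Reach}(\bar{s})$ inductively using a transition function $\delta$ on subsets of $Q$.

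The base case is immediate: a run tree over the empty string is a single node (root coinciding with leaf) labeled by some initial state, so $\operatorname{Reach}(\epsilon) = I$. For the inductive step over $\bar{s}a$, I would prove
\[
\operatorname{Reach}(\bar{s} a) \;=\; \{q \in Q \mid \exists\,(P, a, q) \in \Delta \text{ with } \emptyset \neq P \subseteq \operatorname{Reach}(\bar{s})\},
\]
and take this as the definition of $\delta(\operatorname{Reach}(\bar{s}), a)$. The forward inclusion reads off the transition at the root: each subtree rooted at a child of the root, after decreasing all depths by one, is itself a valid run tree of $\cP$ over $\bar{s}$, and by definition the set of labels of those children is exactly $P$; hence every $p \in P$ lies in $\operatorname{Reach}(\bar{s})$. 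The backward inclusion is a \emph{gluing} argument: for each $p \in P$, pick any run tree $\tau_p$ of $\cP$ over $\bar{s}$ whose root is labeled $p$, and make the trees $\{\tau_p\}_{p \in P}$ the children of a fresh root labeled $q$, using the transition $(P, a, q) \in \Delta$. The depth constraint and the labeling conditions are then straightforward to verify.

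The main subtlety to get right is precisely this gluing step together with the edge case of transitions $(P, a, q)$ with $P = \emptyset$: such transitions can never be applied at an inner node, because an inner node must have children leading down to leaves at depth $n$, so they must be filtered out of $\delta$; dually one must justify that every $p \in P$ can be witnessed \emph{independently} by its own run tree $\tau_p$, which is exactly what the inductive definition of $\operatorname{Reach}(\bar{s})$ provides. Once this is settled, the DFA $\cA = (2^Q, \Sigma, \delta, I, \{S \subseteq Q \mid S \cap F \neq \emptyset\})$ recognizes $\cL(\cP)$ with at most $2^n$ states, which proves both the size bound and the regularity of $\cL(\cP)$.
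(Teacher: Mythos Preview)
Your proposal is correct and takes essentially the same subset-construction approach as the paper: a DFA on $2^Q$ with initial state $I$, accepting states $\{S \subseteq Q \mid S \cap F \neq \emptyset\}$, and $\delta(S,a)$ collecting all $q$ with some $(P,a,q)\in\Delta$, $P\subseteq S$. You are in fact slightly more careful than the paper in explicitly excluding transitions with $P=\emptyset$, which indeed can never fire at an inner node of a run tree; the paper's definition of $\delta$ allows $P'=\emptyset$ and its backward-direction argument glosses over this edge case.
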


\changed{Intuitively, one could interpret a \aacro as an \emph{Alternating Finite Automaton} (AFA)~\cite{chandra1981alternation} that runs backwards over the string (however, they still process the string in a forward direction).} It was shown in~\cite[Theorem 5.2~and~5.3]{chandra1981alternation} that for every AFA that defines a language $L$ with $n$ states, there exists an equivalent DFA with $2^{2^n}$ states in the worst case that recognizes $L$. Nevertheless, they argued that the reverse language $L^R = \{a_1a_2 \ldots a_n \in \Sigmastar\mid a_n \ldots a_2 a_1 \in L\}$ can always be accepted by a DFA with at most $2^n$ states. Then, one can see Proposition~\ref{prop:pfa-are-regular} as a consequence of reversing an alternating automaton. Despite this connection, we use here \aacro as a proper automata model, which was not studied or used in~\cite{chandra1981alternation}. Another related proposal is the \emph{parallel finite automata} model presented in~\cite{StottsP94}. Indeed, one can consider \aacro as a restricted case of this model, although it was not studied in~\cite{StottsP94}. For this reason, we decided to name the \aacro model with the same acronym but a slightly different name as in~\cite{StottsP94}.

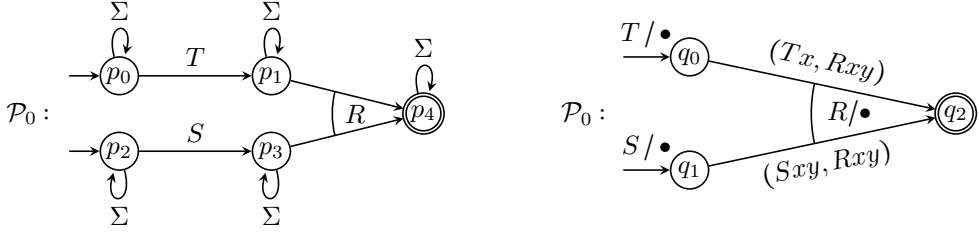
\begin{figure}[t]
	\centering
	\begin{tikzpicture}[>=stealth, 
		semithick, 
		auto,
		initial text= {},
		initial distance= {4mm},
		accepting distance= {3mm},
		mystate/.style={state, inner sep=0pt, minimum size=5mm}]
		
		\node at (-1.2,-0.5) {$\PFAEX:$};
		
		\node [mystate, initial left] (0) at (0, 0) {$p_0$};
		\node [mystate] (1) at (2, 0) {$p_1$};
		
		\node [mystate, initial left] (2) at (0, -1) {$p_2$};
		\node [mystate] (3) at (2, -1) {$p_3$};
		
		\node [mystate,accepting] (4) at (4, -0.5) {$p_4$};
		
		\draw[->] (0) edge[loop above, pos=0.5] node {$\Sigma$} (0);
		\draw[->] (1) edge[loop above, pos=0.5] node {$\Sigma$} (1);
		\draw[->] (2) edge[loop below, pos=0.5] node {$\Sigma$} (2);
		\draw[->] (3) edge[loop below, pos=0.5] node {$\Sigma$} (3);
		\draw[->] (4) edge[loop above, pos=0.5] node {$\Sigma$} (4);
		
		\draw[->] (0) edge node {$T$} (1);
		\draw[->] (2) edge node {$S$} (3);
		\draw[->] (1) edge (4);
		\draw[->] (3) edge (4);
		
		\pic [draw, -, angle radius=1.2cm] {angle = 1--4--3};
		\node at ($(4)+(-0.9,0)$) {$R$};		
		
		\begin{scope}[xshift=7.5cm]
			\node at (-1.4, -0.5) {$\PCEAEX:$};

			\node (ET) at (-1, 0.25) {};
			\node [mystate] (T) at (0, 0.25) {$q_0$};
			
			\node (ES) at (-1, -1.25) {};
			\node [mystate] (S) at (0, -1.25) {$q_1$};
			
			\node [mystate, accepting] (R) at (3.5, -0.5) {$q_2$};
			
			\draw[->] (ET) edge node {$T \, / \, \bullet$} (T);
			\draw[->] (ES) edge node {$S \, / \, \bullet$} (S);

			\draw[->] (T) edge node[sloped] {$(Tx,Rxy)$} (R);
			\draw[->] (S) edge node[sloped, swap] {$(Sxy,Rxy)$} (R);
			
			\pic [draw, -, angle radius=1.9cm] {angle = T--R--S};
			\node at ($(R)+(-1.4,0)$) {$R / \bullet$};
			
		\end{scope}	
	\end{tikzpicture}	
	\caption{On the left, an example of a \aacro and, on the right, an example of a \acrocea.}
	\label{fig:ex-pfa-pcea}
\end{figure}

\paragraph{Parallelized complex event automata} A \emph{Parallelized Complex Event Automaton} (\acrocea) is the extension of \chain with the idea of parallelization as in~\aacro. Specifically, a \acrocea is a tuple $\ACH = (Q, \un, \bin, \Omega, \Delta, F)$, where $Q$, $\un$, $\bin$, $\Omega$, and $F$ are the same as for \chain, and $\Delta$ is a finite transition relation of the form:
\[
\Delta \subseteq 2^Q \times \un \times \bin^Q \times (2^\Omega \setminus \{\emptyset\}) \times Q.
\]
where $\bin^Q$ are all partial functions $\binfunc: Q \rightarrow \bin$, that associate a state $q$ to a binary predicate~$\binfunc(q)$. We define the \emph{size of $\ACH$} as $|\ACH| = |Q| + \sum_{(P, U, \binfunc, L, q) \in \Delta} (|P|+|L|)$. Note that $\ACH$ does not define the initial function explicitly. As we will see, transitions of the form $(\emptyset, U, \binfunc, L, q)$ will play the role of the initial function on a run of $\ACH$.

Next, we extend the notion of a run from \chain to its parallelized version. Let $\Stream = t_0 t_1 \ldots$ be a stream.
A \emph{run tree} of $\ACH$ over $\Stream$ is now a labeled tree $\tau: t \rightarrow (Q \times \nat \times (2^\Omega \setminus \{\emptyset\}))$ 
where each node $\bar{u} \in \tau$ is labeled with a configuration $\tau(\bar{u}) = (q, i, L)$ such that, for every child $\bar{v} \in \children{\tau}{\bar{u}}$ with $\tau(\bar{v}) = (p, j, M)$, it holds that $j < i$. In other words, the positions of $\tau$-configurations increase towards the root of $\tau$, similar to the runs of a \chain.
In addition, $\bar{u}$ must satisfy the transition relation $\Delta$, that is, there must exist a transition $(P, U, \binfunc, L, q) \in \Delta$ such that (1) $t_i \in U$, (2) $|\children{\tau}{\bar{u}}| = |P|$ and $P = \{p \mid \exists \bar{v} \in \children{\tau}{\bar{u}}. \, \tau(\bar{v}) = (p, j, M)\}$,  and (3) for every $\bar{v} \in \children{\tau}{\bar{u}}$ with  $\tau(\bar{v}) = (p, j, M)$, $(t_j, t_i) \in \binfunc(p)$. Similar to \aacro, condition (2) forces that there exists a bijection between $P$ and the states at the children of $\bar{u}$. Instead, condition (3) forces that two consecutive configurations $(p, j, M)$ and $(q, i, L)$ must satisfy the binary predicate in $\binfunc(p)$ associated with~$p$. Notice that, if $\bar{u}$ is a leaf node in $\tau$, then it must hold that $P = \emptyset$ and condition (3) is trivially~satisfied. \changed{Also, note that we do not assume that all leaves are at the same depth.} 

Given a position $n\in \bbN$, we say that $\tau$ is an \emph{accepting run at position} $n$ iff the root configuration $\tau(\varepsilon)$ is accepting at position $n$. 
Further, we define the output of a run $\tau$ as the valuation $\nu_\tau: \Omega \rightarrow 2^{\bbN}$ such that $\nu_\tau(\ell) = \{ i \mid\, \exists \bar{u} \in \tau. \ \tau(\bar{u}) = (q, i, L) \wedge \ell \in L\}$ for every label $\ell \in \Omega$. Finally, \changed{the output of a \acrocea} $\ACH$ over $\Stream$ at the position $n$ is defined~as:
\[
\sem{\ACH}_n(\Stream) \ = \  \{\nu_\tau \mid \text{$\tau$ is an accepting run at position $n$ of $\ACH$ over $S$}\}.
\]
\begin{example}\label{ex:pcea}
	In Figure~\ref{fig:ex-pfa-pcea} (right), we show an example of a \acrocea $\PCEAEX$ over schema $\SchemaEX$ with $\Omega = \{\bullet\}$. We use the same notation as in  Example~\ref{ex:chain} to represent unary and equality predicates. If we run $\PCEAEX$ over $\StreamEX$, we have the following two run trees at position $5$:
	\begin{center}
	\begin{tikzpicture}[>=stealth, 
		semithick, 
		auto,
		initial text= {},
		initial distance= {4mm},
		accepting distance= {3mm},
		mystate/.style={state, inner sep=0pt, minimum size=5mm}]
		
		\begin{scope}[xshift=0cm]
			
			\node at (-1.5,0) {$\tau_0:$};
			
			\node (t2) at (0,0) {$(q_2,5,\bullet)$};
			
			\node (t0) at (-1,-1) {$(q_0,1,\bullet)$};
			\node (t1) at (1,-1) {$(q_1,3,\bullet)$};	
			
			\draw[-] (t0) edge (t2);
			\draw[-] (t1) edge (t2);

		\end{scope}
		
		\begin{scope}[xshift=6cm]
			
			\node at (-1.5,0) {$\tau_1:$};
					
			\node (t2) at (0,0) {$(q_2,5,\bullet)$};
			
			\node (t0) at (-1,-1) {$(q_0,1,\bullet)$};
			\node (t1) at (1,-1) {$(q_1,0,\bullet)$};	
			
			\draw[-] (t0) edge (t2);
			\draw[-] (t1) edge (t2);

		\end{scope}
	
	\end{tikzpicture}	
	\end{center}
	that produces the valuation $\nu_{\tau_0} = \{\bullet \mapsto \{1,3,5\}\}$ and $\nu_{\tau_1} = \{\bullet \mapsto \{0,1,5\}\}$ representing the subsequences $T(2), S(2,11), R(2,11)$ and $S(2,11), T(2), R(2,11)$ of $\StreamEX$, respectively. Note that the former is an output of $\CHCEAEX$ in Example~\ref{ex:chain}, but the latter is not. 
\end{example}

It is easy to see that every \chain is a \acrocea where every transition $(P, U, \binfunc, L, q) \in \Delta$ satisfies that $|P| \leq 1$. \changed{Additionally, the previous example} gives evidence that \acrocea is a strict generalization of \chain, namely, there exists no~\chain that can define $\PCEAEX$\changed{. Intuitively, since a \chain can only compare the current tuple to the last tuple, for a stream like $\Stream=R(a, b), T(a), S(a, b)$ it would be impossible to check conditions over the second attribute of tuples $R(a, b)$ and $S(a, b)$.}

\begin{proposition}\label{prop:pcea-expressive}
	\acrocea is strictly more expressive than \chain.
\end{proposition}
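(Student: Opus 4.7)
The plan is to establish the strict inclusion in two parts. For the easy direction, I would observe that any \chain is a \acrocea of a degenerate kind: convert each transition $(p, U, B, L, q) \in \Delta$ into $(\{p\}, U, \binfunc, L, q)$ with $\binfunc(p) = B$, and turn each initial pair $I(p) = (U, L)$ into a transition $(\emptyset, U, \binfunc_\emptyset, L, p)$ (where $\binfunc_\emptyset$ is the everywhere-undefined function). Under this translation, runs of the \chain are in bijection with run trees whose shape is a single branch, and valuations coincide, so the defined queries agree.

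For strictness, I would prove that the \acrocea $\PCEAEX$ from Example~\ref{ex:pcea} has no equivalent \chain. Consider the family $\Stream_b = S(0,b)\,T(0)\,R(0,b)$ for $b \in \bbN$, which $\PCEAEX$ accepts at position $2$ with the single valuation $\{\bullet \mapsto \{0,1,2\}\}$ (the two parallel branches meeting at $q_2$ force $S.y = R.y$), and the ``mismatched'' streams $\Stream_{b_1,b_2} = S(0,b_1)\,T(0)\,R(0,b_2)$ with $b_1 \neq b_2$, for which $\sem{\PCEAEX}_2 = \emptyset$. Suppose for contradiction that some \chain $\CHCEA$ reproduces the output of $\PCEAEX$ on every stream at every position.

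For each $b$, fix an accepting run of $\CHCEA$ on $\Stream_b$ and record the triple (initial pair, first transition, second transition) that it uses. Since $\CHCEA$ has only finitely many such triples, pigeonhole yields $b_1 \neq b_2$ using the \emph{same} triple $I(p_0) = (U_0, L_0)$, $(p_0, U_1, B_1, L_1, p_1)$, $(p_1, U_2, B_2, L_2, p_2)$ with $p_2 \in F$. I would then splice these runs into one on the mismatched stream $\Stream_{b_1,b_2}$: the initial check $S(0,b_1) \in U_0$ is inherited from the $\Stream_{b_1}$ run, the first transition's conditions $T(0) \in U_1$ and $(S(0,b_1), T(0)) \in B_1$ are likewise inherited from $\Stream_{b_1}$, and the second transition's conditions $R(0,b_2) \in U_2$ and $(T(0), R(0,b_2)) \in B_2$ are inherited from $\Stream_{b_2}$. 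The spliced run is accepting at position $2$ with valuation $\{\bullet \mapsto \{0,1,2\}\}$, contradicting $\sem{\CHCEA}_2(\Stream_{b_1,b_2}) = \emptyset$.

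The main obstacle is a subtlety about the predicate class: the equality predicates in $\bincq$ are quite flexible, since $\leftbinfunc{B}$ and $\rightbinfunc{B}$ may be arbitrary linear-time functions of their respective tuples. The argument goes through precisely because these functions are \emph{local} --- all information from $S(0,b)$ exposed to the next transition is captured by $\leftbinfunc{B_1}(S(0,b))$, which must equal the fixed value $\rightbinfunc{B_1}(T(0))$ and therefore cannot be used at the $R$ step to separate $b_1$ from $b_2$. Phrasing this locality as finiteness of transition choices is what lets the pigeonhole close the proof.
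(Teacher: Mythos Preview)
Your proof is correct and follows essentially the same strategy as the paper: exhibit a \acrocea that correlates positions $0$ and $2$ through a value absent from position $1$, then pigeonhole over the finitely many run shapes of a hypothetical equivalent \chain and splice two runs into an accepting run on a mismatched stream. The only cosmetic differences are that the paper uses the \acrocea from Figure~\ref{fig:ex-proof} (with stream order $R,T,S$) rather than $\PCEAEX$, and pigeonholes on the sequence of visited states rather than on transition triples; your choice of pigeonholing on transitions is arguably cleaner, since it avoids any implicit appeal to determinism. Your final paragraph about $\bincq$ is unnecessary---the argument works for arbitrary binary predicates because $\Delta$ is finite---but it does no harm.
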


\paragraph{Unambiguous \acrocea} We end this section by introducing a subclass of \acrocea relevant to our algorithmic results. Let $\ACH$ be a \acrocea and $\tau$ a run of $\ACH$ over some stream. We say that $\tau$ is \emph{simple} iff for every two different nodes $\bar{u}, \bar{u}' \in \tau$ with $\tau(\bar{u}) = (q,i,L)$ and $\tau(\bar{u}') = (q', i', L')$, if $i = i'$, then $L \cap L' = \emptyset$. In other words, $\tau$ is simple if all positions of the valuation $\nu_\tau$ are uniquely represented in~$\tau$.
We say that $\ACH$ is \emph{unambiguous} if (1) every accepting run of $\ACH$ is simple and (2) \changed{for every stream $\Stream$ and accepting run $\tau'$ of $\ACH$ over $\Stream$ with valuation $\nu_{\tau}$, there is no other run $\tau$ of $\ACH$  with valuation $\nu_{\tau'}$ such that $\nu_\tau = \nu_{\tau'}$.} For example, the reader can check that $\PCEAEX$ is unambiguous.

Condition (2) of unambiguous \acrocea ensures that each output is witnessed by exactly one run. This condition is common in MSO enumeration~\cite{AmarilliBJM17,MunozR22} for a one-to-one correspondence between outputs and runs. \changed{Condition (1)} forces a correspondence between the size of the run and the size of the output it represents. As we will see, both conditions will be helpful for our evaluation algorithm, 
and satisfied by our translation of hierarchical conjunctive queries into PCEA in the next section.

	\section{Representing hierarchical conjunctive queries}\label{sec:hierarchical}

This section studies the connection between \acrocea and hierarchical conjunctive queries~(HCQ) over streams. For this purpose, we must first define the semantics of HCQ over streams and how to relate their expressiveness with \acrocea. We connect them by using a \emph{bag semantics} of~CQ. We start by introducing bags that will be useful throughout this section.

\paragraph{Bags} A bag (also called a multiset) is usually defined in the literature as a function that maps each element to its multiplicity (i.e., the number of times it appears). In this work, we use a different but equivalent representation of a bag where each element has its \emph{own identity}. This representation will be helpful in our context to deal with duplicates in the stream and define the semantics of hierarchical CQ in the case of self joins.

We define a \emph{bag} (with own identity) $B$ as a surjective function $B: I \rightarrow U$ where $I$ is a finite set of identifiers (i.e., the identity of each element) and $U$ is the underlying set of the bag. Given any bag $B$, we refer to these components as $I(B)$ and $U(B)$, respectively. For example, a bag $B = \bleft a, a, b \bright$ (where $a$ is repeated twice) can be represented with a surjective function $B_0 =\{0 \mapsto a, 1 \mapsto a, 2 \mapsto b\}$ where $I(B_0) = \{0, 1, 2\}$ and $U(B_0) = \{a,b\}$. In general, we will use the standard notation for bags $\bleft a_0, \ldots, a_{n-1} \bright$ to denote the bag $B$ whose identifiers are $I(B) = \{0, \ldots, n-1\}$ and $B(i) = a_i$ for each $i \in I(B)$. Note that if $B: I \rightarrow U$ is injective, then $B$ encodes a set (i.e., no repetitions). \changed{We write $a \in B$ if $B(i) = a$} for some $i \in I(B)$ and define the empty bag $\emptyset$ such that $I(\emptyset) = \emptyset$ and $U(\emptyset) = \emptyset$.

For a bag $B$ and an element $a$, we define the \emph{multiplicity} of $a$ in $B$ as $\mult{B}{a} = |\{i \mid B(i) = a\}|$. Then, we say that a bag $B'$ is \emph{contained} in $B$, denoted as $B' \subseteq B$, iff $\mult{B'}{a} \leq \mult{B}{a}$ for every $a$. We also say that two bags $B'$ and $B$ are \emph{equal}, and write $B = B'$, if $B' \subseteq B$ and $B \subseteq B'$. Note that two bags can be equal although the set of identifiers can be different (i.e., they are equal up to a renaming of the identifiers). Given a set $A$, we say that $B$ \emph{is a bag from elements of} $A$ (or just a bag of $A$) if $U(B) \subseteq A$.

\paragraph{Relational databases} Recall that $\Data$ is our set of data values and let $\Schema = (\T, \arity)$ be a schema. A \emph{relational database} $D$ (with duplicates) over $\Schema$ is a bag of $\tuples[\Schema]$. Given a relation name $R \in \T$, we write $R^D$ as the bag of $D$ containing only the $R$-tuples of $D$, formally, $I(R^D) = \{i \in I(D) \mid D(i) = R(\bar{a}) \text{ for some $\bar{a}$} \}$ and $R^D(i) = D(i)$ for every $i \in I(R^D)$. For example, consider again the schema $\SchemaEX$. Then a database $\DBEX$ over $\SchemaEX$ is the bag:
\[
\DBEX \ :=  \ \bleft \, S(2,11), T(2), R(1,10), S(2,11), T(1), R(2,11)\, \bright.
\]
Here, one can check that $T^{\DBEX} = \bleft T(2),  T(1) \bright$ and $S^{\DBEX} = \bleft S(2,11), S(2,11)\bright$.

\paragraph{Conjunctive queries} Fix a schema $\Schema = (\T, \arity)$ and a set of variables $\Var$ disjoint from $\Data$ (i.e., $\Var \cap \Data = \emptyset$). A \emph{Conjunctive Query} (CQ) over relational schema $\Schema$ is a syntactic structure of the form:
\[
	Q(\bar{x}) \ \leftarrow \ R_0(\bar{x}_0), \ldots, R_{m-1}(\bar{x}_{m-1})  \tag{\dag} \label{eq:cq}
\]
such that $Q$ is a relational name not in $\T$, $R_i \in \T$, $\bar{x}_i$ is a sequence of variables in $\Var$ and data values in $\Data$, and $|\bar{x}_i| = \arity(R_i)$ for every $i < m$. Further, $\bar{x}$ is a sequence of variables in $\bar{x}_0, \ldots, \bar{x}_{m-1}$. 
We will denote a CQ like (\ref{eq:cq}) by $Q$, where $Q(\bar{x})$ and $R_0(\bar{x}_0), \ldots, R_{m-1}(\bar{x}_{m-1})$ are called the \emph{head} and the \emph{body} of $Q$, respectively. 
Furthermore, we call each $R_i(\bar{x}_i)$ an \emph{atom} of $Q$. For example, the following are two conjunctive queries $Q_0$ and $Q_1$ over the schema $\Schema_0$:
\[
\QEXZERO(x,y) \leftarrow  T(x), \, S(x,y) , \, R(x,y) \ \ \ \ \ \ \ \ \ \ \ \ \QEXONE(x,y) \leftarrow T(x), \, R(x, y),\,  S(2,y), \, \,  T(x)
\]
Note that a query can repeat atoms. For this reason, we will regularly consider $Q$ as a bag of atoms, where $I(Q)$ are the positions of $Q$ and $U(Q)$ is the set of distinct atoms.
For instance, we can consider $\QEXONE$ above as a bag of atoms, where $I(Q_1) = \{0, 1, 2, 3\}$ (i.e., the position of the atoms) and $\QEXONE(0) = T(x)$, $\QEXONE(1) = R(x, y)$, $\QEXONE(2) = S(2, y)$, $\QEXONE(3) = T(x)$.
We say that a CQ $Q$ has \emph{self joins} if there are two atoms with the same relation name. We can see in the previous example that $\QEXONE$ has self  joins, while $\QEXZERO$ does not.

\paragraph{Homomorphisms and CQ bag semantics} Let $Q$ be a CQ, and $D$ be a database over the same schema $\Schema$. A \emph{homomorphism} is any function $h: \Var \cup \Data \rightarrow \Data$ such that $h(a) = a$ for every $a \in \Data$. 
We extend $h$ as a function from atoms to tuples such that $h(R(\bar{x})) := R(h(\bar{x}))$ for every atom $R(\bar{x})$. 
We say that $h$ is a homomorphism from $Q$ to $D$ if $h$ is a homomorphism and $h(R(\bar{x})) \in D$ for every atom $R(\bar{x})$ in $Q$. We denote by $\Hom(Q, D)$ the set of all homomorphisms from $Q$ to $D$.

To define the bag semantics of CQ, we need a more refined notion of homomorphism that specifies the correspondence between atoms in $Q$ and tuples in $D$. Formally,  a \emph{tuple-homomorphism} from $Q$ to $D$ (or \emph{t-homomorphism} for short) is a function $\eta: I(Q) \rightarrow I(D)$ such that there exists a homomorphism $h_\eta$ from $Q$ to $D$ satisfying that $h_\eta(Q(i)) = D(\eta(i))$ for every $i \in I(Q)$. 
For example, consider again $\QEXZERO$ and $\DBEX$ above, then $\eta_0 = \{0 \mapsto 1, 1 \mapsto 3, 2, \mapsto 5\}$ and $\eta_1 = \{0\mapsto 1, 1 \mapsto 0, 2 \mapsto 5\}$ are two t-homomorphism from $\QEXZERO$ to $\DBEX$. 

Intuitively, a t-homomorphism is like a homomorphism, but it additionally specifies the correspondence between atoms (i.e., $I(Q)$) and tuples (i.e.,  $I(D)$) in the underlying bags. One can easily check that if $\eta$ is a t-homomorphism, then $h_\eta$ (restricted to the variables of $Q$) is unique. For this reason, we usually say that $h_\eta$ is the homomorphism associated to~$\eta$. Note that the converse does not hold: for $h$ from $Q$ to $D$, there can be several t-homomorphisms $\eta$ such that $h = h_\eta$.

Let $Q(\bar{x})$ be the head of $Q$. 
We define the output of a CQ $Q$ over a database $D$ as:
\[ 
\sem{Q}(D) \ = \ \bleft Q(h_\eta(\bar{x})) \,\mid\, \text{$\eta$ is a  t-homomorphism from $Q$ to $D$} \bright.
\]
Note that the result is another relation where each %
$Q(h_\eta(\bar{x}))$ is witnessed by a t-homomorphism from $Q$ to $D$. In other words, there is a one-to-one correspondence between tuples in $\sem{Q}(D)$ and t-homomorphisms from $Q$ to $D$.

\paragraph{Discussion} In the literature, homomorphisms are usually used to define the set semantics of a CQ $Q$ over a database $D$.
They are helpful for set semantics but ``inconvenient'' for bag semantics since it does not specify the correspondence between atoms and tuples; namely, they only witness the existence of such correspondence.
In~\cite{chaudhuri1993optimization}, Chaudhuri and Vardi introduced the bag semantics of CQ by using homomorphisms, which we recall next. Let $Q$ be a CQ like (\ref{eq:cq}) and $D$ a database over the same schema $\sigma$, and let $h \in \Hom(Q, D)$. We define the \emph{multiplicity} of $h$ with respect to $Q$ and $D$ by: 
\[
	\mult{Q,D}{h} \ = \  \prod_{i = 0}^{m-1} \mult{D}{h(R_i(\bar{x}_i))}
\]
Chaudhuri and Vardi defined the bag semantics $\semaux{Q}$ of $Q$ over $D$ as the bag $\semaux{Q}(D)$ such that each tuple $Q(\bar{a})$ has multiplicity equal to:
\[
	\mult{\semaux{Q}(D)}{Q(\bar{a})} \ = \ \sum_{h \in \Hom(Q,D)\, \colon\, h(\bar{x}) = \bar{a}} \mult{Q,D}{h}
\]

In Appendix~\ref{sec:app-hierarchical}, we prove that for every CQ $Q$ and database $D$ it holds that $\sem{Q}(D) = \semaux{Q}(D)$, namely, the bag semantics introduced here (i.e., with t-homomorphisms) is equivalent to the standard bag semantics of CQ.
The main difference is that the standard bag semantics of CQ are defined in terms of homomorphisms and multiplicities, and there is no direct correspondence between outputs and homomorphisms. For this reason, we redefine the bag semantics of CQ in terms of t-homomorphism that will connect the outputs of CQ with the outputs of \acrocea over~streams. 

\paragraph{CQ over streams} Now, we define the semantics of CQ over streams, formalizing its comparison with queries in complex event recognition.
For this purpose, we must show how to interpret streams as databases and encode CQ's outputs as valuations. 
Fix a schema $\Schema$ and a stream $\Stream = t_0 t_1 \cdots$ over $\sigma$. Given a position $n \in \bbN$, we define the database of $\Stream$ at position $n$ as the $\sigma$-database $D_n[\Stream] = \bleft t_0, t_1, \ldots, t_n \bright$. For example, $D_5[\StreamEX] = \DBEX$. One can interpret here that $\Stream$ is a sequence of inserts, and then $D_n[\Stream]$ is the database version at position~$n$. 
Since $D_n[\Stream]$ is a bag, the identifiers $I(D_n[\Stream])$ coincide with the positions of the sequence~$t_0 \ldots t_n$.

\changed{Let $Q$ be a CQ} over $\sigma$, and let $\eta: I(Q) \rightarrow I(D_n[\Stream])$ be a t-homomorphism from $Q$ to $D_n[\Stream]$. If we consider $\Omega = I(Q)$, we can interpret $\eta$ as a \emph{valuation} $\hat{\eta}: \Omega \rightarrow 2^\nat$ that maps each atom of $Q$ to a set with a single position; formally, $\hat{\eta}(i) = \{\eta(i)\}$ for every $i \in I(Q)$. Then, we define the semantics of $Q$ over stream $\Stream$ at position $n$ as:
\[
\sem{Q}_n(\Stream) \ = \ \{\hat{\eta} \mid \text{$\eta$ is a t-homomorphism from $Q$ to $D_n[\Stream]$}\}
\]
Note that $\sem{Q}_n(\Stream)$ is equivalent to evaluating $Q$ over $D_n[\Stream]$ where instead of outputting a bag of tuples $\sem{Q}(D_n[\Stream])$, we output the t-homomorphisms (i.e., as valuations) that are in a one-to-one correspondence with the tuples in  $\sem{Q}(D_n[\Stream])$. 

\paragraph{Hierarchical conjunctive queries and main results} 
Let $Q$ be a CQ of the form (\ref{eq:cq}).
Given a variable $x \in \Var$, define $\atoms{x}$ as the bag of all atoms $R_i(\bar{x}_i)$ of $Q$ such that $x$ appears in $\bar{x}_i$. 
We say that $Q$ is \emph{full} if every variable appearing in $\bar{x}_0, \ldots, \bar{x}_{m-1}$ also appears in  $\bar{x}$.
Then, $Q$ is a \emph{Hierarchical Conjunctive Query} (HCQ)\cite{DalviS07a} iff $Q$ is full and for every pair of variables $x, y \in \Var$ it holds that $\atoms{x} \subseteq \atoms{y}$, $\atoms{y} \subseteq \atoms{x}$ or $\atoms{x} \cap \atoms{y} = \emptyset$.
For example, one can check that $\QEXZERO$ is a HCQ, but $\QEXONE$ is not.

HCQ is a subset of CQ that can be evaluated with constant-delay enumeration under updates~\cite{CQUpdates,IdrisUV17}. Moreover, it is the \changed{greatest} class of full conjunctive queries that can be evaluated with such guarantees under fine-grained complexity assumptions. Therefore, HCQ is the right yardstick to measure the expressive power of \acrocea for defining queries with strong efficiency guarantees. 
Given a \acrocea $\ACH$ and a CQ $Q$ over the same schema $\sigma$, we say that $\ACH$ is \emph{equivalent} to $Q$ (denoted as $\ACH \equiv Q$) iff for every stream $\Stream$ over $\sigma$ and every position $n$ it holds that $\sem{\ACH}_n(\Stream) = \sem{Q}_n(\Stream)$.

\begin{theorem}\label{theo:hierarchical-if}
	Let $\sigma$ be a schema. For every HCQ $Q$ over $\sigma$, there exists a \acrocea $\ACH_Q$ over $\sigma$ with unary predicates in $\uncq$ and binary predicates in $\bincq$ such that $\ACH_Q \equiv Q$. Furthermore, $\ACH_Q$ is unambiguous and \changed{of at most} exponential size with respect to $Q$. If $Q$ does not have self joins, then $\ACH_Q$ is of \changed{quadratic}~size. %
\end{theorem}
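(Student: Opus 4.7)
The plan is to leverage the hierarchical structure of $Q$ by encoding it as a rooted forest $F_Q$ on the variables: the hierarchical property asserts that the family $\{\atoms{x} : x \in \Var\}$ is laminar, so that proper containment $\atoms{y} \subsetneq \atoms{x}$ gives an ancestor relation. Each atom's variables are pairwise comparable (they share that atom), hence form a chain in $F_Q$, so every atom $R_i(\bar{x}_i)$ can be placed at the deepest variable it contains. I set $\Omega = I(Q)$, with one label per atom position, so that accepting runs of $\ACH_Q$ at stream position $n$ will correspond one-to-one with t-homomorphisms of $Q$ into $D_n[\Stream]$.

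I would build $\ACH_Q$ recursively over $F_Q$. For each subtree $T_v$ rooted at a variable $v$, a sub-PCEA $\ACH_{T_v}$ reads one tuple per atom in $T_v$ in any order permitted by the stream, enforcing through equality predicates in $\bincq$ that these tuples agree on the binding of $v$ (and of its ancestors). The base case (a leaf atom) is a single transition whose unary predicate from $\uncq$ checks the relation name and the constants from $\Data$ appearing in the atom. The recursive case uses parallelization: when a tuple realizing an atom placed at $v$ is read, the PCEA transition gathers into its incoming set $P$ the accepting states of the child sub-PCEAs $\ACH_{T_{u_1}}, \dots, \ACH_{T_{u_k}}$, and uses $\binfunc$ to force each branch's last-read tuple to agree with the new tuple on the relevant ancestor variables. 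Since the binary predicate between a tuple and its parent in the run tree can enforce equality only on attributes that survive the parent's unary type, this is where the laminar structure of $F_Q$ is crucial: every required cross-atom equality lies on a root-to-leaf path of $F_Q$ and is witnessed locally. To allow any atom to be read last (and thus be the root of the run tree), I replicate this construction once per choice of final atom. A careful accounting shows that without self joins each atom contributes only a polynomial number of states and of transitions, yielding $|\ACH_Q| = O(|Q|^2)$.

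The main obstacle is the self-join case, where the relation name of a tuple no longer uniquely identifies which atom of $Q$ it realizes. To keep the correspondence between t-homomorphisms and accepting runs bijective, I would index states by the subset of atom positions in $I(Q)$ already consumed, inflating the state space by a factor of $2^{|I(Q)|}$ and explaining the exponential bound. Non-injective t-homomorphisms, where two atoms map to the same tuple, are handled naturally by placing both atoms' labels in the set $L$ of a single transition that reads that tuple. Unambiguity is then straightforward to verify: each atom label $i$ occurs at the unique position $\eta(i)$ in an accepting run, so every $\nu_\tau(i)$ is a singleton and the run is trivially simple (condition (1)); and the tree shape of $\tau$ is determined by the placement of atoms in $F_Q$ together with the ordering of positions in the stream fixed by the valuation $\hat{\eta}$, so each valuation is witnessed by at most one accepting run (condition (2)).
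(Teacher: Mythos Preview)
Your proposal is correct and follows essentially the same approach as the paper: both exploit the q-tree (your laminar forest $F_Q$) so that PCEA states correspond to tree nodes and parallelizing transitions gather the completed subtrees hanging off the path to the last-read atom, with equality predicates in $\bincq$ enforcing consistency along that path. The paper makes the transition structure explicit via the combinatorial sets $C_{x,i}$ of ``incomplete states'' (the nodes hanging from the $x$-to-$i$ path in the q-tree) rather than your recursive sub-PCEA phrasing, and handles self-joins by tagging variable-states with self-join sets $A \in \SJ{Q}$ of same-relation atoms rather than arbitrary consumed subsets of $I(Q)$, but these are implementation choices within the same overall strategy.
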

\changed{
\begin{figure}[t]
	\centering
	\begin{tikzpicture}[>=stealth, 
		semithick, 
		auto,
		initial text= {},
		initial distance= {4mm},
		accepting distance= {3mm},
		mystate/.style={state, inner sep=0pt, minimum size=5mm}]
		
		\node at (-1, 0) {$Q_0(x,y) \leftarrow \underbrace{T(x)}_0, \underbrace{S(x,y)}_1, \underbrace{R(x,y)}_2$};
		
		\path[level distance=1cm,
		level 2/.style={sibling distance=0.8cm}]
		node (raw) at (-1,-1) {$x$} [sibling distance=1.5cm] 
		child { node {$0$} }
		child { node {$y$}
			child { node {$1$} }
			child { node {$2$} }
		};

		\begin{scope}[xshift=3.2cm]

			\node at (-1.3, -1.2) {$\ACH_{Q_0}\!:$};

			\node (ET) at (-0.8, 0) {};
			\node [mystate] (T) at (0, 0) {$0$};
			
			\node (ES) at (-0.8, -1) {};
			\node [mystate] (S) at (0, -1) {$1$};
			
			\node [mystate, accepting] (R) at (2.5, -0.5) {$x$};
			
			\draw[->] (ET) edge node {$T  /  0$} (T);
			\draw[->] (ES) edge node {$S / 1$} (S);

			\draw[->] (T) edge node[sloped] {$(Tx,Rxy)$} (R);
			\draw[->] (S) edge node[sloped, swap] {$(Sx,Rxy)$} (R);
			
			\pic [draw, -, angle radius=1.8cm] {angle = T--R--S};
			\node at ($(R)+(-1.3,0)$) {$R / 2$};
			
		\end{scope}	
		
		\begin{scope}[xshift=7.2cm]

			\node (ET) at (-0.8, 0) {};
			\node [mystate] (T) at (0, 0) {$0$};
			
			\node (ES) at (-0.8, -1) {};
			\node [mystate] (S) at (0, -1) {$2$};
			
			\node [mystate, accepting] (R) at (2.5, -0.5) {$x$};
			
			\draw[->] (ET) edge node {$T / 0$} (T);
			\draw[->] (ES) edge node {$R / 2$} (S);

			\draw[->] (T) edge node[sloped] {$(Tx,Sxy)$} (R);
			\draw[->] (S) edge node[sloped, swap] {$(Rxy,Sxy)$} (R);
			
			\pic [draw, -, angle radius=1.8cm] {angle = T--R--S};
			\node at ($(R)+(-1.3,0)$) {$S /1$};
			
		\end{scope}	
		
		\begin{scope}[xshift=3.2cm, yshift=-2cm]

			\node (ET) at (-0.8, 0) {};
			\node [mystate] (T) at (0, 0) {$1$};
			
			\node (ES) at (-0.8, -1) {};
			\node [mystate] (S) at (0, -1) {$2$};
			
			\node [mystate] (R) at (3.25, -0.5) {$y$};
			
			\node [mystate, accepting] (x) at (6.5, -0.5) {$x$};
			
			\draw[->] (ET) edge node {$S / 1$} (T);
			\draw[->] (ES) edge node {$R / 2$} (S);

			\draw[->] (T) edge node[sloped] {$R,(Sxy,Rxy)/2$} (R);
			\draw[->] (S) edge node[sloped, swap] {$S,(Rxy,Sxy)/1$} (R);
			
			\draw[->] (R) edge node[sloped] {$T,(? xy,Tx)/0$} (x);

		\end{scope}	
	\end{tikzpicture}	
	\caption{An illustration of constructing an \acrocea from a HCQ. On the left, the HCQ $Q_0$ and its $q$-tree. On the right, a \acrocea $\ACH_{Q_0}$ equivalent to $Q_0$. For presentation purposes, states are repeated several times and $?xy$ means a binary relation with any relation name (i.e., $R$ or $S$).}
	\label{fig:ex-proof}
\end{figure}
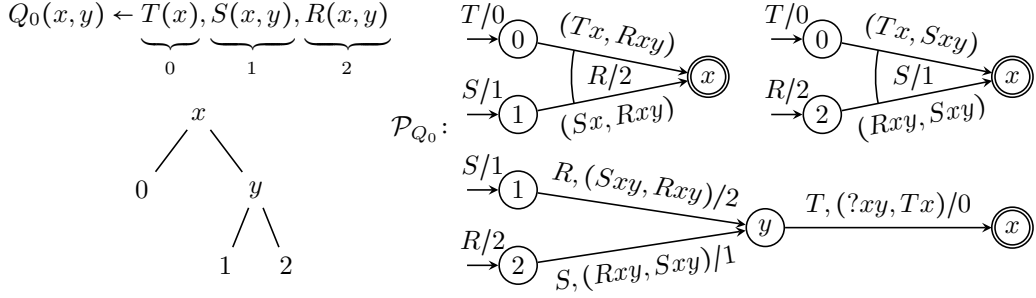
\begin{proof}[Proof sketch]
	We give an example of the construction to provide insights on the expressive power of \acrocea for defining HCQ (the full technical proof is in the appendix). For this construction, we rely on a $q$-tree of a HCQ, a structure introduced in~\cite{CQUpdates}. Formally, let $Q$ be a HCQ and assume, for the sake of simplification, that $Q$ is connected (i.e., the Gaifman graph associated to $Q$ is connected). A \emph{\qt} for $Q$ is a labeled tree, $\qtree: t \rightarrow I(Q) \cup \{\bar{x}\}$, where for every $x \in \{\bar{x}\}$ there is a unique inner node $\bar{u} \in t$ such that $\qtree(\bar{u}) = x$, and for every atom $i \in I(Q)$ there is a unique leaf node $\bar{v} \in t$ such that $\qtree(\bar{v}) = i$. Further, if $\bar{u}_1, \ldots, \bar{u}_k$ are the inner nodes of the path from the root until $\bar{v}$, then $\{\bar{x}_i\} = \{\qtree(\bar{u}_1), \ldots, \qtree(\bar{u}_k)\}$. %
	In~\cite{CQUpdates}, it was shown that a CQ $Q$ is hierarchical and connected iff there exists a $q$-tree for $Q$. For instance, in Figure~\ref{fig:ex-proof} (left) we display again the HCQ $Q_0$, labeled with the identifiers of the atoms, and below a $q$-tree for~$Q_0$. 
	
	For a connected HCQ without self-joins the idea of the construction is to use the $q$-tree of $Q$ as the underlying structure of the \acrocea $\ACH_Q$. Indeed, the nodes of the $q$-tree will be the states of $\ACH_Q$. For example, in Figure~\ref{fig:ex-proof} (right) we present a \acrocea $\ACH_{Q_0}$ equivalent to $Q_0$, where we use multiple copies of the states for presentation purposes (i.e., if two states have the same label, they are the same state in the figure). As you can check, the states are $\{0,1,2,x,y\}$, which are the nodes of the $q$-tree. Furthermore, the leaves of the $q$-trees (i.e., the atoms) are the initial states  $\{0,1,2\}$ where $\ACH_{Q_0}$ uses a unary predicate to check that the tuples have arrived and annotates with the corresponding identifier. 
	
	For every atom $R_i(\bar{x}_i)$ and every variable $x \in \{\bar{x}_i\}$, $\ACH_Q$ jumps with a transition to the state $x$ which is a node in the $q$-tree and joins with all the atoms and variables ``hanging'' from the path from $x$ to the leave $i$ in the $q$-tree. For example, consider the first component (i.e., top-left) of $\ACH_{Q_0}$ in Figure~\ref{fig:ex-proof}. When $\ACH_{Q_0}$ reads a tuple $R(a,b)$, it jumps to state $x$ and joins with all the atoms hanging from the path from $x$ to $2$, namely, the atoms $T$ and $S$. Similarly, consider the last component (i.e., below) of $\ACH_{Q_0}$ in Figure~\ref{fig:ex-proof}. When $\ACH_{Q_0}$ reads a tuple $R(a,b)$, it also jumps to state $y$, but now the only atom hanging from the path from $y$ to $2$ in the $q$-tree is $1$, which corresponds to a single transition from $1$ to $y$ joining with the atom $S(x,y)$. Finally, when $\ACH_{Q_0}$ reads a tuple $T(a)$, the only variable that hangs in the path from the root to $0$ is the variable $y$, and then there is a single transition from $y$ to $x$, joining with an equality predicate $(?xy,Tx)$ where $?xy$ means a binary relation with any relational name (i.e., $R$ or $S$). Finally, the root of the $q$-tree serves as the final state of the $\ACH_{Q_0}$, namely, all atoms were found. Note that an accepting run tree of $\ACH_{Q_0}$ serves as a witness that the $q$-tree is complete. The construction of HCQ with self-joins is more involved, and we present the details in the appendix. 
\end{proof}
}

The previous result shows that \acrocea has the expressive power to specify every HCQ. Given that HCQ characterize the full CQ that can be evaluated in a dynamic setting (under complexity assumptions), a natural question is to ask whether \acrocea has the \emph{right} expressive power, in the sense that it cannot define non-hierarchical CQ. We answer this question positively by focusing on acyclic CQ. Let $Q$ be a CQ of the form~(\ref{eq:cq}). A \emph{join-tree} for $Q$ is labeled tree $\tau:t\rightarrow U(Q)$ such that for every variable $x$ the set $\{\bar{u} \in t \mid \tau(\bar{u}) \in \atoms{x}\}$ form a connected tree in $\tau$. We say that $Q$ is acyclic if $Q$ has a join tree. One can check that both $\QEXZERO$ and $\QEXONE$ are examples of acyclic CQ.

\begin{theorem}\label{theo:hierarchical-onlyif}
	Let $\sigma$ be a schema. For every acyclic CQ $Q$ over $\sigma$, if $Q$ is not hierarchical, then $\ACH \not \equiv Q$ for all \acrocea $\ACH$ over $\sigma$.
\end{theorem}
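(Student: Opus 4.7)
The plan is to argue by contradiction: assume some \acrocea $\ACH$ satisfies $\ACH \equiv Q$, and exhibit streams on which $\ACH$ and $Q$ must disagree. Since $Q$ is not hierarchical, there exist variables $x, y \in \Var$ and three atoms $\alpha_x \in \atoms{x}\setminus\atoms{y}$, $\alpha_y \in \atoms{y}\setminus\atoms{x}$, and $\alpha_{xy} \in \atoms{x}\cap\atoms{y}$. The first step is to reduce to a ``kernel'' three-atom pattern: using a join-tree of $Q$ (which exists by acyclicity), I would realise the remaining atoms of $Q$ in our test streams with fresh, pairwise-distinct values chosen so that the only equality constraints in play are the $x$- and $y$-constraints linking $\alpha_{xy}$ to $\alpha_x$ and $\alpha_y$.

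For each pair $(a,b) \in \Data^2$, I construct a stream $\Stream_{a,b}$ placing $\alpha_{xy}$'s tuple at the earliest of the three relevant positions, then $\alpha_y$'s tuple, and finally $\alpha_x$'s tuple, with the remaining atoms of $Q$ realised later using the fresh values above. Because positions strictly increase toward the root in any run tree of a \acrocea, the node for $\alpha_x$ in any accepting run tree that produces the intended valuation must be an ancestor of both the nodes for $\alpha_y$ and $\alpha_{xy}$. A simple case analysis shows that the subtree spanned by the three nodes has only a few possible topologies: either $\alpha_{xy}$ and $\alpha_y$ are both children of $\alpha_x$ (parallelisation at $\alpha_x$), or one of them lies on the tree-path from the other up to $\alpha_x$.

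The crux is to show that, in each topology, the binary predicates on parent--child edges cannot simultaneously enforce the two $Q$-equalities $\alpha_{xy}.x = \alpha_x.x$ and $\alpha_{xy}.y = \alpha_y.y$ while avoiding any extraneous equalities. Since binary predicates compare a child's attributes to its parent's, and $\alpha_x$ has no $y$-occurrence and $\alpha_y$ has no $x$-occurrence, the edge meeting $\alpha_x$ cannot carry a $y$-equality and the edge meeting $\alpha_y$ cannot carry an $x$-equality, so any chain of predicates realising the two constraints would have to pass a ``wrong'' attribute through an intermediate node. To turn this into a contradiction, I would use three streams differing only in a few data values: one in which both $Q$-equalities hold, one in which only the $x$-equality is broken, and one in which only the $y$-equality is broken. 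For any topology accepting the correct valuation on the first, the transitive closure of the enforced predicates either fails to imply one of the two $Q$-equalities (so $\ACH$ wrongly accepts one of the broken-equality streams) or strictly contains it (so $\ACH$ wrongly rejects a valid stream obtained by violating the spurious extra equality).

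The main obstacle is handling arbitrary acyclic $Q$: other atoms might share variables with the witness triple and offer alternate chains of binary predicates along which the missing equality could in principle be transported. Acyclicity of $Q$ is essential here, because via the join-tree every path between $\alpha_{xy}$ and $\alpha_x$ in $Q$ avoids the variable $y$ (and symmetrically for $\alpha_y$), so with the careful choice of fresh values for the other atoms those alternate chains become inconsistent and cannot survive. Self-joins (where atoms may share a relation name but differ in their $\Omega$-labels) require some extra care to keep the roles of the three witnesses distinct in the valuation, but do not alter the core argument.
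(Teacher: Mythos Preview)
Your argument has two genuine gaps.

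First, the topological claim is false. You assert that, because positions strictly increase toward the root, the node for $\alpha_x$ (placed at the latest of the three witness positions) must be an ancestor of the nodes for $\alpha_y$ and $\alpha_{xy}$. But ``larger position'' only means ``closer to the root along its own root-to-leaf path''; it does not force an ancestor relationship between nodes in different branches. The three witness nodes can perfectly well sit in disjoint subtrees whose lowest common ancestors are nodes for the \emph{other} atoms of $Q$ (which you yourself place at later positions $3,\ldots,m-1$). Your ``few possible topologies'' case split therefore misses most of the cases that actually arise.

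Second, and more fundamentally, there is no finiteness/pumping step. You propose to work with three specific streams and argue that the binary predicates along run-tree edges ``cannot carry a $y$-equality'' and so on. But the theorem is stated for \acrocea over arbitrary predicate classes $\un,\bin$: a binary predicate is just a subset of $\tuples[\Schema]^2$, not an equality on a named attribute, so nothing prevents predicates tailored to distinguish your three hand-picked streams. The paper's proof avoids this entirely via a pigeonhole argument on states: it builds an infinite family $\Stream_k$ (every variable $\mapsto 0$ except the two witness variables $\mapsto k$), notes that the intended valuation lies in $\sem{Q}_n(\Stream_k)$ for all $k$, and uses finiteness of the state set of $\ACH$ to find $j\neq k$ whose accepting runs are isomorphic as state-labelled trees; a mixed stream (one witness variable $\mapsto j$, the other $\mapsto k$) then separates $Q$ from $\ACH$. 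Observe also that this argument never invokes a join-tree of $Q$, so your remark that acyclicity is ``essential here'' does not match the actual proof.
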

We note that, although \acrocea can only define acyclic CQ that are hierarchical, it can define queries that are not CQ. For instance, $\PCEAEX$ in Example~\ref{ex:pcea} cannot be defined by any CQ, \changed{since a CQ cannot express that the $R$-tuple must arrive after $T$ and $S$.} Therefore, the class of queries defined by \acrocea~ \changed{is} strictly more expressive than HCQ.

\changed{By Theorems~\ref{theo:hierarchical-if} and~\ref{theo:hierarchical-onlyif}, \acrocea capture the expressibility of HCQ among acyclic CQ}. In the next section, we show that they also \changed{share} their good algorithmic properties for streaming evaluation. 
 	
	\section{An evaluation algorithm for PCEA}\label{sec:algorithm}

Below, we present our evaluation algorithm for unambiguous \acrocea with equality predicates. We do this in a streaming setting where the algorithm reads a stream sequentially, and at each position, we can enumerate the new outputs fired by the last tuple. Furthermore, our algorithm works under a \emph{sliding window} scenario, where we only want to enumerate the outputs inside the last $w$ items for some window size $w$. This scenario is motivated by CER~\cite{GiatrakosAADG20,cugola2012processing,BucchiGQRV22}, where the importance of data decreases with time, and then, we want the outputs inside some relevant time window. 

In the following, we start by defining the evaluation problem and stating the main theorem, followed by describing our data structure for storing valuations. We end this section by explaining the algorithm and stating its correctness. 

\paragraph{The streaming evaluation problem} Let $\Schema$ be a fixed schema. For a valuation $\nu: \Omega \rightarrow 2^{\nat}$, we define $\min(\nu) = \min\{i \mid \exists \ell \in \Omega. \ i \in \nu(\ell)\}$, namely, the minimum position appearing in $\nu$. In this section, we study the following evaluation problem of \acrocea over streams: 
\begin{center}
	\framebox{
		\begin{tabular}{rl}
			\textbf{Problem:} & $\evalprob[\sigma]$\\
			\textbf{Input:} & An unambiguous \acrocea $\ACH = (Q, \uncq, \bincq, \Omega, \Delta, F)$ over $\sigma$,\\
			&  a window size $\window \in \nat$, and a stream $\Stream = t_0 t_1 \ldots $ \\
			\textbf{Output:} & At each position $\ipos$,  enumerate all valuations $\nu \in \sem{\ACH}_\ipos(\Stream)$ \\
			& such that $|\ipos - \min(\nu)| \leq \window$. 
		\end{tabular}
	}
\end{center}
The goal is to output the set $\sem{\ACH}_\ipos^w(\Stream) = \{\nu \in \sem{\ACH}_\ipos(\Stream) \mid |\ipos - \min(\nu)| \leq \window\}$ by reading the stream $\Stream$ tuple-by-tuple sequentially. We assume here a method $\yield{\Stream}$ such that each call retrieves the next tuple, that is, the $\ipos$-th call to $\yield{\Stream}$ retrieves $t_{\ipos}$ for each~$\ipos \geq 0$. 

For solving $\evalprob[\sigma]$, our desire is to find a streaming evaluation algorithm~\cite{grez-chain,IdrisUV17} that, for each tuple $t_\ipos$, \changed{updates} its internal state quickly and enumerates the set $\sem{\ACH}_\ipos^\window(\Stream)$ with output-linear delay. More precisely, let $f: \nat^3 \rightarrow \nat$. A \emph{streaming enumeration algorithm} $\salgo$ with $f$-update time for $\evalprob[\sigma]$ works as follows. Before reading the stream $\Stream$, $\salgo$ receives as input a \acrocea $\ACH$ and $\window \in \nat$, and does some preprocessing. By calling $\yield{\Stream}$, $\salgo$ reads $\Stream$ sequentially and processes the next tuple $t_\ipos$ in two phases called the \emph{update phase} and \emph{enumeration phase}, respectively.
In the update phase, $\salgo$ updates a data structure $\DS$ with $t_\ipos$ taking time $\cO(f(|\ACH|, |t_i|, \window))$.
In the enumeration phase, $\salgo$ uses $\DS$ for enumerating $\sem{\ACH}_\ipos^\window(\Stream)$ with \emph{output-linear delay}. Formally, if $\sem{\ACH}_\ipos^\window(\Stream) = \{\nu_1, \ldots, \nu_{k}\}$ (i.e., in arbitrary order), the algorithm prints $\# \nu_1 \# \nu_2 \# \ldots \# \nu_{k} \#$ to the output registers, sequentially. Furthermore, $\salgo$ prints the first and last symbols $\#$ when the enumeration phase starts and ends, respectively, and the time difference (i.e., the \emph{delay}) between printing the $\#$-symbols surrounding $\nu_i$ is in~$\cO(|\nu_i|)$. Finally, if such an algorithm exists, we say that $\evalprob[\sigma]$ admits a \emph{streaming evaluation algorithm} with $f$-update time and output-linear delay.

In the following, we prove the following algorithmic result for evaluating \acrocea.
\begin{theorem}\label{theo:algorithm}
	$\evalprob[\sigma]$ admits a streaming evaluation algorithm with $(|\ACH|\cdot|t|+|\ACH|\cdot\log(|\ACH|)+|\ACH|\cdot\log(w))$-update time and output-linear delay. 
\end{theorem}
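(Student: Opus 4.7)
The plan is to maintain, across stream positions, a DAG-shaped data structure $\DSw$ that compactly represents all currently alive accepting runs. Each node of $\DSw$ is associated to some reachable triple $(q,i,L)$ and carries either a \emph{product} record $\dsprod$ (the child pointers of a single $\Delta$-application, encoding one node of a run tree) or a \emph{union} pair $(\dsleft,\dsright)$ grouping alternative productions that all land in $(q,i)$. A field $\dsmaxstart$ summarises the earliest position reachable through the subtree and is used to discard nodes that have slipped out of the window $w$. The correctness invariant, which I would prove by induction on $\ipos$, is that the set of valuations obtained by traversing $\DSw$ from the cells $(q,\ipos)$ with $q\in F$ and $\dsmaxstart\geq \ipos-w$ equals $\sem{\ACH}_\ipos^w(\Stream)$.

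The first technical step is to exploit that binary predicates lie in $\bincq$ to implement joins in logarithmic time. For every state $p$ and every binary predicate $B$ labelling an incoming transition of some state $q$, I keep a balanced search dictionary $\htablesym[p,B]$ indexed by the key $\leftbinfunc{B}(t_j)$, storing the live $\DSw$-nodes whose runs end at state $p$ with last tuple $t_j$. On reading $t_\ipos$, for each transition $(P,U,\binfunc,L,q)\in\Delta$ with $t_\ipos\in U$ and each $p\in P$, a lookup in $\htablesym[p,\binfunc(p)]$ at key $\rightbinfunc{\binfunc(p)}(t_\ipos)$ retrieves in $O(\log|\ACH|)$ time a single subtree that packages all compatible partial runs. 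Hence a single fresh product node with $|P|$ child pointers suffices to represent the exponentially many combinations of children, even though parallelization allows many matching runs per state. The product is unioned into the cell for $(q,\ipos)$ and registered in the affected dictionaries and in a position-keyed tree for expiry. Summing costs gives $\cO(|\ACH|\cdot|t|)$ for predicate evaluations, $\cO(|\ACH|\log|\ACH|)$ for all dictionary operations (since $\sum_{(P,U,\binfunc,L,q)}|P|\leq|\ACH|$), and $\cO(|\ACH|\log w)$ for the window tree, matching the claimed update bound.

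Enumeration at position $\ipos$ starts from the live nodes of each final state and walks $\DSw$ recursively: at a union I descend into $\dsleft$ and then $\dsright$; at a product I emit the stored label set and recurse into each child. Unambiguity of $\ACH$ guarantees, via condition~(2), that different traversals produce different valuations, so no deduplication is needed; simplicity, via condition~(1), guarantees that the subtree encoding any accepting run has exactly one node per marked pair $(i,\ell)$, so the amortised work per output symbol is constant. Precomputing at each union node a pointer to the next non-empty leaf on each side, in the spirit of~\cite{AmarilliBJM17,MunozR22}, lets me emit the surrounding $\#$ markers in constant time and keeps the delay between consecutive markers $\cO(|\nu|)$, i.e.\ output-linear.

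The hardest part is keeping $\DSw$ consistent under window expiry without losing the cost bounds. When a leaf falls out of the window, every product node containing it must also expire, and unions whose both sides become dead must be pruned; otherwise $\dsmaxstart$ summaries drift, enumeration may descend into empty subtrees, and dictionaries may return stale hits. The plan is to attach to each node a live-child counter and propagate expiries lazily from the position-keyed tree: a product is killed the first time one of its children dies, a union once both sides do, and parents recompute $\dsmaxstart$ bottom-up. Each node is created and destroyed at most once, so the amortised cost of expiries fits within the $\cO(|\ACH|\log w)$ budget per step. Combined with the inductive invariant on $\DSw$ and the output-linear enumeration above, this yields the stated update-time and delay guarantees of Theorem~\ref{theo:algorithm}.
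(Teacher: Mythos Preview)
Your high-level architecture matches the paper: a DAG $\DSw$ with product and union nodes, a $\dsmaxstart$ summary, and hash-style indices keyed by state and by the value $\leftbinfunc{B}(t)$ so that joins become lookups. The correctness invariant you state and the use of unambiguity for duplicate-freeness are also the paper's.

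Where you diverge is in the treatment of the sliding window, and this is where the proposal has a genuine gap. You handle expiry \emph{explicitly}: leaves that fall out of the window are killed, products die when any child dies, unions when both sides die, and $\dsmaxstart$ is recomputed bottom-up. You argue this is cheap because each node is created and destroyed once. Two issues. First, this gives only \emph{amortised} update cost, whereas the problem asks for a per-tuple bound; a single step could trigger a cascade of $\Theta(|\ACH|\cdot w)$ deaths. Second, nodes in $\DSw$ are shared across the DAG (the same union subtree is referenced from many index entries and from many product nodes), so ``killing'' a node and propagating upward needs reference counting over an unbounded in-degree, which you do not account for. Your enumeration plan of precomputing next-non-empty-leaf pointers inherits the same difficulty: those pointers must be maintained under expiry, and it is unclear how to do that within the budget.

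The paper avoids all of this with a different idea: it never deletes. Union nodes are organised as a \emph{heap} on $\dsmaxstart$ (condition~(\ref{eq:maxstart})), and the $\mathtt{union}$ operation is implemented fully persistently by path copying along a balanced tree of depth $O(\log(|\ACH|\cdot w))$, which is worst-case per call (Proposition~\ref{prop:union}). During enumeration, the heap order lets you test in constant time whether a union subtree still has any in-window output by comparing $\dsmaxstart$ with $\ipos-w$; if not, you simply do not descend. No expiry, no pointer maintenance, no amortisation. A minor further difference: the paper uses constant-time hash tables under the RAM model for $\htablesym$, not balanced search trees, so index operations do not contribute any $\log$ factor; the $\log(|\ACH|)+\log w$ in the bound comes solely from the persistent heap-ordered union.
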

Note that the update time does not depend on the number of outputs seen so far, and regarding data complexity (i.e., assuming that $\ACH$ \changed{and the size of the tuples, $|t|$ are fixed}), the update time is logarithmic in the size of the sliding window.
Theorem~\ref{theo:algorithm} improves with respect to~\cite{grez-chain} by considering a more general class of queries and evaluating over a sliding window. \changed{In contrast}, Theorem~\ref{theo:algorithm} is incomparable to the algorithms for dynamic query evaluation of HCQ in~\cite{CQUpdates,IdrisUV17}. On the one hand, \cite{CQUpdates,IdrisUV17} show
constant update time algorithms for HCQ under insertions and deletions. On the other hand, Theorem~\ref{theo:algorithm} works for CER queries that can compare the order over tuples. If we restrict to HCQ, the algorithms in~\cite{CQUpdates,IdrisUV17} have better complexity, given that there is no need to maintain and check the order of how tuples are inserted or deleted.

\changed{It is important to note that we base the algorithm of Theorem~\ref{theo:algorithm} on the ideas introduced in~\cite{grez-chain}. Nevertheless, it has several new insights that are novel and are not present in~\cite{grez-chain}. First, our algorithm evaluates \acrocea, which is a generalization of CCEA, and then the approach in~\cite{grez-chain} requires several changes. Second, the data structure for our algorithm must manage the evaluation of a sliding window and simultaneously combine parallel runs into one. This challenge requires a new strategy for enumeration that combines cross-products with checking a time condition. Finally, maintaining the runs that are valid inside the sliding window with logarithmic update time requires the design of a new data structure based on the principles of a heap, which is novel. We believe this data structure is interesting in its own right, which could lead to new advances in streaming evaluation algorithms with enumeration.}

We dedicate the rest of this section to explaining the streaming evaluation algorithm of Theorem~\ref{theo:algorithm}, starting by describing the data structure $\DS$.

\paragraph{The data structure} Fix a set of labels $\Omega$. For representing sets of valuations $\nu: \Omega \rightarrow 2^\nat$, we use a data structure composed of nodes, where each node stores a position, a set of labels, and pointers to other nodes. Formally, the \emph{data structure} $\DS$ is composed by a set of nodes, denoted by $\dsnodes$, where each node $\n$ has a set $\dslabels(\n) \subseteq \Omega$, a position $\dspos(\n) \in \nat$, a set $\dsprod(\n) \subseteq \dsnodes$, and two links to other nodes $\dsleft(\n), \dsright(\n) \in \dsnodes$. We assume that the directed graph $G_{\DS}$ with $V(G_{\DS}) = \dsnodes$ and $E(G_{\DS}) = \{(\n_1,\n_2) \mid \n_2 \in \dsprod(\n_1) \vee \n_2 = \dsleft(\n_1) \vee \n_2 = \dsright(\n_1)\}$ is acyclic. In addition, we assume a special node $\dsnull \in \dsnodes$ that serves as a bottom node (i.e., all components above are undefined for $\dsnull$) and $\bot \notin \dsprod(\n)$ for every $\n$. 

Each node in $\DS$ represents a bag of valuations. For explaining this representation, we need to first introduce some algebraic operations on valuations. Given two valuations $\nu, \nu': \Omega \rightarrow 2^\nat$, we define the product $\nu \valop \nu': \Omega \rightarrow 2^\nat$ such that $[\nu \valop \nu'](\ell) = \nu(\ell) \cup \nu'(\ell)$ for every $\ell \in \Omega$. Further, we extend this product to bags of valuations $V$ and $V'$ such that $V \valop V' = \bleft\nu \valop \nu' \mid \nu \in V, \nu' \in V' \bright$.
Note that $\valop$ is an associative and commutative operation and, thus, we can write $\bigvalop_i V_i$ for referring to a sequence of $\valop$-operations.
Given a pair $(L, i) \in 2^\Omega \times \nat$, we define the valuation $\nu_{L, i}: \Omega \rightarrow 2^\nat$ such that
\changed{$\nu_{L,i}(\ell) = \{i\}$ if $\ell \in L$, and $\nu_{L,i} (\ell) = \emptyset$, otherwise}. With this notation, for every $\n \in \dsnodes$ we define the bags $\dssemprod{\n}$ and $\dssem{\n}$ recursively as follows:
\[
\dssemprod{\n} := \bleft\nu_{L(\n), i(\n)}\bright  \valop \bigvalop_{\n' \in \dsprod(\n)} \dssem{\n'}  \ \ \ \ \ \ \ \ \ \ 
\dssem{\n} := \dssemprod{\n} \cup \dssem{\dsleft(\n)} \cup \dssem{\dsright(\n)}.
\]
For $\bot$, we define $\dssemprod{\dsnull} = \dssem{\dsnull} = \emptyset$. Intuitively, the set $\dsprod(\n)$ represents the \emph{product} of its nodes with the valuation $\nu_{L,i}$, and the nodes $\dsleft(\n)$ and $\dsright(\n)$ represent \emph{unions} (for union-left and union-right, respectively). This interpretation is analog to the product and union nodes used in previous work of MSO enumeration~\cite{AmarilliBJM17,MunozR22}, but here we encode products and unions in a single~node. %

For efficiently enumerating $\dssem{\n}$, we require that valuations in $\DS$ are represented without overlapping. To formalize this idea, define that the product $\nu \valop \nu'$ is \emph{simple} if \changed{for every $\ell \in \Omega$, $\nu(\ell)$ and $\nu'(\ell)$ are disjoint and $[\nu \valop \nu'](\ell) = \nu(\ell) \cup \nu'(\ell)$}. Accordingly, we extend this notion to bags of valuations: $V \valop V'$ is simple if $\nu \valop \nu'$ is simple for every $\nu \in V$ and $\nu' \in V'$. We say that $\DS$ is simple if $\bleft\nu_{L(\n), i(\n)}\bright  \valop \bigvalop_{\n' \in \dsprod(\n)} \dssem{\n'}$ is simple for every $\n \in \dsnodes$. This notion is directly related to unambiguous \acrocea in Section~\ref{sec:pcea}. Intuitively, the first condition of unambiguous \acrocea will help us to force that $\DS$ is always simple.

The next step is to incorporate the window-size restriction to $\DS$. For a node $\n \in \dsnodes$, let $\max(\n) = \max\{i \in \nu(\ell) \mid  \nu \in \sem{\n} \wedge \ell \in \Omega \}$. Then, given a position $\ipos \geq \max(\n)$ and a window size $w \in \nat$, define the bag:
\[
\dssem{\n}^w_\ipos \ := \ \bleft \nu \in \dssem{\n} \mid |\ipos - \min(\nu)| \leq \window  \bright.
\]
Our plan is to represent $\dssem{\n}^w_\ipos$ and enumerate its valuations with output-linear delay. For this goal, from now on we fix a $w \in \nat$ and write $\DSw$ to denote the data structure with window size $w$. For the enumeration of $\dssem{\n}^w_\ipos$, in each node $\n$ we store the value:
\[
\dsmaxstart(\n) \ := \ \max\left\{\min(\nu) \mid \nu \in \dssemprod{\n}  \right\}
\]
This value will be helpful to verify whether $\dssem{\n}^w_\ipos$ is non-empty or not; in particular, one can check that $\dssem{\n}^w_\ipos \neq \emptyset$ iff $|\ipos - \dsmaxstart(\n)| \leq \window$. We always assume that $|\max(\n) - \dsmaxstart(\n)| \leq \window$ (otherwise $\dssem{\n}^w_\ipos = \emptyset$).
In addition, we require an order with $\dsleft(\n)$ and $\dsright(\n)$ to discard empty unions easily. For every node $\n \in \dsnodesw$, we require:
\[
\text{$\dsmaxstart(\n) \geq \dsmaxstart(\dsleft(\n))$ \ and \  $\dsmaxstart(\n) \geq \dsmaxstart(\dsright(\n))$} \tag{\ddag} \label{eq:maxstart}
\]
whenever $\dsleft(\n) \neq \bot \neq \dsright(\n)$. Intuitively, the binary tree formed by $\n$ and all nodes that can be reached by following $\dsleft(\cdot)$ and $\dsright(\cdot)$ is not \changed{strictly ordered}; however, it follows the same principle~(\ref{eq:maxstart}) as a \emph{heap}~\cite{cormen2022introduction}. Note that it is not our goal to use $\DSw$ as a priority queue (\changed{since }removing the max element from a heap takes logarithmic time, and we need constant time), but to use condition~(\ref{eq:maxstart}) to quickly check if there are more outputs to enumerate in $\dsleft(\n)$ or $\dsright(\n)$ \changed{by comparing the max-start value of a node with the start of the current location of the time~window.}
\begin{theorem}\label{theo:enumeration}
	Let $w \in \nat$ be a window size and assume that $\DSw$ is simple. Then, for every $\n \in \dsnodesw$ and every position $\ipos \geq \max(\n)$, the valuations in $\dssem{\n}_\ipos^w$ can be enumerated with output-linear delay and without preprocessing (i.e., the enumeration starts immediately). 
\end{theorem}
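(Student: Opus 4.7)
The plan is to describe a recursive enumeration procedure $\mathsf{Enum}(\n, \ipos, w)$ that emits the valuations of $\dssem{\n}_\ipos^w$ in a $\#$-separated stream. The procedure first checks in constant time whether $\n = \dsnull$ or $|\ipos - \dsmaxstart(\n)| > w$; if so it returns. Otherwise it (i) invokes a subroutine that enumerates the cross product $\bleft \nu_{L(\n), i(\n)} \bright \valop \bigvalop_{\n' \in \dsprod(\n)} \dssem{\n'}_\ipos^w$ through $|\dsprod(\n)|$ nested stateful calls to $\mathsf{Enum}$ (implemented in the RAM model by explicit iterator stacks), printing each combination as the concatenation of $\nu_{L(\n), i(\n)}$ with the current value of every iterator followed by a $\#$; and then (ii) calls $\mathsf{Enum}(\dsleft(\n), \ipos, w)$ and $\mathsf{Enum}(\dsright(\n), \ipos, w)$. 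The outer call $\mathsf{Enum}(\n, \ipos, w)$ produces the desired enumeration.

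Correctness rests on two observations. First, by definition $\dssem{\n}_\ipos^w$ decomposes as $\dssemprod{\n}_\ipos^w \cup \dssem{\dsleft(\n)}_\ipos^w \cup \dssem{\dsright(\n)}_\ipos^w$, and since $\min(\nu \valop \nu') = \min(\min(\nu), \min(\nu'))$, the identity $\dssemprod{\n}_\ipos^w = \bleft \nu_{L(\n), i(\n)} \bright \valop \bigvalop_{\n'} \dssem{\n'}_\ipos^w$ holds whenever $|\ipos - \dsmaxstart(\n)| \leq w$. Second, the heap-like condition~(\ref{eq:maxstart}) makes the pruning step sound: if $|\ipos - \dsmaxstart(\n)| > w$, then $\dsmaxstart(\dsleft(\n)), \dsmaxstart(\dsright(\n)) < \ipos - w$ as well, so no windowed output is lost by the early return. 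Simplicity of $\DSw$ guarantees that distinct factor combinations yield distinct outputs of size equal to the sum of the factor sizes, so the cross product is enumerated without repetition.

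The main obstacle is establishing the output-linear delay, which I would prove by structural induction on the DAG $G_{\DSw}$, well-founded by acyclicity. The inductive hypothesis states that $\mathsf{Enum}(\n, \ipos, w)$ either terminates in $O(1)$ (when its windowed output is empty, detected by the $\dsmaxstart$ check) or emits each valuation $\nu$ with $O(|\nu|)$ work between the surrounding $\#$ separators, including the first one (so that no preprocessing is needed). The union case is immediate: the constant-time pruning ensures that the recursive calls on $\dsleft(\n)$ and $\dsright(\n)$ never initiate dead descents, so the $O(1)$ overhead of these calls is charged to the next emitted valuation. The product case is the delicate point. When we advance the cross-product enumerator, we first advance the innermost iterator; by induction this costs $O(|\nu'|)$ for the new factor value $\nu'$, and by simplicity this $|\nu'|$ is entirely accounted for in the size of the new overall output. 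When an inner iterator is exhausted, it is reset (restarting its recursion, which by induction costs $O(|\nu'_{\text{first}}|)$) and the next outer iterator is advanced; each such reset is again charged to the newly produced output, whose size under simplicity must include the restarted factor value. The very first valuation is produced by a single greedy descent through the nested iterators, whose cost is again $O(|\nu_1|)$ by the inductive bound, matching the output-linear delay requirement with no preprocessing phase.
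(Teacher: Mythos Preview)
Your proposal is correct and follows essentially the same approach as the paper's proof: a recursive procedure that uses the $\dsmaxstart$ check together with the heap invariant~(\ref{eq:maxstart}) to prune empty subtrees in constant time, enumerates $\dssemprod{\n}_\ipos^w$ as a cross product over the $\dsprod$-children, and then recurses on $\dsleft(\n)$ and $\dsright(\n)$. Your write-up is in fact more careful than the paper's sketch: you make the factorization identity $\dssemprod{\n}_\ipos^w = \bleft \nu_{L(\n),i(\n)}\bright \valop \bigvalop_{\n'} \dssem{\n'}_\ipos^w$ explicit, you spell out why the heap condition makes pruning sound, and you give a genuine inductive charging argument for the delay in the cross-product case (using simplicity to ensure that reset and advance costs are absorbed by the size of the next emitted valuation), whereas the paper only gestures at this.
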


We require two procedures, called \emph{extend} and \emph{union}, for operating nodes in our algorithm. The first procedure $\dsextend(L,i,\nset)$ receives as input a set $L \subseteq \Omega$, a position $i \in \nat$, and $\nset \subseteq \dsnodesw$ such that $\dspos(\n) < i$ for every $\n \in \nset$. The procedure outputs a fresh node $\n_e$ such that $\dssem{\n_e}^w_\ipos := \bleft\nu_{L, i} \bright  \valop \bigvalop_{\n \in \nset} \dssem{\n}^w_\ipos$. By the construction of $\DSw$, this operation is straightforward to implement by defining $\dslabels(\n_e) = L$, $\dspos(\n_e) = i$, $\dsprod(\n_e) = \nset$, and $\dsleft(\n_e) = \dsright(\n_e)=\dsnull$. Further, we can compute $\dsmaxstart(\n_e)$ from the set $\nset$ as follows: $\dsmaxstart(\n_e) = \min\{i, \min\{ \dsmaxstart(\n) \mid \n \in \nset\}\}$. Overall, we can implement $\dsextend(L,i,\nset)$ with running time $\cO(|\nset|)$. 

The second procedure $\dsunion(\n_1, \n_2)$ receives as inputs two nodes $\n_1, \n_2 \in \dsnodesw$ such that $\max(\n_1) \leq \dspos(\n_2)$ and $\dsleft(\n_2) = \dsright(\n_2) = \bot$. It outputs a fresh node $\n_u$ such that $\dssem{\n_u}^w_\ipos := \dssem{\n_1}^w_\ipos \cup \dssem{\n_2}^w_\ipos$. The implementation of this procedure is more involved since it requires inserting $\n_2$ into $\n_1$ by using $\dsleft(\n_1)$ and $\dsright(\n_1)$, and maintaining condition~(\ref{eq:maxstart}). Furthermore, we require them to be fully persistent~\cite{driscoll1986making}, namely, $\n_1$ and $\n_2$ are unmodified after each operation.  
\begin{proposition}\label{prop:union}
	Let $k \in \nat$ and assume that one performs $\dsunion(\n_1, \n_2)$ over $\DSw$ with the same position $i = \dspos(\n_2)$ at most $k$ times. Then one can implement $\dsunion(\n_1, \n_2)$ with running time $\cO(\log(k\cdot w))$ per call. 
\end{proposition}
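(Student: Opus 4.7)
The plan is to implement the sub-DAG of $\DSw$ formed by the $\dsleft$ and $\dsright$ pointers as a \emph{persistent meldable max-heap} keyed by $\dsmaxstart$, so that each call $\dsunion(\n_1,\n_2)$ reduces to a persistent insertion of the singleton $\n_2$ into the heap rooted at $\n_1$. A convenient concrete instantiation is a persistent leftist tree: its meld already constructs the result by descending along one spine, creating only $\cO(\log N)$ fresh internal nodes along the way while sharing the untouched subtrees with the input heaps. This sharing is exactly what we need for full persistence of $\n_1$ and $\n_2$ via path copying.

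Concretely I would proceed in three steps. First, I augment each node $\n$ with the classical leftist rank (length of the shortest right spine to a $\dsnull$ descendant); the existing fields $\dsmaxstart$, $\dslabels$, $\dspos$, $\dsprod$ remain unchanged. Second, I implement the recursive meld: if one side is $\dsnull$ return the other; otherwise compare the two $\dsmaxstart$ values, take a fresh copy of the node with the larger value as the new root, recursively meld its right subtree with the other heap, and swap the children of the fresh root whenever the leftist rank invariant is violated. Because we only allocate on the recursive spine, the heap property~(\ref{eq:maxstart}) is preserved, the inputs are unmodified, and the output $\n_u$ inherits its $\dslabels$, $\dspos$, and $\dsprod$ directly from whichever of $\n_1$ or $\n_2$ supplied the new root, giving $\dssem{\n_u} = \dssem{\n_1}\cup\dssem{\n_2}$ by a one-step unfolding of $\dssem{\cdot}$.

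The quantitative step is bounding the effective heap size by $\cO(kw)$. I prune lazily inside the meld: whenever the recursion is about to descend into a node $\n$ with $\dsmaxstart(\n) < \dspos(\n_2) - w$, I substitute $\dsnull$ in the fresh copy. This is sound because the heap invariant~(\ref{eq:maxstart}) forces the entire subtree rooted at $\n$ to satisfy $\dsmaxstart \leq \dsmaxstart(\n) < \dspos(\n_2) - w$, so it contributes only valuations $\nu$ with $\min(\nu) < \dspos(\n_2) - w$, which are out of the sliding window for every current or future position $\ipos \geq \dspos(\n_2)$. After this pruning, every surviving node in the melded heap was created by some earlier $\dsunion$ call whose position lies in $[\dspos(\n_2)-w,\ \dspos(\n_2)]$; by hypothesis at most $k$ such calls occur per position, so $N \leq k(w+1) \in \cO(kw)$, and a persistent leftist meld of size $N$ runs in $\cO(\log N) = \cO(\log(kw))$ time.

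The main obstacle will be reconciling three simultaneous requirements: persistence (so that $\n_1$ and $\n_2$ remain intact for later operations and for previously returned nodes that still refer to them), the heap invariant~(\ref{eq:maxstart}) (which Theorem~\ref{theo:enumeration}'s enumerator uses to skip out-of-window subtrees in constant time), and the lazy pruning that delivers the $\cO(kw)$ size bound. The risk is that pruning could discard a subtree that some older, still-live node depends on, but persistence rules this out: pruning only affects the freshly allocated spine, while the original versions of $\n_1$ and of every previously returned node retain their full, unpruned subtrees. Once these three invariants are shown to coexist, the stated $\cO(\log(kw))$ bound follows from the standard analysis of persistent leftist meld.
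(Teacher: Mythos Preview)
Your overall strategy—implementing the $\dsleft/\dsright$ structure as a persistent meldable max-heap keyed on $\dsmaxstart$, with path copying for persistence—is a natural one, and in spirit close to the paper's own construction. Where the two diverge is in \emph{how} the heap size is kept bounded by $\cO(kw)$, and this is precisely where your argument has a gap.

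You claim that after lazy pruning inside the meld, ``every surviving node in the melded heap was created by some earlier $\dsunion$ call whose position lies in $[\dspos(\n_2)-w,\ \dspos(\n_2)]$,'' and conclude $N \leq k(w+1)$. But your pruning rule fires only when the meld recursion is about to descend into a node, and leftist meld descends only along the \emph{right spine}. Out-of-window nodes sitting in left subtrees off that spine are never visited and therefore never pruned; they remain in the resulting heap. A simple instance: if every newly inserted $\n_2$ has the largest $\dsmaxstart$ so far, each meld makes $\n_2$ the root with the entire old heap as its left child, the recursion never descends, nothing is ever pruned, and the heap grows without bound. In that particular instance the rank happens to stay $1$, so meld is cheap anyway—but you have given no argument that the rank stays $\cO(\log(kw))$ in general, and the leftist bound $\text{rank}\leq\log_2 N$ is useless once $N$ is unbounded.

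The paper sidesteps this problem by a different mechanism. Rather than relying on the meld path to discover stale nodes, it keeps the tree perfectly balanced using a \emph{direction bit} at each node that flips after every traversal; insertion follows these bits from the root to a leaf. Once the tree has filled to $kw$ nodes, the next leaf reached is guaranteed to be the node inserted roughly $kw$ operations ago, hence at a position outside the current window, and it is simply overwritten. This keeps the tree size at exactly $kw$ and the depth at $\log(kw)$ by construction, with path copying along the insertion path giving persistence. To repair your argument you would need either an analogous explicit eviction scheme compatible with the leftist structure, or a direct bound on the \emph{rank} (not the size) of the heap under your pruning rule; neither is provided.
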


\begin{algorithm}[t]
	\caption{Evaluation of an unambiguous \acrocea $\ACH = (Q, \uncq, \bincq, \Omega, \Delta, F)$ with equality predicates over a stream $\Stream$ under a sliding window of size $w$.}\label{alg:parallel-eval}
		\smallskip
		\begin{varwidth}[t]{0.5\textwidth}
			\begin{algorithmic}[1]
				\Procedure{{Evaluation}}{$\ACH, w, \Stream$}
				\State $\DSw \gets \emptyset$
				\State $\ipos \gets -1$
				\While{$t \gets \yield{\Stream}$}
				\State \Call{Reset}{$\blank$}
				\State \Call{FireTransitions}{$t, i$}
				\State \Call{UpdateIndices}{$t, \ipos$}
				\ForEach{$\n \in \bigcup_{p\in F} \nsetq{p}$}
				\State \Call{Enumerate}{$\n, \ipos, w$}
				\EndFor
				\EndWhile
				\EndProcedure
				\State
				\Procedure{Reset}{\blank}
				\State $\ipos \gets i+1$
				\ForEach{$p \in Q$}
				\State $\nsetq{p} \gets \emptyset$
				\EndFor
				\EndProcedure
				\algstore{myalg}
			\end{algorithmic}
		\end{varwidth} \hfill
		\begin{varwidth}[t]{0.6\textwidth}
			\begin{algorithmic}[1]
				\algrestore{myalg}
				\Procedure{{FireTransitions}}{$t, \ipos$}
				\ForEach{$e = (P, U, \binfunc, L, q) \in \Delta$}
				\If{$t \in U \wedge \bigwedge_{p \in P} \htable{e}{p}{\rightbinfunc{\binfunc}_p(t)} \neq \emptyset$}
				\State $\nset \gets \{\, \htable{e}{p}{\rightbinfunc{\binfunc}_p(t)} \mid p \in P \,\}$
				\State $\nsetq{q} \gets \nsetq{q} \cup \{\dsextend(L,\ipos,\nset)\}$
				\EndIf
				\EndFor
				\EndProcedure
				\State
				\Procedure{{UpdateIndices}}{$t$}
				\ForEach{$ e = (P, U, \binfunc, L, q) \in \Delta$}
				\ForEach{$p \in P \wedge \n \in \nsetq{p}$}
				\If{$\htable{e}{p}{\leftbinfunc{\binfunc}_p(t)} = \emptyset$}
				\State $\htable{e}{p}{\leftbinfunc{\binfunc}_p(t)} \gets \n$
				\Else
				\State $\n' \gets \htable{e}{p}{\leftbinfunc{\binfunc}_p(t)}$
				\State $\htable{e}{p}{\leftbinfunc{\binfunc}_p(t)} \gets \dsunion(\n', \n)$
				\EndIf 
				\EndFor
				\EndFor
				\EndProcedure
			\end{algorithmic}
		\end{varwidth} 
	\end{algorithm}

\paragraph{The streaming evaluation algorithm} In Algorithm~\ref{alg:parallel-eval}, we present the main procedures of the evaluation algorithm given a fixed schema $\sigma$. The algorithm receives as input a \acrocea $\ACH = (Q, \uncq, \bincq, \Omega, \Delta, F)$ over $\sigma$, a window size $w \in \bbN$, and a reference to a stream $\Stream$. We assume that these inputs are globally accessible by all procedures. Recall that we can test if $t \in U$ in linear time for any $U \in \uncq$. Further, recall that $\bincq$ are equality predicates and, for every $B \in \bincq$, there exists linear time computable partial functions $\leftbinfunc{B}$ and $\rightbinfunc{B}$ such that $(t_1, t_2) \in B$ iff $\leftbinfunc{B}(t_1)$ and $\rightbinfunc{B}(t_2)$ are defined and $\leftbinfunc{B}(t_1) = \rightbinfunc{B}(t_2)$, for every $t_1,t_2 \in \tuples[\Schema]$.

For the algorithm, we require some data structures. First, we use the previously described data structure $\DSw$ and its nodes $\dsnodesw$. Second, we consider a \emph{look-up table} $\htablesym$ that maps triples of the form $(e, p, d)$ to nodes in $\dsnodesw$ where $e \in \Delta$, $p\in Q$, and $d$ is the output of any partial function $\leftbinfunc{B}$ or $\rightbinfunc{B}$. We write $\htable{e}{p}{d}$ for accessing its node, and $\htable{e}{p}{d} \gets \n$ for updating a node $\n$ at entry $(e,p,d)$. Also, we write $\htable{e}{p}{d} = \emptyset$ or  $\htable{e}{p}{d} \neq \emptyset$ for checking whether there is a node or not at entry $(e, p,d)$. We assume all entries are empty at the beginning. Intuitively, for $e = (P, U, \binfunc, L, q) \in \Delta$ and $p \in P$, we use $\htable{e}{p}{\cdot}$ to check if the equality predicate $\binfunc_p$ is satisfied or not (here $\binfunc_p = \binfunc(p)$). As it is standard in the literature~\cite{CQUpdates, IdrisUV17} (i.e., by adopting the RAM model), we assume that each operation over look-up tables takes constant time. Finally, we assume a set of nodes $\nsetq{p}$ for each $p \in Q$ whose use will be clear later. 

Algorithm~\ref{alg:parallel-eval} starts at the main procedure $\textsc{Evaluation}$. It initializes the data structure $\DSw$ to empty (i.e., the only node it has is the special node $\dsnull$) and the index $\ipos$ for keeping the current position in the stream (lines 2-3). Then, the algorithm loops by reading the next tuple $\yield{\Stream}$, performs the update phase (lines 5-7), followed by the enumeration phase (lines 8-9), and repeats the process over again. Next, we explain the update phase and enumeration phase separately.

The update phase is composed of three steps, encoded as procedures. The first one, $\textsc{Reset}$, is in charge of starting a new iteration by updating $\ipos$ to the next position and emptying the sets $\nsetq{p}$ (lines 12-14). The second step, $\textsc{FireTransitions}$, uses the new tuple $t$ to fire all transitions $e = (P, U, \binfunc, L, q) \in \Delta$ of $\ACH$ (lines 16-19). We do this by checking if $t$ satisfies $U$ and all equality predicates $\{\binfunc_p\}_{p\in P}$ (line 17). The main intuition is that the algorithm stores partial runs in the look-up table $\htablesym$, whose outputs are represented by nodes in $\DSw$. Then the call $\htable{e}{p}{\rightbinfunc{\binfunc}_p(t)}$ is used to verify the equality $\leftbinfunc{\binfunc}_p(t') = \rightbinfunc{\binfunc}_p(t)$ for some previous tuple $t'$. Furthermore, if $\htable{e}{p}{\rightbinfunc{\binfunc}_p(t)}$ is non-empty, it contains the node that represents all runs that have reached $p$. If $U$ and all predicates $\{\binfunc_p\}_{p\in P}$  are satisfied, we collect all nodes at states $P$ in the set $\nset$ (line 18), and symbolically extend these runs by using the method $\dsextend(L,\ipos,\nset)$ of $\DSw$. We collect the output node of $\dsextend$ in the set $\nsetq{q}$ for use in the next procedure \textsc{UpdateIndices}. 

The last step of the update phase, \textsc{UpdateIndices}, is to update the look-up table $\htablesym$ by using $t$ and the nodes stored at the sets $\{\nsetq{p}\}_{p \in Q}$ (lines 22-28). Intuitively, the nodes in $\nsetq{p}$ represent new runs (i.e., valuations) that reached state $p$ when reading $t$. Then, for every transition $e = (P, U, \binfunc, L, q) \in \Delta$ such that $p \in P$, we want to update the entry $(e,p,\leftbinfunc{\binfunc}_p(t))$ of $\htablesym$ with the nodes from $\nsetq{p}$, to be ready to be fired for future tuples. 
For this goal, we check each $\n \in \nsetq{p}$ and, if $\htable{e}{p}{\leftbinfunc{\binfunc}_p(t)}$ is empty, we just place $\n$ at the entry $(e,p,\leftbinfunc{\binfunc}_p(t))$ (lines~23-25). Otherwise, we use the union operator of $\DSw$, to combine the previous outputs with the new ones of $\n$ (lines 26-28). Note that the call to $\dsunion(\n', \n)$ satisfies the requirements of this operator, given that $\n$ was created recently.

Based on the previous description, the enumeration phase is straightforward. Given that the nodes in $\{\nsetq{p}\}_{p \in Q}$ represent new runs at the last position, $\bigcup_{p\in F} \nsetq{p}$ are all new runs that reached some final state. Then, for each node $\n \in \bigcup_{p\in F} \nsetq{p}$ we call the procedure $\textsc{Enumerate}(\n, \ipos, w)$ that enumerates all valuations in $\dssem{\n}^w_i$. %
Theorem~\ref{theo:enumeration} shows that this method exists with the desired guarantees given that $\ACH$ is unambiguous which implies that $\DSw$ is simple. Further, runs correspond with valuations, namely, $\dssem{\n}^w_i$ is a set, and, thus, we enumerate the outputs without repetitions. 

\begin{proposition}\label{prop:algo-correctness}
	For every unambiguous \acrocea $\ACH$ with equality predicates, $w \in \bbN$, stream $\Stream$, and position~$\ipos \in \bbN$, Algorithm~\ref{alg:parallel-eval} enumerates all valuations $\sem{\ACH}_\ipos^w(\Stream)$ without repetitions.
\end{proposition}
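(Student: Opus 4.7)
The plan is to proceed by induction on the position $\ipos$, maintaining two invariants that together pin down what the data structures encode after processing the prefix $t_0 \ldots t_\ipos$. The first invariant concerns the look-up table: for every transition $e = (P, U, \binfunc, L, q) \in \Delta$, every $p \in P$, and every data value $d$, if $\htable{e}{p}{d} = \n$, then $\dssem{\n}$ is the bag of valuations $\nu_\rho$ of all runs $\rho$ of $\ACH$ ending with root state $p$ at some position $j \leq \ipos$ such that $\leftbinfunc{\binfunc}_p(t_j) = d$. The second invariant concerns the sets $\nsetq{q}$: after \textsc{FireTransitions} finishes at iteration $\ipos$, the union $\bigcup_{\n \in \nsetq{q}} \dssem{\n}$ equals the bag of valuations of runs whose root configuration sits at state $q$ and position exactly $\ipos$.

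The base case is immediate since $\DSw$ starts empty and no runs exist. For the inductive step, one shows that \textsc{FireTransitions} correctly synthesizes the new runs at position $\ipos$: for each transition $e = (P, U, \binfunc, L, q)$ with $t_\ipos \in U$, the condition $\htable{e}{p}{\rightbinfunc{\binfunc}_p(t_\ipos)} \neq \emptyset$ is equivalent, by Invariant~1 and the definition of equality predicates, to the existence of at least one witness run at state $p$ joining with $t_\ipos$ through $\binfunc_p$. The call to $\dsextend(L,\ipos,\nset)$ then constructs a node whose semantics is exactly $\bleft\nu_{L,\ipos}\bright \valop \bigvalop_{p\in P} \dssem{\htable{e}{p}{\rightbinfunc{\binfunc}_p(t_\ipos)}}$, which by the definition of a run tree matches the valuations of runs with transition $e$ used as the root. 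This establishes Invariant~2 at iteration $\ipos$. Then \textsc{UpdateIndices} preserves Invariant~1: the new nodes in $\nsetq{p}$ are inserted into the appropriate entry of $\htablesym$, either directly (when the entry is empty) or merged via $\dsunion$, and by the semantic equation $\dssem{\dsunion(\n_1,\n_2)} = \dssem{\n_1} \cup \dssem{\n_2}$ the bag of valuations at that entry is updated correctly.

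With the invariants in hand, soundness and completeness follow: when $q \in F$ and $\n \in \nsetq{q}$, the bag $\dssem{\n}^w_\ipos$ is exactly the valuations of accepting runs whose root configuration is at position $\ipos$ and whose minimum position lies within $w$ of $\ipos$. Taking the union over $q \in F$ recovers precisely $\sem{\ACH}_\ipos^w(\Stream)$, since every accepting run at position $\ipos$ is fresh in this iteration (no earlier iteration could have produced it) and old runs have already been enumerated in their own iterations. For the no-repetition claim, the two conditions of unambiguity come into play: condition~(1) forces every accepting run to be simple, which propagates into simplicity of $\DSw$ and lets us invoke Theorem~\ref{theo:enumeration} to enumerate without duplicates within a single node; condition~(2) ensures that distinct runs yield distinct valuations, so the bag $\dssem{\n}^w_\ipos$ is in fact a set and the enumeration avoids repeating the same valuation through different runs, even across different nodes in $\bigcup_{q\in F} \nsetq{q}$.

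The main obstacle is formalizing Invariant~1 precisely enough to support the inductive step across the two procedures, since $\htablesym$ evolves through many $\dsunion$ operations whose effect on $\dssem{\cdot}$ must be tracked carefully; in particular, one has to verify that the preconditions of $\dsunion$ (namely $\max(\n_1) \leq \dspos(\n_2)$ and $\dsleft(\n_2)=\dsright(\n_2)=\dsnull$) are always met by the algorithm, which follows because $\n_2$ is freshly produced by $\dsextend$ at the current position and all nodes already in $\htable{e}{p}{\cdot}$ were created at strictly earlier iterations.
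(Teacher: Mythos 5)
Your proposal is correct and takes essentially the same approach as the paper's proof: induction on the stream position with the invariant that the look-up table $\htablesym$ (via nodes of $\DSw$) encodes all partial runs seen so far, with \textsc{FireTransitions}/\textsc{UpdateIndices} preserving it, and the two unambiguity conditions yielding simplicity of $\DSw$ (hence Theorem~\ref{theo:enumeration}) together with a run--valuation correspondence for duplicate-free enumeration. The only adjustment needed is to phrase your Invariant~1 in terms of the windowed semantics $\dssem{\n}^w_\ipos$ rather than the full $\dssem{\n}$, because $\dsunion$ only guarantees $\dssem{\n_u}^w_\ipos = \dssem{\n_1}^w_\ipos \cup \dssem{\n_2}^w_\ipos$ and its implementation may discard valuations that have fallen outside the window; this weaker invariant still suffices for your final soundness step, which only concerns $\sem{\ACH}_\ipos^w(\Stream)$.
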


We end by discussing the update time of Algorithm~\ref{alg:parallel-eval}. By inspection, one can check that we performed a linear pass over $\Delta$  during the update phase, where each iteration takes linear time over each transition. Overall, we made at most $\cO(|\ACH|)$ calls to unary predicates, the look-up table, or the data structure $\DSw$. Each call to a unary predicate takes $\cO(|t|)$-time and, thus, at most $\cO(|\ACH|\cdot |t|)$-time in total. The operations to the look-up table or $\dsextend$ take constant time. Instead, we performed at most $\cO(|\ACH|)$ unions over the same position $\ipos$. By Proposition~\ref{prop:union}, each $\dsunion$ takes time $\cO(\log(|\ACH|\cdot w))$. Summing up, the updating time %
is $\cO(|\ACH|\cdot|t| + |\ACH|\cdot\log(|\ACH|) + |\ACH|\cdot\log(w))$.

	\section{Future work}\label{sec:conclusions}

We present an automata model for CER that \changed{expresses} HCQ and can be evaluated in a streaming fashion under a sliding window with a logarithmic update and output-linear delay. These results achieve the primary goal of this paper but leave several directions for future work. First, defining a query language that characterizes the expressive power of \acrocea will be interesting. 
Second, one would like to understand a disambiguation procedure to convert any \acrocea into an unambiguous \acrocea or to decide when this is possible. Last, we study here algorithms for \acrocea with equality predicates, but the model works for any binary predicate. Then, it would be interesting to understand for which other predicates (e.g., inequalities) the model still admits efficient streaming evaluation.

	\bibliographystyle{abbrv}
	\bibliography{biblio}
	
	\newpage
	\appendix
	
	\section{Proofs of Section~\ref{sec:pcea}} \label{sec:app-pcea}

\subsection*{Proof of Proposition~\ref{prop:pfa-are-regular}}

\begin{proof}
	To prove this statement, we follow the same principle used in the subset construction. To simulate all possible run trees of a \aacro with a DFA, we start at the leaves, with all initial states. Then for each symbol we move up on the tree, firing all transitions that used a subset of the current set of states. At the end of the string, if the last set has a final state, then it means that one can construct a run tree that accepts the input.

	Let $\cP = (Q, \Sigma, \Delta, I, F)$ be a \anamelc. We build the DFA $\cA = (2^Q, \Sigma, \delta, I, F')$ such that $F' = \{P \mid P \cap F \neq \emptyset\}$ and $\delta(P, a) = \{q \mid \exists P' \subseteq P. \, (P', a, q) \in \Delta)\}$ for every $P \subseteq Q$ and $a \in \Sigma$. We now prove that both automata define the same language.

	\paragraph{$\cL(\cP) \subseteq \cL(\cA)$}
	Let $\bar{s}=a_1\ldots a_n \in \Sigmastar$ be a string such that $\bar{s} \in \cL(\cP)$ and let $\tau: t \rightarrow Q$ be an accepting run tree of $\cP$ over $\bar{s}$. We need to prove that the run $\rho: S_n \xrightarrow{a_1} S_{n-1} \xrightarrow{a_2} \ldots \xrightarrow{a_n} S_0$ is an accepting run of $\cA$ over $\bar{s}$, i.e. $S_n \in F'$.
	To this end, we define $L_i = \{\tau(\bar{u}) \mid \depth{\tau}{\bar{u}} = i\}$ as the set of states labeling $\tau$ at depth $i$ and prove that $L_i \subseteq S_i$ for all $0\leq i \leq n$. Since $L_0 = \{\tau(\varepsilon)\}$, this in return means that $S_n \cap F \neq \emptyset$ and $S_n \in F'$.

	For every leaf node $\bar{u}$ it holds that $\depth{\tau}{\bar{u}}=n$ and $\tau(\bar{u})\in I$, meaning $L_n \subseteq S_n = I$. Let us assume that $L_{i-1} \subseteq S_{i-1}$; for every inner node $\bar{v}$ at depth $i$ there must be a transition $(P, a_{n-i}, q) \in \Delta$ such that $\tau(\bar{v}) = q$ and $P = \{\tau(\bar{u}) \mid \bar{u} \in \children{\tau}{\bar{v}}\}$. Following the definition of $\delta$, it is clear that $q \in \delta(P, a)$, and since this is true for every node at depth $i$, we have that $L_i \subseteq S_i$.

	Given that $L_0 \subseteq S_0$, we know that $S_n \in F'$, which means that $\rho$ is an accepting run of $\cA$ over $\bar{s}$ and therefore $\cL(\cP) \subseteq \cL(\cA)$.

	\paragraph{$\cL(\cA) \subseteq \cL(\cP)$}
	Let $\bar{s}=a_1\ldots a_n \in \Sigmastar$ be a string such that $\bar{s} \in \cL(\cA)$ and let $\rho: S_n \xrightarrow{a_1} S_{n-1} \xrightarrow{a_2} \ldots \xrightarrow{a_n} S_0$ be the run of $\cA$ over $\bar{s}$. We can now construct a run tree of $\cP$ over $\bar{s}$.

	Since $\rho$ is an accepting run, we know that $S_0 \cap F \leq \emptyset$. We define $\tau: t\rightarrow Q$ such that $\tau(\varepsilon) = f$ with $f \in S_0 \cap F$. If we consider a node $\overline{v} \in t$ at depth $i$, such that $\tau(\bar{v}) = q$ and $q\in S_i$, we can follow the definition of $\delta$, and inductively add nodes to $\tau$ according to the transition $(P, a_{n-i}, q) \in \Delta$ so that $|\children{\tau}{\bar{v}}| = |P|$ and $P = \{\tau(\bar{u}) \mid \bar{u} \in \children{\tau}{\bar{v}}\}$. For every leaf node $\bar{v}$ it holds that $\depth{\tau}{\bar{v}} = n$ and since $S_n = I$ all of them will be labeled by initial states.

	The labeled tree $\tau$ we just constructed is an accepting run of $\cP$ over $\bar{s}$, meaning $\cL(\cA) \subseteq \cL(\cP)$ and, therefore, $\cL(\cP) = \cL(\cA)$.
\end{proof}

\subsection*{Proof of Proposition~\ref{prop:pcea-expressive}}
\begin{proof}
	To prove this statement we just need to find a \anamecea $\ACH$ with no CCEA equivalent, i.e. there is no CCEA $\cC$ such that $\sem{\ACH}(\Stream) = \sem{\cC}(\Stream)$ for every stream~$\Stream$.
	\changed{Let $\ACH$ be the \acrocea represented in Figure~\ref{fig:ex-proof}, then }$\ACH = \big(Q, \un, \bin, \Omega, \Delta, F \big)$, with $Q = \{R(x, y), S(x, y), T(x), x, y\}$, $\Omega = \{R, S, T\}$, $F=\{x\}$ and:
	\[
	\begin{array}{rcl}
	\Delta & = & \big\{ (\emptyset, U_{R(x, y)}, \emptyset, \{R(x, y)\}, R(x, y)), \\
		& &(\emptyset, U_{S(x, y)}, \emptyset, \{S(x, y)\}, S(x, y)), \\
		& &(\emptyset, U_{T(x)}, \emptyset, \{T(x)\}, T(x)), \\
		& &(\{R(x, y), T(x)\}, U_{S(x, y)}, \{(R(x, y), B_{R(x, y), S(x, y)}), (T(x), B_{T(x), S(x, y)})\}, \{S(x, y)\}, x), \\
		& &(\{S(x, y), T(x)\}, U_{R(x, y)}, \{(S(x, y), B_{S(x, y), R(x, y)}), (T(x),B_{T(x), R(x, y)})\}, \{R(x, y)\}, x),  \\   
		& &(\{R(x, y)\}, U_{S(x, y)}, \{(R(x, y), B_{R(x, y), S(x, y)})\}, \{S(x, y)\}, y), \\
		& &(\{S(x, y)\}, U_{R(x, y)}, \{(S(x, y), B_{S(x, y), R(x, y)})\}, \{R(x, y)\}, y), \\
		& &(\{y\}, U_{T(x)}, \{(y, B_{y, T(x, y)})\}, \{T(x)\}, x) \big\}
	\end{array}
	\]
	with the predicates $U_{R(\bar{x})}$ and $B_{R(\bar{x}), S(\bar{y})}$ defined as:
	\[
		U_{R(\bar{x})} \ := \  \{R(\bar{a}) \in \tuples[\Schema] \mid \exists h \in \Hom.\; h(R(\bar{x})) = R(\bar{a})\}
	\]
	and:
	\[
		B_{R(\bar{x}), S(\bar{y})} \ := \ \{(R(\bar{a}), S(\bar{b})) \mid \exists h \in \Hom.\; h(R(\bar{x})) = R(\bar{a}) \wedge h(S(\bar{y})) = S(\bar{b})\}.
	\]

	Let $\Stream_i = \bleft R(0, i), T(0), S(0, i), \ldots \bright$ be a family of streams over the set of data values $\Data=\nat$ with $i \in \nat$. It is clear that the valuation $\bleft 0, 1, 2\bright \in \sem{\ACH}(\Stream_i)$ for every $i \in \nat$.
	Let $\cC = (Q', \un', \bin', \Omega', \Delta', I', F')$ be a deterministic CCEA such that $\sem{\CHCEA}(\Stream_i) = \sem{\ACH}(\Stream_i)$ for every $i \in \nat$. This means that for every stream $\Stream_i$, there is an accepting run of $\CHCEA$ over of the form $\rho_i: q_{i, 0} \xrightarrow{R(0, i)} q_{i, 1} \xrightarrow{T(0)} q_{i, 2} \xrightarrow{S(0, i)} q_{i, 3}$.

	Since $\cC$ has a finite number of states, we know that there must be two streams, $\Stream_j$ and $\Stream_k$ with $j \neq k$ with accepting runs $\rho_j: q_{j, 0} \xrightarrow{R(0, j)} q_{j, 1} \xrightarrow{T(0)} q_{j, 2} \xrightarrow{S(0, j)} q_{j, 3}$ and $\rho_k: q_{k, 0} \xrightarrow{R(0, k)} q_{k, 1} \xrightarrow{T(0)} q_{k, 2} \xrightarrow{S(0, k)} q_{k, 3}$, respectively, such that $q_{j, i} = q_{k, i}$ for every $0 \leq i \leq 3$.

	Given the run $\rho_k$ of $\cC$, we know that there must be a transition $(q_{k, 2}, U, B, \omega, q_{k, 3})\in \Delta'$ such that $S(0, k) \in U$ and $(T(0), S(0, k)) \in B$ and since $q_{k, 2} = q_{j, 2}$ and $q_{k, 3} = q_{j, 3}$ the following will be an accepting run of $\cC$ over the stream $S_{j, k} = \{\{ R(0, j), T(0), S(0, k)\}\}$: $\rho_{j, k}: q_{j, 0} \xrightarrow{R(0, j)} q_{j, 1} \xrightarrow{T(0)} q_{j, 2} \xrightarrow{S(0, k)} q_{j, 3}$.

	We can easily check that there are no accepting runs of $\ACH$ over $\Stream_{j, k}$, meaning $\sem{\ACH}(\Stream_{j, k}) \neq \sem{\cC}(\Stream_{j, k})$ and therefore there is no CCEA $\cC$ such that $\sem{\ACH}(\Stream) = \sem{\cC}(\Stream)$ for every stream~$\Stream$.
\end{proof} 	
	\section{Proofs of Section~\ref{sec:hierarchical}} \label{sec:app-hierarchical}
\subsection*{Proof of equivalence between CQ bag-semantics}
	Fix a schema $\sigma$, a set of data values $\Data$ and a CQ $Q$ over $\sigma$ of the form:
	\[
		Q(\bar{x}) \ \leftarrow \ R_0(\bar{x}_0), \ldots, R_{m-1}(\bar{x}_{m-1})
	\]
	To prove that $\sem{Q}(D) = \semaux{Q}(D)$ we need to prove that $\sem{Q}(D) \subseteq \semaux{Q}(D)$ and $\semaux{Q}(D) \subseteq \sem{Q}(D)$, where $\semaux{Q}(D) \subseteq \sem{Q}(D)$ if $U(\semaux{Q}(D)) \subseteq U(\sem{Q}(D))$ and \changed{for every $a$ in $I(\semaux{Q}(D))$, $\mult{\semaux{Q}(D)}{a} \leq \mult{\sem{Q}(D)}{a}$.
	
	Both $\sem{Q}(D)$ and $\semaux{Q}(D)$ map every atom of $Q$ to the database $D$, meaning $U(\semaux{Q}(D)) = U(\sem{Q}(D))$, so now for every $Q(\bar{a})$ in $I(\semaux{Q}(D))$ we just need to prove that for every $a$ in $I(\semaux{Q}(D))$, $\mult{\semaux{Q}(D)}{Q(\bar{a})} = \mult{\sem{Q}(D)}{a}$.}

	By following the definitions given previously for the multiplicities we get:
	\[
	\begin{aligned}
		\mult{\sem{Q}(D)}{Q(\bar{a})}
		& = \big|\{j \mid \sem{Q}(D)(j) = Q(\bar{a})\}\big|\\
	\end{aligned}
	\]
	We also know that if $\sem{Q}(D)(j) = Q(\bar{a})$ holds, there must be a t-homomorphism $\eta$ such that $Q(h_\eta(\bar{x})) = Q(\bar{a})$. Since every t-homomorphism can map more than one atom, it is clear~that:
	\[
	\begin{aligned}
		\mult{\sem{Q}(D)}{Q(\bar{a})}
		& = \bigg\lvert\bigcup_{\substack{\eta \in \text{t-Hom}(Q, D)\,\colon\\ h_\eta(\bar{x}) = \bar{a}}} \{j \mid \sem{Q}(D)(j) = Q(h_\eta(\bar{x}))\} \bigg\rvert \\
		& = \sum_{\substack{\eta \in \text{t-Hom}(Q, D)\,\colon\\ h_\eta(\bar{x}) = \bar{a}}} \big|\{j \mid \sem{Q}(D)(j) = Q(h_\eta(\bar{x}))\}\big| \\
		& = \sum_{\substack{\eta \in \text{t-Hom}(Q, D)\,\colon\\ h_\eta(\bar{x}) = \bar{a}}} \mult{Q,D}{h_\eta} \\
	\end{aligned}
	\]
	As was stated before, there is a correspondence between t-homomorphisms and homomorphism, meaning:
	\[
	\begin{aligned}
		\mult{\semaux{Q}(D)}{Q(\bar{a})}
		& = \sum_{\substack{h \in \text{Hom}(Q, D)\,\colon\\ h(\bar{x}) = \bar{a}}} \mult{Q,D}{h} \\
		& = \mult{\semaux{Q}(D)}{Q(\bar{a})}
	\end{aligned}
	\]
	Given that $\mult{\semaux{Q}(D)}{Q(\bar{a})} = \mult{\sem{Q}(D)}{Q(\bar{a})}$ for every $Q(\bar{a})$ in $I(\semaux{Q}(D))$, \changed{this implies that $\mult{\semaux{Q}(D)}{Q(\bar{a})} = \mult{\sem{Q}(D)}{Q(\bar{a})}$ and therefore, for every CQ $Q$ and database $D$ it holds that $\sem{Q}(D) = \semaux{Q}(D)$.}

\subsection*{Proof of Theorem~\ref{theo:hierarchical-if}}

\begin{proof}
	Fix a schema $\sigma$ and fix a HCQ $Q$ over $\sigma$ of the form:
	\[
	Q(\bar{x}) \ \leftarrow \ R_0(\bar{x}_0), \ldots, R_{m-1}(\bar{x}_{m-1}).
	\]
	Without loss of generality, we assume that $Q$ has at least two atoms, i.e., $m \geq 2$. If not, it is straightforward to construct a \acrocea for $Q$. Further, for the sake of simplification, we assume that $Q$ does not have data values (i.e., constants); all the constructions below work with data values in the atoms with the additional cost of differentiating the variables from the data values in the set $\{\bar{x}_i\}$. Given that $Q$ is full, this means that $\{\bar{x}\} = \bigcup_{i = 0} \{\bar{x}_i\}$. For this reason, we will use $\{\bar{x}\}$ to refer to the set of all variables in $Q$. Recall that we usually consider $Q$ as bag of atoms, namely, $Q = \bleft  R_0(\bar{x}_0), \ldots, R_{m-1}(\bar{x}_{m-1})\bright$. Then $I(Q)$ is the set of all the identifiers $\{0, \ldots, m-1\}$ and $U(Q)$ the set of all different atoms in~$Q$. We say that $Q$ is \emph{connected}\footnote{Given that $Q$ is hierarchical, this definition is equivalent to the notion of connected CQ, i.e., that the Gaifman graph of $Q$ is connected.} iff there exists a variable $x \in \{\bar{x}\}$ such that $x \in \{\bar{x}_i\}$ for every~$i < m$.
	
	In the following, we first give the proof of the theorem for connected HCQ without self joins, then move to the case of connected HCQ with self joins, and end with the general~case.
	
	\paragraph{Connected HCQ without self joins}  A \emph{\qt} for $Q$ is a labeled tree, $\qtree: t \rightarrow I(Q) \cup \{\bar{x}\}$, where for every $x \in \{\bar{x}\}$ there is a unique inner node $\bar{u} \in t$ such that $\qtree(\bar{u}) = x$, and for every $i \in I(Q)$ there is a unique leaf node $\bar{v} \in t$ such that $\qtree(\bar{v}) = i$ and, if $\bar{u}_1, \ldots, \bar{u}_k$ are the inner nodes of the path from the root until $\bar{v}$,  then $\{\bar{x}_i\} = \{\qtree(\bar{u}_1), \ldots, \qtree(\bar{u}_k)\}$, i.e., $\{\bar{x}_i\} = \{\qtree(\bar{u}) \mid \bar{u} \in \ancst{\qtree}{\bar{v}} \wedge \bar{u} \neq \bar{v}\}$. As an example, in Figure \ref{fig-q-tree} we can see a \qt associated with $Q_1(x, y, z, v, w) \leftarrow \bleft R(x, y, z), S(x, y, v), T(x, w), U(x, y) \bright$ and two valid \qt{s} for the query $Q_2(x, y, z, v) \leftarrow \bleft R(x, y, z), R(x, y, v), U(x, y)\bright$. Note that we use the identifiers of the atoms (instead of the atom itself), so if an atom appears $n$ times in the query, there will be $n$ different leaf nodes for the same atom. 
	\begin{figure}[t]
		\centering
\begin{tikzpicture}[-,>=stealth',roundnode/.style={circle,draw,inner sep=1.2pt},squarednode/.style={rectangle,inner sep=3pt}]
	\node [squarednode] (R) at (1, 0) {$R$};
	\node [squarednode] (S) at (2, 0) {$S$};
	\node [squarednode] (U) at (0, 1) {$U$};
	\node [squarednode] (z) at (1, 1) {$z$};
	\node [squarednode] (v) at (2, 1) {$v$};
	\node [squarednode] (T) at (3, 1) {$T$};
	\node [squarednode] (y) at (1, 2) {$y$};
	\node [squarednode] (w) at (3, 2) {$w$};
	\node [squarednode] (x) at (2, 3) {$x$};
	
	\draw (x) to (y);
	\draw (x) to (w);
	\draw (y) to (U);
	\draw (y) to (z);
	\draw (y) to (v);
	\draw (w) to (T);
	\draw (z) to (R);
	\draw (v) to (S);
	
	\node [left=2cm of x] {$\tau_{Q_1}$:};
\end{tikzpicture}
\begin{tikzpicture}[-,>=stealth',roundnode/.style={circle,draw,inner sep=1.2pt},squarednode/.style={rectangle,inner sep=3pt}]
	\node [squarednode] (R) at (1, 0) {$R_{xyz}$};
	\node [squarednode] (R2) at (2, 0) {$R_{xyv}$};
	\node [squarednode] (U) at (0, 1) {$U_{xy}$};
	\node [squarednode] (z) at (1, 1) {$z$};
	\node [squarednode] (v) at (2, 1) {$v$};
	\node [squarednode] (y) at (1, 2) {$y$};
	\node [squarednode] (x) at (1, 3) {$x$};
	
	\draw (x) to (y);
	\draw (y) to (U);
	\draw (y) to (z);
	\draw (y) to (v);
	\draw (z) to (R);
	\draw (v) to (R2);

	\node [left=1cm of x] {$\tau_{Q_2}$:};
\end{tikzpicture}
\begin{tikzpicture}[-,>=stealth',roundnode/.style={circle,draw,inner sep=1.2pt},squarednode/.style={rectangle,inner sep=3pt}]
	\node [squarednode] (R) at (1, 0) {$R_{xyz}$};
	\node [squarednode] (R2) at (2, 0) {$R_{xyv}$};
	\node [squarednode] (U) at (0, 1) {$U_{xy}$};
	\node [squarednode] (z) at (1, 1) {$z$};
	\node [squarednode] (v) at (2, 1) {$v$};
	\node [squarednode] (y) at (1, 2) {$x$};
	\node [squarednode] (x) at (1, 3) {$y$};
	
	\draw (x) to (y);
	\draw (y) to (U);
	\draw (y) to (z);
	\draw (y) to (v);
	\draw (z) to (R);
	\draw (v) to (R2);

	\node [left=1cm of x] {$\tau_{Q_2}'$:};
\end{tikzpicture}

 		\caption{Example of valid \qt{s} for $Q_1$ and $Q_2$. For the sake of readability we represent each node $\bar{u}$ by the variable/atom in the underlying set of $Q$ that labels the node, $Q(\qtree(\bar{u}))$. Since we have no self joins in $Q_1$, we omit the variables when representing the atoms.}
		\label{fig-q-tree}
	\end{figure}
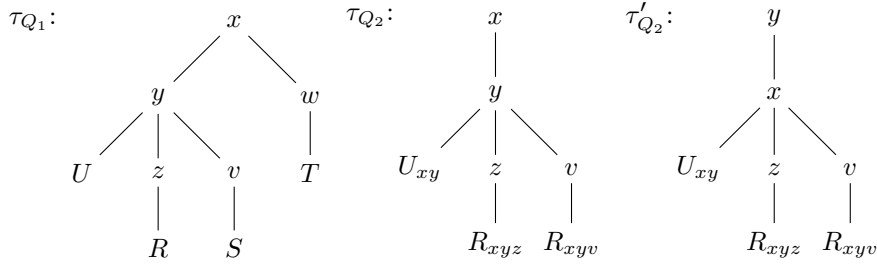
	\begin{theorem}[\cite{CQUpdates}]
		$Q$ is hierarchical and connected iff there exists a \qt for $Q$.
	\end{theorem}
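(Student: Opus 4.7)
The plan is to argue both directions constructively, exploiting the laminar-family structure that the hierarchical property enforces on the sets $\{\atoms{x} : x \in \{\bar{x}\}\}$.

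For the backward direction (a $\qt$ for $Q$ exists implies $Q$ is hierarchical and connected), I would read off both properties directly from the path condition. The root of $\qtree$ is labeled by some variable $x_0$; since it lies on the path from the root to every leaf, $x_0$ appears in $\{\bar{x}_i\}$ for every atom $i$, so $Q$ is connected. For hierarchicality, pick any two variables $y, z$ with unique inner-node images $\bar{u}_y, \bar{u}_z$ in $\qtree$. Exactly one of the following holds in a tree: $\bar{u}_y \prefix \bar{u}_z$, $\bar{u}_z \prefix \bar{u}_y$, or the subtrees rooted at them are disjoint. In the first case, every leaf in the subtree of $\bar{u}_z$ has both $y$ and $z$ on its root-path while no leaf outside the subtree of $\bar{u}_y$ has either, and by the path condition this gives $\atoms{z} \subseteq \atoms{y}$; the second case is symmetric, and the third yields $\atoms{y} \cap \atoms{z} = \emptyset$.

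For the forward direction I would first observe that the hierarchical property is exactly equivalent to the family $\mathcal{F} = \{\atoms{x} : x \in \{\bar{x}\}\}$ being laminar (as sub-bags of $Q$), and that connectedness means that there is at least one $x$ with $\atoms{x} = Q$. Laminar families admit a canonical Hasse-diagram tree whose maximum element is the root. I would then build $\qtree$ in two stages: (i) place each variable $x$ at the Hasse node associated with $\atoms{x}$, arranging variables that share the same atom-set along a chain of inner nodes so that the uniqueness condition (each variable labels exactly one inner node) is met; (ii) attach the leaf for atom $i \in I(Q)$ as a child of the deepest node whose associated sub-bag contains $i$. By construction, the inner-node labels strictly above leaf $i$ are exactly those variables $x$ with $i \in \atoms{x}$, which is precisely $\{\bar{x}_i\}$, yielding the path condition.

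The main obstacle is the bookkeeping in step (i): showing that stacking variables that share an atom-set into a chain does not break the path condition for leaves below, and that the ``disjoint or nested'' dichotomy across atom-sets genuinely produces a tree (not just a forest) once connectedness is used. The key observation that makes this work is that whenever $\atoms{y} = \atoms{z}$, the path condition cannot distinguish $y$ from $z$ at any leaf, so the chain ordering along such a group is arbitrary; and connectedness forces all maximal elements of $\mathcal{F}$ to coincide with $Q$, giving a single root rather than a forest.
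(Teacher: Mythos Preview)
The paper does not actually prove this statement: it is stated with the citation \texttt{[CQUpdates]} and used as a black box inside the proof of Theorem~\ref{theo:hierarchical-if}. There is therefore no ``paper's own proof'' to compare against.

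That said, your proposal is a correct and standard argument for the result. The backward direction is clean: the root variable witnesses connectedness, and the ancestor/descendant/incomparable trichotomy for the two inner nodes $\bar{u}_y,\bar{u}_z$ translates, via the path condition, exactly into the three hierarchical alternatives for $\atoms{y}$ and $\atoms{z}$. The forward direction via the laminar family $\{\atoms{x}:x\in\{\bar x\}\}$ is also right; the two points you flag as obstacles are genuinely the only things to verify, and your explanations handle them: variables sharing the same $\atoms{\cdot}$ can be chained in any order without affecting the set $\{\bar{x}_i\}$ seen on a root-to-leaf path, and connectedness gives a unique maximal element (all of $Q$) so the Hasse diagram is a tree rather than a forest. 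One small point worth making explicit is why ``the deepest node whose associated sub-bag contains $i$'' is well-defined: in a laminar family the sets containing a fixed element $i$ form a chain, so there is a unique minimal one, and hence a unique bottom-of-chain node to attach leaf $i$ to.
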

	In the following, we fix a \qt for $Q$ that we denote by $\qtree$. Furthermore, given that there is a one-to-one correspondence between nodes of $\qtree$ and the set $I(Q) \cup \{\bar{x}\}$, by some abuse of notation, we will usually use variables and identifiers as nodes in~$\qtree$~(e.g.,~$\desc{\qtree}{x}$ or~$\ancst{\qtree}{i}$).  
	
	We define the \emph{\cqt} of an HCQ $Q$ as the result of taking the original \qt of the query and removing all inner nodes with a single child. In other words, to generate the \cqt $\qtree^c$ from $\qtree$, we copy $\qtree$, then for each node $\bar{u} \in \qtree^c$ such that $\children{\qtree^c}{\bar{u}} = \{\bar{v}\}$, we remove $\bar{v}$ and \changed{$\qtree^c(\bar{v})$ from $\qtree^c$, and we redefine $\children{\qtree^c}{\bar{u}} = \children{\qtree^c}{\bar{v}}$}. We can see in Figure~\ref{fig-compact-tree} the \cqt{s} associated with each of the \qt{s} of Figure~\ref{fig-q-tree}. Note that for every \qt $\qtree$, the root of  $\cqtree$ is always variable and has at least two children since we assume that $Q$ has at least two atoms (and then $\qtree$ has at least two leaves). For simplification, from now on, we assume that $\qtree$ is compact. If not, all the constructions below hold verbatim by replacing $\qtree$ by $\cqtree$ and restricting to the variables that appear in $\cqtree$.
 	
	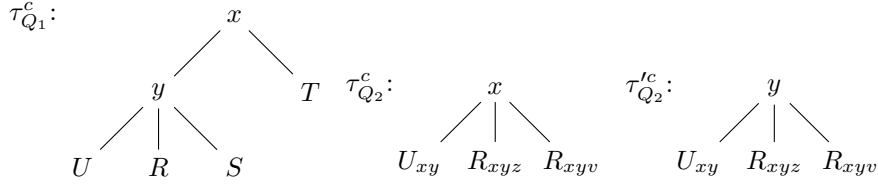
\begin{figure}[t]
		\centering
\begin{tikzpicture}[-,>=stealth',roundnode/.style={circle,draw,inner sep=1.2pt},squarednode/.style={rectangle,inner sep=3pt}]
	\node [squarednode] (U) at (1, 0) {$U$};
	\node [squarednode] (R) at (2, 0) {$R$};
	\node [squarednode] (S) at (3, 0) {$S$};
	\node [squarednode] (y) at (2, 1) {$y$};
	\node [squarednode] (T) at (4, 1) {$T$};
	\node [squarednode] (x) at (3, 2) {$x$};
	
	\draw (x) to (y);
	\draw (x) to (T);
	\draw (y) to (U);
	\draw (y) to (R);
	\draw (y) to (S);
	
	\node [left=2cm of x] {$\tau^{c}_{Q_1}$:};
\end{tikzpicture}
\begin{tikzpicture}[-,>=stealth',roundnode/.style={circle,draw,inner sep=1.2pt},squarednode/.style={rectangle,inner sep=3pt}]
	\node [squarednode] (U) at (1, 0) {$U_{xy}$};
	\node [squarednode] (R) at (2, 0) {$R_{xyz}$};
	\node [squarednode] (R2) at (3, 0) {$R_{xyv}$};
	\node [squarednode] (x) at (2, 1) {$x$};
	
	\draw (x) to (U);
	\draw (x) to (R);
	\draw (x) to (R2);

	\node [left=1cm of x] {$\tau^{c}_{Q_2}$:};
\end{tikzpicture}
\begin{tikzpicture}[-,>=stealth',roundnode/.style={circle,draw,inner sep=1.2pt},squarednode/.style={rectangle,inner sep=3pt}]
	\node [squarednode] (U) at (1, 0) {$U_{xy}$};
	\node [squarednode] (R) at (2, 0) {$R_{xyz}$};
	\node [squarednode] (R2) at (3, 0) {$R_{xyv}$};
	\node [squarednode] (y) at (2, 1) {$y$};
	
	\draw (y) to (U);
	\draw (y) to (R);
	\draw (y) to (R2);

	\node [left=1cm of y] {$\tau'^{c}_{Q_2}$:};
\end{tikzpicture}

 		\caption{Example of \cqt{s} for $\tau_{Q_1}$, $\tau_{Q_2}$ and $\tau_{Q_2}'$.
		Since we have no self joins in $Q_1$, we omit the variables when representing the atoms.}
		\label{fig-compact-tree}
	\end{figure}
	
	We can now proceed with the proof. From $Q$ we construct the \acrocea:
	\[
	\ACH_Q = (I(Q) \cup \{\bar{x}\}, \uncq, \bincq, I(Q), \Delta, \{\qtree(\varepsilon)\})
	\]
	where the states of the automaton are the nodes of $\qtree$, the labeling set is the set off atom identifiers of the query $Q$, and the only final state is the variable at root of $\qtree$ (i.e., $\qtree(\varepsilon) = \qtree(\rt{\qtree})$). 
	For defining the transition relation $\Delta$, we need additional definitions regarding predicates and some further constructs over~$\qtree$.
	
	Below, we use $h: \{\bar{x}\} \cup \Data \rightarrow \Data$ to denote \changed{a homomorphism} and $\Hom$ to denote the set of all homomorphisms.
	Let $x \in \{\bar{x}\}$ be a variable and let $S(\bar{y}), T(\bar{z}) \in U(Q)$ be atoms of $Q$. We define the \emph{unary predicate of $S(\bar{y})$} as:
	\[
	U_{S(\bar{y})} \ := \  \{S(\bar{b}) \in \tuples[\Schema] \mid \exists h \in \Hom.\; h(S(\bar{y})) = S(\bar{b})\}
	\]
	and the \emph{binary predicate of $S(\bar{y})$ and $T(\bar{z})$} as:
	\[
	B_{S(\bar{y}), T(\bar{z})} \ := \ \{(S(\bar{b}), T(\bar{c})) \mid \exists h \in \Hom.\; h(S(\bar{y})) = S(\bar{b}) \wedge h(T(\bar{z})) = T(\bar{c})\}.
	\]
	One can easily check that $U_{S(\bar{y})} \in \uncq$ and $B_{S(\bar{y}), T(\bar{z})}$ is an equality predicate for every pair of atoms $S(\bar{y})$ and $T(\bar{z})$.  

	We extend the definition of $B_{S(\bar{y}), T(\bar{z})}$ to a pair variable-atom as follows. Take again the atom $S(\bar{y})$ and let $x$ be a variable \changed{such that  $x \notin \{\bar{y}\}$}. Note that this implies that $\{\bar{y}\} \cap \{\bar{x}_i\} = \{\bar{y}\} \cap \{\bar{x}_j\}$ for every pair of distinct atoms $i,j \in \desc{\qtree}{x}$. In other words, all atoms (i.e., identifiers) in the \qt below $x$ share the same variables with $S(\bar{y})$. 
	Define the \emph{binary predicate of $x$ and $S(\bar{y})$} as:
	\[
	B_{x, S(\bar{y})} = \bigcup_{i \in \desc{\qtree}{x}} B_{R_i(\bar{x}_i), S(\bar{y})}.
	\]
	Given the previous discussion and given that $Q$ does not have self joins (all atoms are different), one can easily check that $B_{x, S(\bar{y})}$ is an equality predicate in $\bincq$.
	
	For our last definition, let $i \in I(Q)$ be an identifier of $Q$ and $x \in \{\bar{x}_i\}$. Define the set of \emph{incomplete states of $x$ given $i$} as:
	\[
	C_{x, i} \ :=  \ \{\ell \in I(Q) \cup \{\bar{x}\} \mid \parent{\qtree}{\ell} \in \left(\desc{\qtree}{x} \cap \{\bar{x}_i\} \right)\} \setminus \big(\{i\} \cup \{\bar{x}_i\}\big)
	\]
	In other words, $C_{x, i}$ is the set of all variables or atom identifiers that hang from a variable of $\bar{x}_i$ that is a descendant of $x$, except for variables in $\bar{x}_i$ or the atom $i$. For example, in $\tau_{Q_1}$ of Figure~\ref{fig-compact-tree} one can check that $C_{y, U} = \{R, S, T\}$ and $C_{x,T} = \{y\}$.
	
	With these definitions, we define the transition relation $\Delta$ as follows:
	\[
	\begin{aligned}
		\Delta
		& \ := \ \Big\{\Big(\emptyset, U_{R_i(\bar{x}_i)}, \emptyset, \{i\}, i \Big) \mid\, i \in I(Q) \Big\}\\
		& \ \cup \ \Big\{\Big(C_{x,i}, U_{R_i(\bar{x}_i)}, \binfunc_{x,i}, \{i\}, x \Big) \mid i \in I(Q) \wedge x \in \{\bar{x}_i\}\Big\}
	\end{aligned}
	\]
	such that $\binfunc_{x,i}$ is the predicate function associated with $C_{x,i}$ defined as follows: $\binfunc_{x,i}(y) = B_{y, R_i(\bar{x}_i)}$ for every variable $y \in C_{x,i}$ and $\binfunc_{x,i}(j) = B_{R_j(\bar{x}_j), R_i(\bar{x}_i)}$ for every identifier $j \in C_{x,i}$.
	Intuitively, $\Delta$ can be interpreted as the traversal and completion of the \cqt, with the first set of transitions representing the independent leaf nodes being completed by their respective atom, while the second set represents the completion of a variable, meaning the last tuple in its descendants was found.
	
	From the previous construction, one can easily check by inspection that $\ACH_Q$ is of \changed{quadratic} size with respect to $Q$. In the sequel, we prove that $\ACH_Q$ is unambiguous and $\ACH_Q \equiv Q$. To prove both of these conditions, we will use the following lemma:
	\begin{lemma}\label{lemma:atom-on-run}
		For every accepting run $\rho$ of $\ACH_Q$ and for every atom $i\in I(Q)$, there is exactly one node $\bar{u} \in \rho$ such that $\rho(\bar{u}) = (p, j, \{i\})$.
	\end{lemma}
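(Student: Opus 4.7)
The plan is to prove a strengthened statement by structural induction on the run subtree: for every accepting run $\rho$ of $\ACH_Q$ and every node $\bar{u} \in \rho$ with $\rho(\bar{u}) = (q, \cdot, \cdot)$, the multiset of third-component labels appearing in the subtree of $\rho$ rooted at $\bar{u}$ is exactly $\bleft\, \{j\} \mid j \in \leaves{\qtree}{q}\, \bright$; that is, each atom in $\leaves{\qtree}{q}$ appears as the label of exactly one node in this subtree. Applied to the root of $\rho$, whose state is the final state $\qtree(\varepsilon)$ with $\leaves{\qtree}{\qtree(\varepsilon)} = I(Q)$, this yields the lemma.

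First, I will observe that every transition of $\ACH_Q$ has a singleton annotation $\{i\}$ where $i$ is uniquely determined by the unary predicate $U_{R_i(\bar{x}_i)}$ used in that transition. Moreover, transitions split into two disjoint types: the initial ones $(\emptyset, U_{R_i(\bar{x}_i)}, \emptyset, \{i\}, i)$, which have $P = \emptyset$ and necessarily target the atom-state $i$, and the merging ones $(C_{x,i}, U_{R_i(\bar{x}_i)}, \binfunc_{x,i}, \{i\}, x)$, which target a variable-state $x \in \{\bar{x}_i\}$. Hence any leaf of $\rho$ must fire an initial transition (since its children-set is empty) and so has state $i$ and label $\{i\}$ for some atom $i$; this gives the base case, matching $\leaves{\qtree}{i} = \{i\}$.

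For the inductive step, let $\bar{u}$ be an inner node of $\rho$ with state $x$ obtained via a merging transition $(C_{x,i}, U_{R_i(\bar{x}_i)}, \binfunc_{x,i}, \{i\}, x)$. By the definition of a run, the children of $\bar{u}$ realize exactly the states in $C_{x,i}$, one state per child. The central combinatorial identity to establish is
\[
\leaves{\qtree}{x} \ = \ \{i\} \ \uplus \ \biguplus_{\ell \in C_{x,i}} \leaves{\qtree}{\ell}.
\]
This follows from the q-tree structure: the path from $x$ down to the leaf $i$ in $\qtree$ passes through exactly the variables in $\{\bar{x}_i\} \cap \desc{\qtree}{x}$, and by definition $C_{x,i}$ collects all children of these path-variables except $i$ itself and the variables of $\bar{x}_i$. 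Therefore every node of $\desc{\qtree}{x}$ either lies on this path (and is a leaf iff it equals $i$) or belongs to a unique subtree $\desc{\qtree}{\ell}$ with $\ell \in C_{x,i}$. Combining the identity with the induction hypothesis on each child and adding the label $\{i\}$ at $\bar{u}$ closes the induction.

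The main obstacle is verifying the partition identity for $\leaves{\qtree}{x}$ in terms of $C_{x,i}$, which relies on carefully unfolding the definitions of $C_{x,i}$ and of a q-tree; once it is in hand, the rest of the argument is routine structural induction and bookkeeping.
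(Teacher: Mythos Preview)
Your proposal is correct and takes a genuinely different route from the paper. The paper argues existence by a top-down descent: starting at the root of $\rho$ (whose state is $\qtree(\varepsilon)$), it inspects the firing transition $(C_{r,i_S},\ldots,r)$ and case-splits on whether $i_R$ itself or some proper ancestor of $i_R$ lies in $C_{r,i_S}$, recursing in the latter case until the node labelled $\{i_R\}$ is found; uniqueness is then handled in a second pass by re-examining the definition of $C_{p_R,i_R}$ and observing that each transition carries a single atom label. Your argument instead strengthens the statement and runs a single bottom-up structural induction on $\rho$, pinning the entire multiset of labels in each run-subtree to $\leaves{\qtree}{q}$ via the partition identity $\leaves{\qtree}{x} = \{i\} \uplus \biguplus_{\ell \in C_{x,i}} \leaves{\qtree}{\ell}$. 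This yields existence and uniqueness simultaneously and is cleaner; the only place requiring care is your observation that leaves of $\rho$ must come from initial transitions, which relies on $C_{x,i}\neq\emptyset$---this holds because the paper works with the compact q-tree where every inner node has at least two children. The paper's descent avoids stating the partition identity explicitly but is less structured; your invariant makes the correspondence between run-tree structure and q-tree structure transparent.
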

	\begin{proof}
		Let $i_R\in I(Q)$ be an atom with $Q(i_R) = R(\bar{x})$. Since $\rho$ is an accepting run of $\ACH_Q$, its root must be labeled by the root of the \cqt, $\rho(\rt{\rho}) = (r, j, \{i_R\})$ with $r = \rt{\qtree}$ and $i_S \in I(Q)$ such that $Q(i_S) = S(\bar{y})$. Assuming a non trivial case, i.e. $i_R \neq i_S$, there must be a transition $(C_{r, i_S}, U_{S(\bar{y})}, \binfunc_{r, i_S}, \{i_S\}, r) \in \Delta$, which leaves us with two recursive~cases.
		
		\changed{First, the case where none of the ancestors of $i_r$ are present in $C_{r, i_S}$, which will be the base case. Looking at the definition of $C_{r, i_S}$ and considering that $i_S \neq i_R$, this means that $i_R$ must be present in $C_{r, i_S}$ and therefore, there must be a node labeled by $(i_R, k, \{i_R\})$.
		
		On the other hand, if one of the ancestors of $i_R$ is present in $C_{r, i_S}$ the case will be recursive. Let this ancestor be $s \in \ancst{\qtree}{i_R} \cap C_{r, i_S}$ and since it is present in $C_{r, i_S}$, there must be a transition $(C_{s, i_T}, U_{i_T}, P_{i_T, C}, \{i_T\}, s) \in \Delta$, leaving us with the same three cases as before; the trivial case where $i_T = i_R$, the base case where there are no ancestors of $i_R$ in $C_{s, i_T}$ and the current recursive case. If we repeat the recursive case, we will eventually visit every ancestor of $i_R$, leaving us with either the trivial or the base case, meaning there exists a node $\bar{u} \in \rho$ such that $\rho(\bar{u}) = (p, j, \{i_R\})$.}

		We have proven that for every atom $i_R \in I(Q)$, there is a node $\bar{u}_R \in \rho$ such that $\rho(\bar{u}_R) = (p_R, j_R, \{i_R\})$ and $Q(i_R) = R(\bar{x}_R)$. Following the definition $\Delta$, one can easily see that if $\bar{u}_R$ is an inner node, $i_R \in \desc{\qtree}{p_R}$ and if $\bar{u}_R$ is a leaf node, $p_R = i_R$.
		
		On the other hand for each inner node, there must exist a transition of the form:
		\[
		(C_{p_R, i_R}, U_{R(\bar{x}_R)}, \binfunc_{p_R, i_R}, \{i_R\}, p_R) \in \Delta
		\] 
		and from the definition of $C_{p_R, i_R}$ it follows that for each $S(\bar{x}_S) \in \desc{\qtree}{p_R}$, with $S(\bar{x}_S)\neq R(\bar{x}_R)$ and $Q(i_S) = S(\bar{x}_S)$, there is exactly one node $\bar{u}_S \in \rho$ such that $\rho(\bar{u}_S) = (i_S, j_S, \{i_S\})$ or $\rho(\bar{u}_S) = (p_S, j_S, \{i\})$ where if $i \neq i_S$, $i_S \in \desc{\qtree}{p_S}$. Since each transition of $\Delta$ has exactly one atom as its label, then there will be exactly one node in the run marked by each atom.
	\end{proof}

	The automaton $\ACH_Q$ will be unambiguous iff every accepting run of $\ACH_Q$ is simple and for every stream $\Stream$ and a valuation $\nu$, there is at most one run $\rho$ of $\ACH_Q$ over $\Stream$ such that $\nu_\rho = \nu$. Every transition of $\ACH_Q$ uses a single label $i \in I(Q)$, which combined with the correspondence between atoms and nodes in each run of directly indicates that every accepting run of $\ACH_Q$ is~simple.
	
	On the other hand, it is easy to see that given a tuple $R(\bar{a}) \in \Stream$, there will only be a single initial transition $(\emptyset, U_{R_i(\bar{x}_i)}, \emptyset, \{i\}, i ) \in \Delta$ such that $R(\bar{a}) \in U_{R_i(\bar{x}_i)}$ and given a set of possible states for a run of $\ACH_Q$, $C$ there will be at most one transition, due to the definition of $C_{x,i}$, that can be taken by $\ACH_Q$, meaning there will be at most one accepting run for each corresponding output and therefore $\ACH_Q$ will be unambiguous.
	
	Now proving that $\ACH_Q \equiv Q$, this will hold iff $\sem{\ACH_Q}_n(\Stream) = \sem{Q}_n(\Stream)$ for every position $ n\in\nat $ and every stream $\Stream$, which means we need to prove the following two conditions:
	\begin{itemize}
	\item $\sem{\ACH_Q}_n(\Stream) \subseteq \sem{Q}_n(\Stream)$.
		The output of the automaton $\ACH_Q$ over the stream $\Stream=t_0t_1\ldots$ at position $n$ is defined as the set of valuations:
		\[
		\sem{\ACH_Q}_n(\Stream) \ = \  \{\nu_\rho \mid \text{$\rho$ is an accepting run of $\ACH_Q$ over $S$ at position $n$}\}
		\]
		meaning that for every valuation $\nu_\rho \in \sem{\ACH_Q}_n(\Stream)$ there is an associated accepting run tree $\rho: t \rightarrow (2^\Omega \setminus \{\emptyset\})$.

		Because of Lemma~\ref{lemma:atom-on-run}, we know that for every atom $i_R\in I(Q)$, with $Q(i_R) = R(\bar{x}_R)$, there is a node $\bar{u}_R \in \rho$ such that $\rho(\bar{u}_R) = (p_R, j_R, \{i_R\})$. Using this, we can define $\eta: I(Q) \to I(D_n[\Stream])$ and $h_\eta: X \cup D_n[\Stream] \to D_n[\Stream] $ such that for every $i_R \in I(Q)$, $\eta(i_R) = j_R$ and:
		\[ h_\eta(x) = \begin{cases}
			x & x \in D_n[\Stream] \\
			D_n[\Stream](j_R) & x \in \Var \\
		\end{cases}
		\]
		
		Since $h_\eta(a) = a$ for every $a \in D_n[\Stream]$, it is \changed{a homomorphism}, so if $h_\eta$ is well-defined it will be \changed{a homomorphism} from $Q$ to $D_n[\Stream]$ and $\eta$ will be a t-homomorphism from $Q$ to $D_n[\Stream]$ with $h_\eta$ as its corresponding homomorphism.
		
		Consider the atoms $i_R, i_T \in I(Q)$ and the nodes of the run tree $\bar{u}_R, \bar{u}_T \in \rho$ such that $\rho(\bar{u}_R) = (p_R, j_R, \{i_R\})$, $\rho(\bar{u}_T) = (p_T, j_T, \{i_T\})$, $Q(i_R) = R(\bar{x}_R)$ and $Q(i_T) = T(\bar{x}_T)$. We need to prove that for every $x\in\Var$, $h_\eta(x)$ has a single value; in other words, if $z \in \{\bar{x}_R \cap \bar{x}_T\}$, then $D_n[\Stream](j_R).z = D_n[\Stream](j_T).z$. Given the structure of the \qt, the path from the root to every atom corresponds to the variables of the atom, so $z \in \ancst{\qtree}{R(\bar{x}_R)} \cap \ancst{\qtree}{T(\bar{x}_T)}$. 
		
		For the purpose of simplification we will start by assuming that one of the states of the run is marked by $z$, and then explain the general case. Let $i_S \in I(Q)$ be an atom and $\bar{u}_S \in \rho$ be a node of the run tree such that $\rho(\bar{u}_S) = (z, j_S, \{i_S\})$ and $Q(i_S) = S(\bar{x}_S)$, which means there must be a transition of the form, $(C_{z,i_S}, U_{S_i(\bar{x}_S)}, \binfunc_{z,i_S}, \{i_S\}, z)$ in $\Delta$. Given $z \in \ancst{\qtree}{R(\bar{x}_R)}$, we have two possible cases.
		
		Case (1) corresponds to $p_R \in C_{z,i_S}$. This implies that the tuples associated with $i_R$ and $i_S$ must satisfy the binary predicate given by the transition, i.e. $\big(D_n[\Stream](j_R), D_n[\Stream](j_S)\big) \in \binfunc_{z, i_S}(i_R))$, and therefore, by the definition of the binary predicate of $R(\bar{x}_R)$ and $S(\bar{x}_S)$, $D_n[\Stream](j_R).z = D_n[\Stream](j_S).z$.
		
		On the other hand, for case (2), $p_R \notin C_{z,i_S}$. Since there is a node labeled by $z$ in the run tree, the variable must have been completed during the run, meaning there must be a sequence of nodes $\bar{u}_1,\ldots,\bar{u}_m \in \rho$, with $\bar{u}_1 = \bar{u}_S$, $\bar{u}_m = \bar{u}_R$, such that for each node $\bar{u}_k$, $\rho(\bar{u}_k) = (p_k, j_k, \{i_k\})$ and $\bar{u}_{k} = \parent{\rho}{\bar{u}_{k+1}}$. This means that $p_{k+1} \in C_{p_k, i_k}$ and $p_{k+1} \in \desc{\qtree}{p_k}$ for every node $\bar{u}_k$, which indicates that if $p_k \neq z$, then $p_k \in \desc{\qtree}{z}$ and $z$ is a variable of $Q(i_k)$. Following the same steps of (1), it is easy to see that for every $\bar{u}_k$, $\big(D_n[\Stream](j_{k+1}), D_n[\Stream](j_{k})\big) \in \binfunc_{p_k, i_k}(i_{k+1})$ and $D_n[\Stream](j_{k}).z = D_n[\Stream](j_{k+1}).z$, therefore $D_n[\Stream](j_R).z = D_n[\Stream](j_S).z$. Note that we purposefully omitted the case $i_S = i_R$ in which the condition holds trivially.

		It is clear that the same arguments as in (1) and (2) are valid for the atom $i_T$, so if the run tree has a node labeled by $z$, it holds that $D_n[\Stream](j_T).z = D_n[\Stream](j_S).z = D_n[\Stream](j_R).z$. If there is no such node in the run, we can start from the root of the run and its corresponding transition, where $\rho(\rt{\rho}) = (r, j_S, \{i_S\})$ with $r = \rt{\qtree}$ and $(C_{r, i_S}, U_{S(\bar{x}_S)}, \binfunc_{r, i_S}, \{i_S\}, r) \in \Delta$ with $Q(i_S) = S(\bar{x}_S)$. It is easy to see that if there are no nodes labeled by $z$, then $z$ must be a variable of $S(\bar{x}_S)$, so we can use the exact same arguments that we used in (1) and (2) to prove that $D_n[\Stream](j_R).z = D_n[\Stream](j_S).z$ and $D_n[\Stream](j_T).z = D_n[\Stream](j_S).z$, therefore in any case it holds that $D_n[\Stream](j_T).z = D_n[\Stream](j_R).z$ and $h_\eta$ is well-defined and \changed{a homomorphism} from $Q$ to $D_n[\Stream]$.

		Finally, since we have a t-homomorphism from $Q$ to $D_n[\Stream]$ for every run tree $\rho$ of $\ACH_Q$, and $\sem{Q}_n(\Stream) \ = \ \{\hat{\eta} \mid \text{$\eta$ is a t-homomorphism from $Q$ to $D_n[\Stream]$}\}$, then $\sem{\ACH_Q}_n(\Stream) \subseteq \sem{Q}_n(\Stream)$.

	\item $\sem{Q}_n(\Stream) \subseteq \sem{\ACH_Q}_n(\Stream)$.
		Let $\sem{Q}_n(\Stream) \ = \ \{\hat{\eta} \mid \text{$\eta$ is a t-homomorphism from $Q$ to $D_n[\Stream]$}\}$ be the set of outputs of $Q$ over the stream $\Stream$.
		
		Let $\hat{\eta} \in \sem{Q}_n(\Stream)$ be a valuation with its associated t-homomorphism $\eta$ and homomorphism from $h_\eta$. We can represent the tuples of $\hat{\eta}$ as the sequence $t_\eta = \bleft i_1, \ldots i_m \bright$, where for all $k \in t_\eta$, $ D_n[\Stream](i_k) = R_k(\bar{a}_k)$ and $i_{k} < i_{k+1}$.

		Using the valuation and the sequence $t_\eta$, we can build a run tree of $\ACH_Q$, $\rho: t \rightarrow (Q \times \nat \times (2^\Omega \setminus \{\emptyset\}))$. We starting with a single leaf node $\varepsilon$, with $\rho(\varepsilon) = (R_1(\bar{x}_1), t_\eta(1), \{ R_1(\bar{x}_1) \})$. After this, for every tuple $k \in t_\eta$, we must check if there is a transition $(P, U, B, L, q) \in \Delta$ such that for each $p \in P$, there is a node $\bar{u} \in \rho$ with $\rho(\bar{u}) = (p, j, L)$ and $\bar{u} \notin \children{\rho}{\bar{v}}$ for every $\bar{v}\in\rho$, i.e. we check if there are nodes with no parent that are labeled by each state needed for a transition. If we find a transition that satisfies the previous condition, we add a new node $\bar{v}$ labeled by $(q, t_\eta(k), L)$; if there are no transitions that satisfy the condition, we add a new leaf node labeled by $(R_k(\bar{x}_k), t_\eta(k), \{ R_k(\bar{x}_k) \})$.

		Since we get each tuple from the values given by the homomorphism $h_\eta$, we know that they will satisfy the unary and binary predicates of each transition, and since there is exactly one tuple for each atom of $Q$, we know that there will be a one to one correspondence between the tuples and the labels in the tree $\rho$, meaning it is an accepting run and therefore $\sem{Q}_n(\Stream) \subseteq \sem{\ACH_Q}_n(\Stream)$.
	\end{itemize}

	\paragraph{Connected HCQ with self joins} For a non-empty set $A \subseteq I(Q)$ we say that \emph{$A$ is a self join} (of $Q$) iff $R_i = R_j$ for every $i, j \in A$. That is, if all relations of the identifiers are the same. We denote by $\SJ{Q}$ the set of all self joins $A$ of $Q$. Note that $\bigcap_{i \in A} \{\bar{x}_i\} \neq \emptyset$ for every $A \in \SJ{Q}$ since $\qtree(\varepsilon) \in \bigcap_{i \in A} \{\bar{x}_i\}$ (i.e., $Q$ is connected).
	
	To deal with self joins, we must modify the previous construction by keeping track of the last atom or self join that was read. This modification is necessary to use equality predicates. Furthermore, the construction has an inherently exponential blow-up in the number of transitions, given that we need to annotate tuples with self joins. This blow-up seems unavoidable for the model since, if $Q$ has only atoms with the same relational symbol, then the last transitions is forced to annotate with sets of atoms, which are an exponential number of them. 
	
 	Let $\VSJ{Q} = \{(x, A) \mid A \in \SJ{Q} \wedge\ x \in \bigcap_{i \in A} \{\bar{x}_i\}\}$. We define the \acrocea:
	\[
	\ACH_Q = (I(Q) \cup \VSJ{Q}, \uncq, \bincq, I(Q), \Delta, \{\qtree(\varepsilon)\} \times \SJ{Q}).
	\] 
	Note that $\{\qtree(\varepsilon)\} \times \SJ{Q} \subseteq \VSJ{Q}$. The main change is that the variable states of the automaton are pairs formed by a variable and a self join that brought $\ACH_Q$ to that variable.
	We dedicate the rest of this subsection to introduce the notation that we need to define $\Delta$.
	
	To account for a single tuple satisfying a self join of $Q$, we need to define a new unary predicate associated with a self join $A \in \SJ{Q}$. For this, the following lemma will be relevant. 
	\begin{lemma}\label{lemma:single-atom}
		Let $A \in \SJ{Q}$. There exists an atom $t_A$ (not necessarily in $Q$) such that, for every $t \in \tuples[\Schema]$,  the following statements are equivalent: (1) there exists $h \in \Hom$ such that $h(\exatom{i}) = t$ for every $i \in A$; (2) there exists $h' \in \Hom$ such that $h'(t_A) = t$. 
	\end{lemma}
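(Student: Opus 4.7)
Since $A \in \SJ{Q}$, every atom in $A$ shares the same relation name $R$, hence the same arity $k = \arity(R)$. The statement (1) fixes a tuple $t = R(a_0, \ldots, a_{k-1})$ and demands a single homomorphism $h$ satisfying $h(\bar{x}_i[p]) = a_p$ for all $i \in A$ and $p < k$. The plan is to capture these constraints syntactically and realise them as a single atom $t_A$.

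First, I would build the equality graph induced by $A$: form an undirected graph whose vertices are the positions $\{0, \ldots, k-1\}$ together with all variables and constants appearing in $\{\bar{x}_i \mid i \in A\}$, and add an edge between position $p$ and term $\xi$ whenever $\bar{x}_i[p] = \xi$ for some $i \in A$. Two positions are then in the same connected component precisely when the values $a_p$ at those positions are forced to coincide by any candidate $h$. Moreover, if a component $C$ contains a constant $c$, then $a_p = c$ is forced for every position $p \in C$.

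Next, I would define $t_A = R(\bar{y})$ position by position, as follows. Fix a choice of fresh variables $\{z_C\}$, one per connected component $C$ not containing any constant. For each position $p$, let $C_p$ be its component, and set $\bar{y}[p] = c$ when $C_p$ contains the unique constant $c$, and $\bar{y}[p] = z_{C_p}$ otherwise. (Whenever a component contains two distinct constants, condition (1) is trivially unsatisfiable for every $t$; this degenerate case can be handled by choosing $t_A$ to be any atom whose image under every homomorphism is ruled out by a corresponding syntactic inconsistency; since the intended application is to self-joins of a non-vacuous HCQ, I expect the case not to arise and will state this separately.)

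Finally, I would verify (1)$\Leftrightarrow$(2) by a direct homomorphism transfer. For (1)$\Rightarrow$(2), given $h$ witnessing (1), define $h'(z_C) := a_p$ for any $p \in C$ (well-defined because positions in $C$ are forced equal under $h$) and leave constants alone; then $h'(t_A) = t$ by construction. For (2)$\Rightarrow$(1), given $h'$ with $h'(t_A) = t$, define $h$ on every variable $x$ appearing in some $\bar{x}_i$ by $h(x) := a_p$ for any position $p$ adjacent to $x$ in the equality graph; consistency across the choice of $p$ follows because all such $p$ lie in the same component, while the constant positions of $t_A$ automatically match the corresponding constants. Checking $h(\exatom{i}) = t$ then reduces to going through each position of each atom and invoking the case analysis on whether $\bar{x}_i[p]$ is a variable or a constant. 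The main subtlety, and the only step requiring care, is the degenerate ``inconsistent component'' case discussed above; the rest is a routine unwinding of the equivalence relation.
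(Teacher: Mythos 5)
Your construction is essentially the paper's: the paper defines a relation $H$ on the positions of the shared relation name (two positions related when some atoms of $A$ carry the same term there), takes its transitive closure, and uses the resulting equivalence classes as the fresh variables of $t_A$ — exactly your connected components — and then verifies both directions by the same homomorphism transfer you describe. The only divergence is your handling of constants: the paper sidesteps this by assuming (at the start of the proof of Theorem~\ref{theo:hierarchical-if}) that $Q$ contains no data values, so the degenerate two-distinct-constants case you flag (for which, as your own hedging suggests, no choice of atom $t_A$ could actually satisfy the stated equivalence) simply does not arise in its setting.
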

	\begin{proof}
		Let $[1, n] = \{1, \ldots, n\}$ and define the relation $H \subseteq [1, n]^2$ such that:
		\[
			H = \{(k_1, k_2 \in [1, n]^2 \mid \exists i, j.\, \bar{x}_i[k_1] = \bar{x}_j[k_2])\}
		\]
		Note that $H$ is reflexive and symmetric, but not necessarily transitive.
		
		Let $H^T$ be the transitive closure of $H$. Now $H^T$ is an equivalence relation. Given $k\in[1, \ldots, n]$, let $[k]$ be the equivalence class of $k$ with respect to $H^T$. Define $R(\bar{x}) = R([1], \ldots, [n])$. Next we prove that the atom $t_A = R(\bar{x})$ satisfies the theorem.
		\begin{itemize}
			\item (1) $\rightarrow$ (2). Suppose that $h$ is \changed{a homomorphism} such that $h(R(\bar{x})) = R(\bar{a})$. Define the homomorphism $h^\ast(x) = [k]$, with $\bar{x}_i[k] = x$ for some i.
			
			One can prove that (1) $h^\ast$ is well-defined, i.e., it does not depend on $k$ and (2) $h^\ast(R(\bar{x}_i)) = R(\bar{x})$ for all $i \leq n$.
			
			Define $h' = h^\ast \circ h$, then $h'(R(\bar{x}_i)) = h(h^\ast(R(\bar{x}_i))) = h(R(\bar{x})) = R(\bar{a})$.
			
			\item (2) $\rightarrow$ (1). Suppose that $h'\in\hom$ satisfies  $h'(R(\bar{x}_i)) = R(\bar{a})$ for every $i$.
			
			Define the homomorphism $h$ such that $h([k]) = h'(\bar{x}_1[k])$. It is easy to see that $h$ is well-defined, namely if $[k_1] = [k_2]$, then $h'(\bar{x}_1[k_1]) = h'(\bar{x}_1[k_2])$.

			Since $h$ is well-defined, then $h$ is \changed{a homomorphism} and:
			\[
			\begin{aligned}
				h(R(\bar{x}))
				&= R(h([1]), \ldots, h([n])) \\
				&= R(h'(\bar{x}_1[1]), \ldots, h'(\bar{x}_1[n])) \\
				&= h'(R(\bar{x}_1)) \\
				&= R(\bar{a}) \\
			\end{aligned}
			\]
		\end{itemize}
	\end{proof}
	
	Note that the homomorphism $h$ implicitly forces that $t_A$ has the same relational symbol as every $R_i$ with $i \in A$, otherwise there would be no tuple that satisfies the single homomorphism condition. It also motivates the following unary predicate for a self join $A \in \SJ{Q}$: 
	\[
	U_{A} = \{t \in \tuples[\Schema] \mid \exists h \in \Hom. \, h(t_A) = t\}.
	\]
	One can check that $U_{A} \in \uncq$ given that one can compute $t_A$ from $A$ in advance, and then check that $t$ is homomorphic to $t_A$ in linear time over $t$ for every $t \in \tuples[\Schema]$.
	
	Similar to unary predicates, we can derive a lemma for a pair of self joins.
	
	\begin{lemma}\label{lemma:double-atoms}
		Let $A_1, A_2 \in \SJ{Q}$. There exist atoms $\cev{t}_{A_1, A_2}$ and $\vec{t}_{A_1, A_2}$ (not necessarily in~$Q$) such that, for every pair $(t_1, t_2) \in \tuples[\Schema]^2$, the following statements are equivalent: (1)~there exists $h \in \Hom$ such that $h(\exatom{i}) = t_j$ for every $j \in \{1,2\}$ and every $i \in A_j$; (2)~there exists $h' \in \Hom$ such that $h'(\cev{t}_{A_1, A_2}) = t_1$ and $h'(\vec{t}_{A_1, A_2}) = t_2$. 
	\end{lemma}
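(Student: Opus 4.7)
The plan is to mimic the proof of Lemma~\ref{lemma:single-atom}, but working simultaneously with both self-joins by treating positions indexed by the pair $(s,k)$ where $s\in\{1,2\}$ indicates which tuple we are in and $k$ indexes a position of the corresponding relation. Concretely, let $n_1$ and $n_2$ be the arities of the (unique) relation symbol shared by the atoms in $A_1$ and in $A_2$ respectively, and define the symmetric reflexive relation $H$ on $\{1\}\times[1,n_1]\cup\{2\}\times[1,n_2]$ by $((s,k),(s',k'))\in H$ iff there exist $i\in A_s$ and $j\in A_{s'}$ with $\bar{x}_i[k]=\bar{x}_j[k']$. Take $H^T$ to be its transitive closure, which is an equivalence relation, and use the equivalence classes $[(s,k)]$ as fresh variables.

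With these variables in hand, define $\cev{t}_{A_1,A_2} = R_1([(1,1)],\ldots,[(1,n_1)])$ and $\vec{t}_{A_1,A_2} = R_2([(2,1)],\ldots,[(2,n_2)])$, where $R_s$ is the common relation symbol of the atoms in $A_s$. The proof of the equivalence then mirrors the single-atom case. For (1)$\Rightarrow$(2), given $h$, set $h^\ast([(s,k)]) = h(\bar{x}_i[k])$ for any $i\in A_s$; this is well-defined because each $H$-link forces equality under $h$ by the assumption $h(\exatom{i})=t_s$, and well-definedness transfers to the transitive closure. A direct unfolding then gives $h^\ast(\cev{t}_{A_1,A_2})=t_1$ and $h^\ast(\vec{t}_{A_1,A_2})=t_2$. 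For (2)$\Rightarrow$(1), given $h'$, define $h(x)=h'([(s,k)])$ whenever $x=\bar{x}_i[k]$ with $i\in A_s$; well-definedness follows because any two occurrences of the same variable across the atoms of $A_1\cup A_2$ are immediately $H$-related, hence land in the same class. One then checks $h(\exatom{i})=t_s$ for every $i\in A_s$ by position-wise comparison.

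The main obstacle I anticipate is handling the \emph{cross-join} case cleanly: a variable $x$ may occur both in some $\exatom{i}$ with $i\in A_1$ and in some $\exatom{j}$ with $j\in A_2$, and we must ensure that the equivalence relation merges the corresponding positions so that $\cev{t}_{A_1,A_2}$ and $\vec{t}_{A_1,A_2}$ share the same fresh variable in those slots. The relation $H$ is explicitly designed to produce such a link in a single step, which is why a single transitive closure on the disjoint union of positions suffices rather than a more elaborate two-stage construction. A secondary subtlety is that $A_1$ and $A_2$ need not share any variable at all, in which case $H^T$ splits into two disjoint closures and the lemma degenerates into two independent applications of Lemma~\ref{lemma:single-atom}, which is a useful sanity check that the construction is conservative.
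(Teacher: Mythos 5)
Your construction is exactly the paper's: form the disjoint union of the position sets of the two self-joins (the paper writes them as $[1,n]$ and $[1',n']$ rather than $\{1,2\}\times[1,n_s]$), relate two positions when some pair of atoms from $A_1\cup A_2$ shares a variable there (including the cross cases), take the transitive closure, use the equivalence classes as fresh variables in $\cev{t}_{A_1,A_2}$ and $\vec{t}_{A_1,A_2}$, and verify both directions exactly as in Lemma~\ref{lemma:single-atom}. The proposal is correct and essentially identical in approach, with your well-definedness remarks filling in what the paper dismisses as "completely analogous."
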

	\begin{proof}
		Let $A_1 = R_1(\bar{x}_1, \ldots, R_1(\bar{x}_m))$, $A_2 = R_2(\bar{y}_1, \ldots, R_2(\bar{y}_{m'}))$, $[1, n] = \{1, \ldots, n\}$ and $[1', n'] = \{1', \ldots, n'\}$. Note that $[1, n] \cap [1', n'] = \emptyset$. Define the relation $H \subseteq ([1, n] \cup [1', n'])^2$ such that:
		\[
		\begin{aligned}		
			H
			&= \{(k_1, k_2 \mid \exists i, j.\, \bar{x}_i[k_1] = \bar{x}_j[k_2])\}\\
			&= \{(k_1', k_2' \mid \exists i, j.\, \bar{y}_i[k_1] = \bar{y}_j[k_2])\}\\
			&= \{(k_1, k_2' \mid \exists i, j.\, \bar{x}_i[k_1] = \bar{y}_j[k_2])\}\\
			&= \{(k_1', k_2 \mid \exists i, j.\, \bar{y}_i[k_1] = \bar{x}_j[k_2])\}\\
		\end{aligned}
		\]
		Just like for Lemma~\ref{lemma:single-atom}, take $H^T$ the transitive closure of $H$, with $H^T$ an equivalence relation. Given $k\in[1, \ldots, n] \cup [1', n']$, let $[k]$ be the equivalence class of $k$ in $H^T$. Define $R_1(\bar{x}) = R_1([1], \ldots, [n])$ and $R_2(\bar{y}) = R_2([1'], \ldots, [n'])$. The proof that $\cev{t}_{A_1, A_2}$ and $\vec{t}_{A_1, A_2}$ is completely analogous to the one in Lemma~\ref{lemma:single-atom}.

	\end{proof}
	Analog to the unary predicates, this lemma motivates the following binary predicate for every pair $A_1, A_2 \in \SJ{Q}$: 
	\[
	B_{A_1,A_2} = \{(t_1,t_2) \in \tuples[\Schema]^2 \mid \exists h \in \Hom. \, h(\cev{t}_{A_1, A_2}) = t_1 \wedge h(\vec{t}_{A_1, A_2}) = t_2\}.
	\]
	Note that $B_{A_1,A_2} \in \bincq$. Indeed, $B_{A_1,A_2} = B_{S(\bar{y}), T(\bar{z})}$ with $S(\bar{y}) = \cev{t}_{A_1, A_2}$ and $T(\bar{z}) = \vec{t}_{A_1, A_2}$.

	Now, we move to the definitions for constructing the transition relation $\Delta$. Let $A \in \SJ{Q}$. For a variable $x \in \bigcap_{i \in A} \{\bar{x}_i\}$ we define the set of \emph{incomplete states of $x$ given $A$} as:
	\[
	C_{x, A} \ :=  \ \big\{\ell \in I(Q) \cup \{\bar{x}\} \mid \parent{\qtree}{\ell} \in \big(\desc{\qtree}{x} \cap \bigcup_{i \in A} \{\bar{x}_i\} \big)\big\} \setminus \big(A \cup \bigcup_{i \in A} \{\bar{x}_i\}\big).
	\]
	$C_{x, A}$ is the generalization of the set $C_{x, i}$. Indeed, one can check that $C_{x, i} = C_{x,\{i\}}$. Then, the intuition behind $C_{x, A}$ is similar: in $C_{x,A}$ we are collecting all variables or atoms identifiers in $\qtree$ that directly hangs from a variable that it is a descendant of $x$ and it is in an atom of $A$, except for the same variables and atoms in $A$.
	
	Let $C_{x, A}$ be the incomplete states of $x$ given $A$. We say that a subset of states $C \subseteq I(Q) \cup \VSJ{Q}$ \emph{encodes} $C_{x, A}$ in $\ACH_Q$ iff $C \cap I(Q) = C_{x, A} \cap I(Q)$ and for every variable $y \in C_{x,A}$ there exists a unique pair $(y, A') \in C$. Intuitively, $C$ will be the analog of $C_{x,i}$ that we use as a set of states in the case without self joins. Note that $C_{x,A}$ can be empty, and then $\emptyset$ is the only set representing $C_{x,A}$. We denote by $\bar{C}_{x, A}$ the set of all encodings of $C_{x,A}$ in $\ACH_Q$. 

	With this machinery, we construct the new transition relation $\Delta$ as follows:
	\[
	\begin{aligned}
		\Delta
		& = \Big\{\Big(\emptyset, U_{R_i(\bar{x}_i)}, \emptyset, \{i\}, i \Big) \mid\, i \in I(Q) \Big\}\\
		& \cup \Big\{\Big(C, U_{A}, \binfunc_{C, A}, A, (x, A) \Big) \mid A \in \SJ{Q} \wedge \, x \in \bigcap_{i \in A} \{\bar{x}_i\} \wedge  C \in \bar{C}_{x, A} \Big\}
	\end{aligned}
	\]
	such that $\binfunc_{C, A}: C \rightarrow \bincq$ is the predicate function associated with $C$ and $A$ defined as follows: $\binfunc_{C, A}(j) = B_{\{j\}, A}$ for every identifier $j \in C$, and $F_{x,C}((y,A')) = B_{A', A}$ for every $(y,A') \in C$. Note that all the binary predicates of $F_{C,A}$ are equality predicates as defined above. 
	
	This ends the definition of $\Delta$ and $\ACH_Q$ for a connected HCQ $Q$ with self joins. Next, we prove that $\ACH_Q$ is unambiguous and $\ACH_Q \equiv Q$.

	Thanks to Lemma~\ref{lemma:single-atom} and Lemma~\ref{lemma:double-atoms}, this proof is analogous the the one without self joins.

	\paragraph{The general case} We end this proof by showing the case when $Q$ is disconnected. If this is the case, let $x^*$ be a fresh variable and redefine $Q$ as:
	\[
	Q^*(x^*, \bar{x}) \ \leftarrow \ R_0(x^*,\bar{x}_0), \ldots, R_{m-1}(x^*, \bar{x}_{m-1})
	\]
	where $Q^*$ is defined over a new schema $\sigma^*$ where the arity of each relational symbol is increased by one. $Q^*$ is hierarchical and now is connected. By the previous case, we can construct a \acrocea $\ACH_{Q^*}$ such that $\ACH_{Q^*} \equiv Q^*$. Now, take $\ACH_{Q^*}$ and define $\ACH_{Q}$ by removing the fresh variable $x^*$ from $\ACH_{Q^*}$, namely, remove it from the unary and binary predicates in the transitions. One can easily check that $\ACH_Q \equiv Q$.  
	\end{proof}

\subsection*{Proof of Theorem~\ref{theo:hierarchical-onlyif}}

	Fix a schema $\sigma$, a set of data values $\Data=\nat$ and an acyclic CQ $Q$ over $\sigma$ of the form:
	\[
		Q(\bar{x}) \ \leftarrow \ R_0(\bar{x}_0), \ldots, R_{m-1}(\bar{x}_{m-1})
	\]
	that \textbf{is not} a HCQ, meaning there is a pair of variables $y, z \in \Var$ with $\atoms{y} \not\subseteq \atoms{z}$, $\atoms{z} \not\subseteq \atoms{y}$ and $\atoms{y} \cap \atoms{z} \neq \emptyset$. Without loss of generality, assume that $y \in \{\bar{x_0}\} \cap \{\bar{x_1}\}$ , $y \notin \{\bar{x_2}\}$ and $z \in \{\bar{x_0}\}\cap\{\bar{x_2}\}$ and $z \notin \{\bar{x_1}\}$; moreover, $R_0(x_0^0, \ldots x_0^{n_0})$, $R_1(x_1^0, \ldots x_1^{n_1})$, $R_2(x_2^0, \ldots x_2^{n_2})$ with $x_0^0 = x_1^0 = y$ and $x_0^1 = x_2^0 = z$.

	Let $\Stream_k = \bleft\, R_0(\bar{a}_0), \ldots, R_{m-1}(\bar{a}_{m-1}), \ldots\,\bright$ be a family of streams where for every tuple $R_i(a_i^0, \ldots a_i^{n_i})$, $a_i^j = k$, with $k \in \nat$ if $x_i^j = y$ or $x_i^j = z$ and $a_i^j = 0$ if $x_i^j \neq y$ and $x_i^j \neq z$. In other words, every variable of every tuple is mapped to $0$, except for $y$ and $z$, \changed{which} are mapped to $k$. \changed{Recalling our previous definition for CQ over streams, i.e. considering $\Omega = I(Q)$, } it is clear that the valuation $\nu = \bleft 0, 1, 2, \ldots, m-1 \bright \in \sem{Q}_n(\Stream_k)$ for every $k \in \nat$.

	Assume there is a \anamecea $\ACH_Q = \big(\ACH_Q, \un, \bin, \Omega, \Delta, F \big)$ such that $\ACH_Q \equiv Q$, meaning that for every stream and, in particular, for every stream $\Stream_k$ it holds that $\sem{\ACH_Q} = \sem{Q}$, which means $\nu \in \sem{\ACH_Q}_n(\Stream_k)$ for every $k \in \nat$. Since $\ACH_Q$ has a finite \changed{number} of states, there must exist $j, k \in \nat$ such that their accepting runs for $\nu$, $\rho_j$ and $\rho_k$ respectively, are isomorphic, meaning their runs go through the exact same states. For the sake of simplification, we assume that $\dom(\rho_j) = \dom(\rho_k)$.

	Let $\Stream_{j \leftarrow k} = \bleft R_0(\bar{a}_0), \ldots, R_{m-1}(\bar{a}_{m-1}), \ldots\bright$ where for every tuple $R_i(a_i^0, \ldots a_i^{n_i})$, $a_i^\ell = j$ if $x_i^\ell = y$, $a_i^\ell = k$ if $x_i^\ell = z$ and $a_i^\ell = 0$ if $x_i^\ell \neq y$ and $x_i^\ell \neq z$. It is clear that this time, $\nu = \bleft 0, 1, 2, \ldots, m-1 \bright \notin \sem{Q}_n(\Stream_{j \leftarrow k})$.

	Since $\rho_j$ and $\rho_k$ are runs associated with $\nu$, there are nodes $\bar{u}_0, \bar{u}_1, \bar{u}_2 \in I(\rho_j)$ such that $\rho_j(\bar{u}_i) = (p_i, i, L_i)$ and $\rho_k(\bar{u}_i) = (p_i, i, L_i)$, with $R_i(\bar{x}_i) \in L_i$ for each $i\in\{0, 1, 2\}$. Since this nodes exist in $\rho_j$ and $\rho_k$, for every $i\in\{0, 1, 2\}$ there must transitions $(P_i, U_i, B_i, L_i, p_i)$ and $(P_i, U_i', B_i', L_i, p_i)$ such that $\Stream_j(i)$ satisfies the unary and binary predicates $U_i, B_i$ and $\Stream_k(i)$ satisfies the unary and binary predicates $U_i', B_i'$.
	
	Using these transitions, we can define $\rho_{j\leftarrow k}: I(\rho_j) \to (\ACH_Q, \nat, (2^\Omega \setminus \emptyset))$ with $\rho_{j\leftarrow k}(\bar{u}) = \rho_j(\bar{u})$ for every $\bar{u} \in \dom(\rho_{j\leftarrow k})$. One can easily check that $\rho_{j\leftarrow k}$ is an accepting run tree of $\ACH_Q$ over $\Stream_{j\leftarrow k}$, since it can follow $(P_i, U_i', B_i', L_i, p_i)$ for $R_0(\bar{a}_0), R_1(\bar{a}_1), R_2(\bar{a}_2)$ and follow the exact same transitions as $\rho_j$ for every other tuple.
	
	Since for the valuation $\nu = \bleft 0, 1, 2, \ldots, m-1 \bright$ it holds that $\nu \notin \sem{Q}_n(\Stream_{j \leftarrow k})$ and $\nu \in \sem{\ACH_Q}_n(\Stream_{j \leftarrow k})$, then $\ACH_Q \not\equiv Q$ and therefore if $Q$ is and acyclic CQ that is not hierarchical, then $\ACH_Q \not\equiv Q$ for all PFA $\ACH_Q$. 	
	\section{Proofs of Section~\ref{sec:algorithm}} \label{sec:app-algorithm}

\subsection*{Proof of Theorem~\ref{theo:enumeration}}
\begin{proof}
    
    Let $w \in \nat$ be a window size, $\DSw$ be a simple data structure and $\n\in\dsnodesw$ be a node of the data structure. The valuations in $\dssem{\n}^w_\ipos$ are defined as:
    \[
        \dssem{\n}^w_\ipos \ := \ \bleft \nu \in \dssem{\n} \mid |\ipos - \min(\nu)| \leq \window  \bright.
    \]
    with
    \[
        \dssemprod{\n} := \bleft\nu_{L(\n), i(\n)}\bright  \valop \bigvalop_{\n' \in \dsprod(\n)} \dssem{\n'}  \ \ \ \ \ \ \ \ \ \ 
        \dssem{\n} := \dssemprod{\n} \cup \dssem{\dsleft(\n)} \cup \dssem{\dsright(\n)}.
    \]

    Following the definitions used in \cite{MunozR22}, we will say that the algorithm enumerates the results $\nu \in \dssem{\n}^w_\ipos$ by writing $\#\nu_1\#\nu_2\#\ldots\#\nu_m\#$ to the output registers, where $\#\notin\Omega$ is a separator symbol. Let $\outtime(i)$ be the time in the enumeration when the algorithm writes the $i$-th symbol $\#$, we define the $\outdelay(i) = \outtime(i+1) - \outtime(i)$ for each $i \leq m$.
    We say that the enumeration has \emph{output-linear delay} if there is a constant $k$ such that for every $i\leq m$ it holds that $\outdelay(i) \leq k \cdot |\nu_i|$.

    To output the first valuation of $\dssem{\n}^w_\ipos$  we need to (1) determine if $\dssem{\n}^w_\ipos = \emptyset$ and (2) build the valuation by calculating the products in $\dssem{\n}$. We can know that $\dssem{\n}^w_\ipos \neq \emptyset$ iff $|\ipos - \dsmaxstart(\n)| \leq \window$, and since the value of $\dsmaxstart(\n) = \max\{\min(\nu) \mid \nu \in \dssemprod{\n} \}$ is stored in every node $\n$ and we are doing a simple calculation with constants, we can check (1) in constant time. Note that it is not necessary to recursively check the $\dsmaxstart$ of the rest of the nodes in $\dssemprod{\n}$ since they are considered in the definition.
    
    On the other hand, the product of two bags of valuations $V, V'$ is defined as the bag $V \valop V' = \bleft\nu \valop \nu' \mid \nu \in V, \nu' \in V' \bright$, where $\nu \valop \nu'$ is the product of two valuations, defined as a valuation such that $[\nu \valop \nu'](\ell) = \nu(\ell) \cup \nu'(\ell)$ for every $\ell \in \Omega$. With these definitions, we can enumerate a single valuation $\nu \in \dssemprod{\n}$ by calculating the union between a valuation $\nu_\n \in U(\bleft\nu_{L(\n), i(\n)}\bright)$ and $\nu_{\n'} \in \dsprod(\n')$ for each $n' \in \dsprod(\n')$. It is easy to see that we can complete (2) by both calculating and writing this valuation in linear time. It is worth noting that we can make sure that we find valuations inside of the time window in constant time by traversing every bag in reverse order (starting from the valuations with a higher to lower~$\min\{\nu\}$).

    After enumerating the first output, we can continue traversing the bags of valuations, checking in constant time if $|i - \min\{\nu_{\n'}\} \leq w|$. In the worst case, which will occur right after writing the last valuation in the output, we will have to check that $|i - \min\{\nu_{\n'}\} \leq w|$ for every node $\n' \in \dsprod(n)$, but since each check takes constant time and there is one node for each valuation we are adding to the output, this step can also be done in linear time with respect to $|\nu|$. Finally, after enumerating every output of $\dsprod(n)$ inside the time window, we can recursively start the enumeration for $\dsleft(\n)$ and $\dsright(\n)$ in constant time, which will maintain an output-linear delay.
\end{proof}

\subsection*{Proof of Proposition~\ref{prop:union}}

\begin{proof}
    
    Fix $k, w\in\nat$ and assume that one performs $\dsunion(\n_1, \n_2)$ over $\DSw$ with the same position $i = \dspos(\n_2)$ at most $k$ times. In the following, we first prove the proposition with an implementation of the $\dsunion$ operation that is not fully persistent and then show how to modify the implementation to maintain this property.
    
    \changed{Let $\n_1, \n_2 \in \dsnodesw$ be two nodes such that $\max(\n_1) \leq \dspos(\n_2)$ and $\dsleft(\n_2) = \dsright(\n_2) = \bot$. We say that $\n_1 \leq \n_2$ iff (1) $\dsmaxstart(\n_1) \leq \dsmaxstart(\n_2)$ and (2) if $\dsmaxstart(\n_1) = \dsmaxstart(\n_2)$ then $i(\n_1) \leq i(\n_2)$.

    Recall that this operation requires inserting $\n_2$ into $\n_1$ and it outputs a fresh node $\n_u$ such that $\dssem{\n_u}^w_\ipos := \dssem{\n_1}^w_\ipos \cup \dssem{\n_2}^w_\ipos$.
    
    If $|\dsmaxstart(\n_1 - i(n_2))| > w$ then all of the outputs from $\n_1$ are now out of the time window, so $\dssem{\n_1}^w_\ipos \cup \dssem{\n_2}^w_\ipos = \dssem{\n_2}^w_\ipos$ and therefore $\dsunion(\n_1, \n_2) = \n_2$. The time it will take to do this operation will be the time necessary to insert $\n_2$ in $\DSw$, which we will analyze later. 
    
    On the other hand, if $|\dsmaxstart(\n_1 - i(\n_2))| \leq w$, we have to consider the outputs of both nodes, $\n_2$ and $\n_1$. First check how $\n_1$ compares with $\n_2$. If $\n_1 \leq \n_2$, then we have to create the new node $\n = \n_2$ such that $\dsleft(\n) = \dsunion(\dsleft(\n_2), \n_1)$ and, similarly, if $\n_1 > \n_2$, we need to create the new node $\n = \n_1$ such that $\dsleft(\n) = \dsunion(\dsleft(\n_1), \n_2)$}. In both cases, we are only creating one node and switching or adding pointers between nodes a constant number of times\changed{. Although} it might seem like a recursive operation at first glance, we know beforehand that $\dsmaxstart(\n) \geq \dsmaxstart(\dsleft(\n))$, so there will be at most one other union process generated. Once again since we can do this part of the operation in constant time \changed{and }the bulk of the operation will be the time necessary for the insertion of the new node.

    We know that for every position $i = \dspos(\n_2)$ we will perform a $\dsunion$ operation at most $k$ times. Starting with an empty data structure, there will be at most $k \cdot w$ nodes in $\DSw$ given a time window $w$. Assuming $\DSw$ is a perfectly balanced binary tree, this means that the tree has a depth of $\log_2(k\cdot w)$.
    
    To ensure that the tree will always be balanced, we can add one bit of information to every node, which we will call \emph{the direction bit}, that indicates which of the children of the node we need to visit for the insertion. If $bit(\n) = 0$, we must go to its left child and we must go to the right one otherwise. After each insertion, we need to change the value of the direction bit of every node in the path from the root to the newly inserted one, to avoid repeating the same path on the next insertion. This operation can be done in constant time for each node, so the time it will take to update all of the direction bits for each insertion will be exactly the depth of the tree.

    Since one performs $\dsunion(\n_1, \n_2)$ over $\DSw$ with the same position $i = \dspos(\n_2)$ at most $k$ times, if we start with an empty data structure, it will have at most $k \times w$ nodes after reading $w$ tuples from the stream. To insert the next node $\n'$, by following the direction bits, we will end up in the oldest node of the tree $\n$, but it is clear that $i(\n) \leq i(\n') + w $, meaning that $\dsmaxstart(\n) - i(\n) \leq w$ which indicates that all of the outputs of $\n$ are outside of the time window and therefore we can safely remove $\n$ from the tree and replace it with $\n'$ without losing outputs. Given that the depth of $\DSw$ is at most $k \cdot w$, and all of our previous operations take time proportional to the depth of the tree, we can conclude that the running time of the $\dsunion$ operation is in $\cO(\log(k\cdot w))$ for each call.

    As we stated before, although the method we just discussed works and has a running time in $\cO(\log(k\cdot w))$ for each call, it is not a fully persistent implementation, since we are removing nodes from the leaves when they are not producing an output and we are also modifying the direction bits of the nodes. To solve this problem, we can use the \emph{path copying method}. With this method, whenever we need to modify a node, we create a copy with the modifications applied instead.
    
    In our case, for every insertion we will create a copy of the entire path from the root to the new node, since we will modify the direction bit of each of these nodes, setting the modified copy of the root as the new root of the data structure. It is easy to see that with clever use of pointers, the copying of a node can be done in constant time, so the usage of this method does not increase the overall running time of the $\dsunion$ operation. 
\end{proof}

\subsection*{Proof of Proposition~\ref{prop:algo-correctness}}
\begin{proof}
    Fix a time window size $\window \in \nat$ a stream $\Stream$, a position~$\ipos \in \bbN$ and an \acrocea with equality predicates $\ACH = (Q, \uncq, \bincq, \Omega, \Delta, F)$. The output of the automaton $\ACH$ over $\Stream$ at position $\ipos$ with time window $\window$ is defined as the set of valuations:
    \[
    \sem{\ACH}_\ipos^w(\Stream) \ = \  \{\nu_\rho \mid \text{$\rho$ is an accepting run of $\ACH$ over $S$ at position $\ipos$} \wedge |\ipos - \min(\nu)| \leq \window\}
    \]

    We need to prove that Algorithm~\ref{alg:parallel-eval} enumerates every valuation $\nu_\rho$ without repetitions. One way to do this is showing that at any position in the stream the indices in $H$ contain the information of every single run of $\ACH$ so far showing that the outputs for each of these runs can be enumerated. 
    
    \begin{itemize}
        
        \item Let $\ipos = 0$ and suppose that $\Stream = \bleft R(\bar{x}), \ldots \bright$. $H$ trivially contains the information of all the runs up to this point, so we need to show that this condition still holds after the first tuple.
        
        Looking at the algorithm, after the $\textsc{Reset}$ call, we start with $i=0$, $\DSw = \emptyset$, $N_p = \emptyset$ for every $p\in Q$. Calling $\textsc{FireTransitions}(R(\bar{x}), 0)$ we check each transitions satisfied by $R(\bar{x})$ and we register them in nodes for $\DSw$. Since$\ACH$ is unambiguous, there is only one transition that can lead to an accepting state, $e_f = (\emptyset, U, \emptyset, L_f, p_f) \in \Delta$ with $p_f \in F$ and for $e_f$ we have $N = \{\}$ and $N_{p_f} = \dsextend(L_f, 0, \{\})$. In addition, $\ACH$ can take (several)transitions that do not lead to a final state; these would be transitions of the form $e = (\emptyset, U, \emptyset, L, p) \in \Delta$ with $N_{p} = \dsextend(L, 0, \{\})$.

        On the other hand, $\textsc{UpdateIndices}(R(\bar{x}))$ uses the nodes created in $\textsc{FireTransitions}$ and assigns them to every possible transition that could be satisfied by them. In particular, for every reached state $p$, we add each node in $\nsetq{p}$ to the data structure $H[e, p, \leftbinfunc{\binfunc}_p(t)]$, registering every incomplete run of $\ACH$.
        
        Finally, we enumerate the outputs of each run that reached a final state. Since $\ACH$ is unambiguous, this enumeration will not have duplicates. It is easy to see that enumerate will output our only valuation since $p_f \in F$.
        
        \item Suppose that $H$ contains the information of every single run of $\ACH$ up until position $\ipos -1$ and that $\Stream[i] = S(\bar{y})$ and that we can enumerate every valuation in $\sem{\ACH}_{\ipos-1}^w(\Stream)$. We want to prove that after calling $\textsc{FireTransitions}(S(\bar{y}), \ipos)$ and $\textsc{UpdateIndices}(S(\bar{y}))$, $H$ will also contain the information of the runs up until $\ipos$.
        
        Once again we start with $\nsetq{p} = \emptyset$ for each $p \in Q$, but this time $H[e, p, \leftbinfunc{\binfunc}_p(t)]$ is not empty. Similar to the previous case, upon calling $\textsc{FireTransitions}(S(\bar{y}), \ipos)$, we create a new node for every new state reached by any of the runs and it is clear by the definition of $\Delta$ and $\leftbinfunc{\binfunc}_p(t)$ that $S(\bar{y}) \in U$ and $\bigwedge_{p \in P} \htable{e}{p}{\rightbinfunc{\binfunc}_p(t)} \neq \emptyset$ for a transition $e = (P, U, \binfunc, L, q) \in \Delta$ iff there is a run tree $\rho$ and a node $\bar{u}$ such that $\rho(\bar{u}) = (q, \ipos, L)$.

        In the same fashion, $\textsc{UpdateIndices}(S(\bar{y}))$ will thoroughly calculate for each transition and each state in those transitions the left projection of the binary predicate for $S(\bar{y})$, maintaining the data structure in $H$ updated with the runs od $\ACH$.

        Finally, the algorithm was already capable of enumerating every valuation that ends in a position $j < i$, and we get from $\textsc{FireTransitions}(S(\bar{y}), \ipos)$ that every new accepting run will have its associated nodes.      
    \end{itemize} \vspace{-5mm}
\end{proof} 	

\end{document}